\begin{document}
\newcommand{\be}{\begin{equation}}
\newcommand{\ee}{\end{equation}}
\newcommand{\br}{{\mbox{\boldmath{$r$}}}}
\newcommand{\bp}{{\mbox{\boldmath{$p$}}}}

\newcommand{\bn}{{\mbox{\boldmath{$n$}}}}
\newcommand{\balfa}{{\mbox{\boldmath{$\alpha$}}}}
\newcommand{\ba}{\mbox{\boldmath{$a $}}}
\newcommand{\bta}{\mbox{\boldmath{$\beta $}}}
\newcommand{\bg}{\mbox{\boldmath{$g $}}}
\newcommand{\bPsi}{\mbox{\boldmath{$\Psi $}}}
\newcommand{\bpsi}{\mbox{\boldmath{$\psi $}}}
\newcommand{\bsigma}{\mbox{\boldmath{$\Sigma $}}}
\newcommand{\bpi}{\mbox{\boldmath{$\pi$}}}
\newcommand{\bGamma}{{\bf \Gamma}}
\newcommand{\blpi}{\mbox{\boldmath{$\overline{\pi}$}}}
\newcommand{\bA}{{\bf A }}
\newcommand{\bII}{{\bf I }}
\newcommand{\bP}{{\bf P }}
\newcommand{\bX}{{\bf X }}
\newcommand{\bI}{{\bf I }}
\newcommand{\bR}{{\bf R }}
\newcommand{\bff}{{\mathbf{f}}}
\newcommand{\bZ}{{\bf Z }}
\newcommand{\bz}{{\bf z }}
\newcommand{\bx}{{\mathbf{x}}}
\newcommand{\bPi}{{\bf \Pi }}
\newcommand{\bM}{{\bf M}}
\newcommand{\bW}{{\bf W}}
\newcommand{\bU}{{\bf U}}
\newcommand{\bD}{{\bf D}}
\newcommand{\bJ}{{\bf J}}
\newcommand{\bH}{{\bf H}}
\newcommand{\bK}{{\bf K}}
\newcommand{\bm}{{\bf m}}
\newcommand{\bN}{{\bf N}}
\newcommand{\bC}{{\bf C}}
\newcommand{\bL}{{\bf L}}
\newcommand{\bF}{{\bf F}}
\newcommand{\bv}{{\bf v}}
\newcommand{\bSigma}{{\bf \Sigma}}
\newcommand{\bS}{{\bf S}}
\newcommand{\bs}{{\bf s}}
\newcommand{\bO}{{\bf O}}
\newcommand{\bQ}{{\bf Q}}
\newcommand{\btr}{{\mbox{\boldmath{$tr$}}}}
\newcommand{\bNSCM}{{\bf NSCM}}
\newcommand{\barg}{{\bf arg}}
\newcommand{\bmax}{{\bf max}}
\newcommand{\test}{\mbox{$
\begin{array}{c}
\stackrel{ \stackrel{\textstyle H_1}{\textstyle >} } { \stackrel{\textstyle <}{\textstyle H_0} }
\end{array}
$}}

\newtheorem{Def}{Definition}
\newtheorem{Pro}{Proposition}
\newtheorem{Rem}{Remark}
\newtheorem{Lem}{Lemma}
\newtheorem{Theo}{Theorem}
\newtheorem{Exa}{Example}
\newtheorem{Cor}{Corollary}
%\title{Distributed Fusion of Labeled Multi-Object Densities Via Label Spaces Matching }
\title{Efficient dual-scale generalized Radon-Fourier transform detector family for  long time coherent integration}

\author{ Suqi Li, Yihan Wang, Bailu Wang$^*$, Giorgio Battistelli, Luigi Chisci, Guolong Cui

\thanks{
%This work was supported
%by the Chang Jiang Scholars Program, the National Natural Science Foundation of China under Grant 61771110, the Fundamental Research Funds of Central Universities under Grants ZYGX2016J031, the Chinese Postdoctoral Science Foundation under Grant 2014M550465 and Special Grant 2016T90845,  and the Australian Research Council (ARC) through the Linkage Project Grant LP160101081. (\textit{Corresponding author: Wei Yi.})
%S. Li, Y. Wang and B. Wang are with the School of  Micro-electronics and Communication Engineering, Chongqing University, Chongqing, China (Email: qi\_qi\_zhu1210@163.com; wangyihan\_cqu@163.com; w\_b\_l3020@163.com).
%G. Battistelli and Luigi are  with the Dipartimento di Ingegneria dell' Informazione (DINFO), Universit$\grave{\mbox{a}}$ degli Studi di Firenze, %Via Santa Marta 3, 50139,
%Firenze, Italy (Email: giorgio.battistelli@unifi.it; luigi.chisci@unifi.it).
%Guolong Cui is with the School of  Information and Communication Engineering, University of Electronic Science and Technology of China, Chendu, China (Email: cuiguolong@uestc.edu.cn).
S. Li, Y. Wang and B. Wang are with the School of  Micro-electronics and Communication Engineering, Chongqing University, Chongqing, China.
G. Battistelli and Luigi are  with the Dipartimento di Ingegneria dell' Informazione (DINFO), Universit$\grave{\mbox{a}}$ degli Studi di Firenze, Firenze, Italy.
Guolong Cui is with the School of  Information and Communication Engineering, University of Electronic Science and Technology of China, Chengdu, China.
%(\textit{Corresponding author}: Bailu Wang (Email: w\_b\_l3020@163.com))
}}
\maketitle

\begin{abstract}
Long Time Coherent Integration (LTCI)   aims  to accumulate target energy through long  time integration, which is an effective method for the detection of a weak target.
However,  for a moving target, defocusing can occur due to range migration (RM) and Doppler frequency migration (DFM).
To address this issue, RM and DFM corrections are required in order to achieve a well-focused image   for the subsequent detection.
Since RM and DFM are induced by the same motion parameters,  existing approaches such as the generalized Radon-Fourier transform (GRFT) or the keystone transform (KT)-matching filter process (MFP) adopt the same search space for the motion parameters in order to eliminate both effects, thus leading to large redundancy in computation.
To this end, this paper first proposes a dual-scale decomposition of the target motion parameters, consisting of well designed coarse and fine motion parameters.
Then, utilizing this decomposition,  the joint correction of the RM and DFM effects is decoupled  into a cascade procedure, first RM correction on the coarse search space and then DFM correction on the fine search spaces.
As such,  step size of the search space can be tailored to RM and DFM corrections, respectively, thus avoiding large redundant computation effectively. The resulting algorithms are called \textit{dual-scale GRFT (DS-GRFT)} or \textit{dual-scale GRFT (DS-KT-MFP)} which provide comparable performance while achieving significant improvement in computational efficiency compared to  standard GRFT (KT-MFP).
Simulation experiments verify their effectiveness and efficiency. %, as well as the advantages in terms of range Doppler spectrum estimation for target swarms.
\end{abstract}
%Then, exploiting this decomposition, the  GRFT and KT-MFP are both reduced into a generalized inverse Fourier transform (GIFT)  process in range domain and generalized Fourier transform (GFT) processes in Doppler domain conditioned on the coarse motion parameter.
%To this end, this paper first proposes a dual-scale decomposition of the target motion parameters, consisting of well designed coarse and fine motion parameters.
%This decomposition makes it possible to decouple the joint correction of the RM and DFM effects  into a cascade procedure, first RM correction on the coarse search space and then DFM correction on the fine search spaces.
%\begin{IEEEkeywords}
%Multi-agent multi-object tracking,  Generalized Covariance Intersection, multi-object tracking, labeled random finite set, labeled multi-Bernoulli,  sensor network, multi-agent systems.
%\end{IEEEkeywords}
%%%%%%%%%%%%%%%%%%%%%%%%%%%%%%%%%%%%%%%%%%%%%%%%%%%%%%%%%%%%%%%%
\vspace{-8pt}
\section{Introduction}
\textit{Long time coherent integration} (LTCI)   is an effective method to accumulate target energy through long  time integration \cite{LTCI-SBBR}.
However, for a  moving target, defocusing can occur due to \textit{range migration} (RM) and  \textit{Doppler frequency migration} (DFM).
Hence, RM and DFM have to be accurately compensated in order to obtain a well-focused image of a moving target in LTCI.
Essentially, the RM arises from the  coupling between fast-time and  target motions, while the DFM arises from the coupling  between  slow-time and target motions.

In the last decades, many LTCI detection methods have been proposed to compensate DFM or RM. %{\color{red}{The most of the literatures on this issue is focus on how to eliminate RM and  DFM effects on the coherent integration performance.}}
%In general speaking, the nature of the LTCI algorithm has two implications: one is to make the target锟斤拷 envelope alignment, the other is the coherent integration of target's energy with the DFM.
As an optimal method, the \textit{generalized Radon-Fourier transform} (GRFT) \cite{GRFT} can effectively estimate  high-order motion parameters,  compensating DFM  and RM simultaneously. The original GRFT is defined in the frequency-domain. However, it is usually implemented in the time-domain   via searching the target trajectory with  arbitrarily parameterized motion model, which is in fact an efficient approximation  of the direct frequency-domain implementation  %according to (15)
 \cite{GRFT}. {\color{magenta}Motivated by  the superior performance of the GRFT detector, it has been  generalized to multiple channels' signal integration in \cite{MC-GRFT}, and also extended to cope with the case when the target's motion model changes during the integration time \cite{STGRFT} in recent years.}
% The \textit{generalized Fourier transform} (GFT) and  \textit{generalized Radon transform} (GRT) are  two special cases of GRFT where phase-only or envelop-only information is used, respectively. %Thus, the GRFT can be treated as the cascade process of GRT and GFT. Moreover, due to the discrete pulse sampling, limited range resolution and finite integration time [31], the blind speed sidelobe (BSSL) effect would inevitably appear in the GRFT output, which might bring about severe false alarm and the detection performance loss [30].
The GRFT detector belongs to the Radon-based methods, which represents one of the broadest families of LTCI
methods. Other typical Radon-based methods include Radon fractional Fourier transform \cite{FRFT,RFRFT}, Radon linear canonical transform \cite{LCT,RLCT},
Radon Lv's distribution (LVD)  \cite{LVD,RLVD}, etc.

Combining  the keystone transform (KT) \cite{KT-SAR-Perry,KT-2,KT-SAR-Zhu}  and \textit{polynomial  phase signal} (PPS)\cite{PPS-MPT,PPS-QFM,PPS}/\textit{time-frequency} (TF) \cite{STFT,WVD,AF} methods, many enhanced LTCI algorithms, referred to as  KT-based methods  \cite{KT-MFP-Vd,KT-SAR-Perry,KT-LCT,SKT-HAF,DKT-GMTI},  have emerged over the recent years.
Specifically, KT is first used to eliminate RM and then PPS/TF is  utilized for the correction of DFM. Nevertheless, the KT and its high-order versions cannot correct RM caused by all the motion parameters simultaneously, thus  leaving a residual RM, such as the linear RM caused by baseband velocity and  the high-order RM caused by  acceleration or acceleration rate.  Hence, although KT-based methods can be implemented with lower computational load than Radon-based ones, their integration performance is worse due to the impact of the residual RM.

%Another promising solutions are the KT-matching filtering process (KT-MFP) \cite{KT-MFP, KT-MFP-Three-Order} methods which arise from the combination of KT  and MFP methods.
Another promising solution is the KT-matched filtering processing (KT-MFP) \cite{KT-MFP, KT-MFP-Three-Order} method, in which KT is applied to eliminate RM caused by the baseband velocity, then MFP is performed to remove residual RM and DFM simultaneously. Therefore, the KT-MFP detector yields similar performance as the GRFT detector, but
 is computationally  cheaper.  However, it still involves joint high-dimensional searching of ambiguous velocity, acceleration, and acceleration rate, etc.
We observe that the MFP procedure in the KT-MFP detector has the same form of the GRFT detector in frequency domain. Hence, in this paper the GRFT (in frequency-domain) and KT-MFP detectors are collectively referred to as members of the GRFT detector family.
%Our study  shows that the harmonics  phenomenon will not occur for the GRFT detector family.
Notice that the implementation of the GRFT detector in time-domain is  not included in the GRFT detector family since it has a different form.
%Considering the optimal or almost optimal detection performance of the GRFT detector family, this paper is developed based on these methods.

%Indeed, the MFP  after KT can be regarded as a  special GRFT procedure. More specifically, the MFP procedure can be transformed into a similar form of the GRFT,  and the only difference is that there is no based-band velocity searching for the former  due to the usage of KT.

%To make a full focus of the target energy with arbitrary motion model, the MFP  procedure is indeed necessary  between the KT-based RM correction procedure and PPS-based/TF-based methods DFM correction procedure.

%In [54], the adjacent cross correlation function (ACCF) is proposed by Li et al., which can effectively eliminate the high-order RM and DFM effects by reducing the order of DFM. However, this method only suits for the high SCNR case due to the high-order nonlinear operation and the error propagation effect. To reduce the order of nonlinearity, Chen et al. propose the Radon-linear canonical ambiguity function (RLCAF) [55] and Radon-fractional ambiguity function (RFRAF) [56] to eliminate the high-order RM and DFM effects. Nevertheless, these methods may be still not suitable for the low SCNR and may suffer from the cross-term interference because of bilinear transform.

%Throughout the aforementioned algorithms,
In this context, our analysis shows that,  since  RM and DFM share the same motion parameters, RM and DFM correction procedures of most  LTCI detection methods are coupled in the sense that the same motion parameters have to be compensated for the two different corrections.  As a result,  the search spaces of  motion parameters for eliminating RM and DFM are  set to be the same according to the minimum step size required by RM and DFM corrections. Hereafter, we refer to this %effect
as \textit{coupling effect of parameter search spaces}.  In this respect, LTCI  methods  on the coupled parameter search space not only  suffer from redundant computation, but also from lack of freedom in the sense that RM and DFM have to be jointly corrected.
{\color{magenta}Even if  some  non-parametric searching-based LTCI algorithms, such as the adjacent cross correlation-based methods \cite{Non-Searching-XiaolongLi,Non-Searching-XiaolongLi-v2,BCS}, the time reversal-based method \cite{TR-SKT-FRFT} and the product scale zoom discrete chirp Fourier transform-based method \cite{SZDCFT}, can correct the RM and DFM more efficiently,  they involve nonlinear transforms, leading to  poor antinoise performance. It is worth noting that \cite{BCS,TR-SKT-FRFT} have introduced the sparse theory to the LTCI, further decreasing the computational complexity.}

%and the second-order Wigner-Ville distribution  transform-based method \cite{SoWVD}

%Even if some non-searching methods \cite{SoWVD,Non-Searching-XiaolongLi,Non-Searching-XiaolongLi-v2} can efficiently eliminate the RM and DFM effects, their performance can {\color{red} sell at a discount greatly????}  in the case of low SNR.
 %since the minimum searching interval is generally much smaller than the range bins.

In this paper, to break the coupling effect of parameter search spaces, we  first propose a dual-scale (DS) decomposition of the target motion parameters,  leading to decoupled dual-scale search spaces, i.e., coarse search space and fine  search space.
This idea is inspired by the fact that if  the step sizes are tailored to RM and DFM corrections, respectively, the computational complexity can be largely reduced.
Theoretical analysis shows that,  thanks to the proposed dual-scale decomposition, the methods of the  GRFT detector family can be reduced into a \textit{generalized inverse Fourier transform} (GIFT) process in range domain and  \textit{generalized Fourier transform} (GFT) processes in Doppler domain conditioned on the coarse motion parameter. These properties make it possible to decouple the joint correction of RM and DFM effects into a cascade procedure, first RM correction on the coarse search space and then DFM correction on the fine search spaces.  In this respect, the proposed decoupled GRFT or KT-MFP is called \textit{dual-scale GRFT} (DS-GRFT) or \textit{dual-scale KT-MFP} (DS-KT-MFP) (which are also collectively referred to as members of the DS-GRFT detector family).
Compared to the standard GRFT detector family, the DS-GRFT detector family can provide comparable performance while providing significant improvement in computational efficiency.
Simulation experiments verify the effectiveness and the efficiency of the proposed LTCI methods.

{\color{magenta}Preliminary results were published in the conference paper \cite{DS-GRFT-Conference}, in which the idea of the DS decomposition of motion parameters was first proposed and its feasibility combining the GRFT detector was preliminarily investigated. This paper provides significant improvements in terms of newly-devised algorithms, algorithm enhancement, additional analysis, mathematical proofs of propositions and extensive experimental validation.  Specifically, contributions with respect to \cite{DS-GRFT-Conference} are summarized as follows.
\begin{itemize}
 \item\textit{Theoretical analysis for two implementations of GRFT detector: } This paper  provides a theoretical analysis for detection performance of  two implementations of GRFT, i.e., the standard one, namely,  GRFT in the Frequency Domain (FD-GRFT) and GRFT in the Time Domain (TD-GRFT). Our analysis shows  that, even if the TD-GRFT has lower computational cost,  it has obvious performance loss due to  an additional phase introduced by IFFT operation (see Remark 1 and eq. (\ref{received-signal-after-compensation-v1})). This analysis fully motivates the choice of FD-GRFT when developing DS-GRFT.
% \item  the time-varying discretization error \textit{Enhanced DS decomposition of motion parameters: } This paper has improved the previous DS decomposition  given in \cite{DS-GRFT-Conference}, in terms of generality and precision. The proposed  DS decomposition can  adapt to an arbitrary scope of the searching space
%with  precise expressions.
 \item \textit{A complete and detailed  picture of the proposed DS-GRFT detector: }  This paper  provides a complete and detailed  picture of the proposed DS-GRFT,  including mathematical proofs of propositions, detailed algorithm implementation (i.e., pseudocode) and computational cost analysis.
     %Further, factorization of the GRFT detector, which is at the core of DS-GRFT, has been enhanced  by considering compensation of the phase term after the coarse compensation by introducing a suitable scale factor for the fine motion parameters. %% 双尺度检测更完备、原来没有考虑相位相带来的影响进行补偿。
  It is worth pointing out  the mathematical derivation for factorization of the GRFT detector (which is at the core of DS-GRFT) is more rigorous compared with that of the conference paper (see Propositions 1 and 2).
   %The major enhancement  is to address the  additional  phase caused by  GIFT by introducing a suitable scale factor $\kappa$ for the fine motion parameters. %% 双尺度检测更完备、原来没有考虑相位相带来的影响进行补偿。
 \item  \textit{A newly-devised DS-KT-MFP detector: } Except for the DS-GRFT, this paper has further applied the DS decomposition to the KT-MFP detector, resulting in a newly-devised DS-KT-MFP detector.  The key idea is to derive a decoupled KT-MFP detector exploiting the proposed DS decomposition and combining the KT procedure.
 Mathematical results that support factorization of  the DS-KT-MFP are also provided along with their proofs. %In addition, this paper provides an analysis showing that KT-MFP  substantially  has the similar form of the GRFT.
 %\item  \textit{Enhanced and Rigorous factorization of GRFT and KT-MFP dectors for DS-GRFT and DS-KT-MFP: }

 \item  \textit{Comprehensive performance assessment in a more complicate multi-object scenario: } This paper provides a comprehensive performance assessment in terms of detection performance and execution efficiency for both the proposed DS-GRFT and DS-KT-MFP detectors, in a multi-object scenario with high order motion, by comparison with detectors of FD-GRFT, TD-GRFT and  KT-MFP.
       %in which are subject to RM and DFM caused by the first-order and second-order motion parameters jointly.
      %Detection performance, false object suppression ability and execution efficiency are all evaluated, by comparison with standard GRFT, TD-GRFT and standard KT-MFP.
\end{itemize}
}

%  This paper only provides a
%complete and detailed picture with , but also enhance the . In  novel algorithmic enhancements, added theoretical work,. Specifically, compared with the conference paper \cite{DS-GRFT-Conference}  benefit that its publication offers to the research community beyond the already published conference paper
%

The remaining part of this paper is organized as follows.
In Section II, the signal model is established and the RM/DFM effects are reviewed.
Then, GRFT, KT-MFP as well as their conventional implementations on the single-scale parameter search space are provided in Section III.
Section IV presents the dual-scale decomposition of motion parameters, by which  factorization of the GRFT and the KT-MFP are achieved resulting in the proposed DS-GRFT and DS-KT-MFP, respectively. Several implementation issues including pseudo-codes, computational complexity analysis and structural advantages of the proposed DS-GRFT family are provided in Section V.     In Section VI, we evaluate performance via numerical experiments under several points of view, including effectiveness of coherent integration, detection performance and computational complexity. Finally, we provide conclusions in Section VII.

\vspace{-8pt}
\section{Background}
\subsection{Receival signal model}
In this paper, the \textit{pulse compression} (PC) is implemented in the frequency domain. First, according to the stationary phase principle \cite{stationary-phase-principle}, the signal model after the received base-band signal multiplication by the reference signal in the range frequency-pulse domain can be expressed as
\begin{equation}\label{Rx-OFDM-Fre1}
{\small{\begin{split}
y(f_k,t_m)\!=\!A_1 \text{exp}\left[ -j \frac{4\pi}{c}(f_k+f_c) R(t_m) \right]\!+\!w(f_k,t_m)\\
\end{split}}}
\end{equation}
where:
$A_1$ denotes  the target complex attenuation in the frequency domain;  $t_m$  the slow time (inter-pulse sampling time), satisfying $t_m=m \cdot \text{PRT}$ for  $m=1,\cdots,M$ with $M$ the number of pulses; $\text{PRT}$ is the \textit{Pulse Repetition Time} and its corresponding \textit{Pulse Repetition Frequency} (PRF) is  $\text{PRF}=\frac{1}{\text{PRT}}$; $f_c$ is the carrier frequency; $R(t_m)$ is the instantaneous slant range between the radar and a maneuvering target;  $w(f_k,t_m)$ is the Gaussian-distributed noise in the frequency domain with zero-mean and variance $\sigma^2$.

After performing range IFFT, (\ref{Rx-OFDM-Fre1}) becomes
\begin{equation}\label{received signal}
{\small\begin{split}
&\,\,\,\,\,\,\,\,y(\tau_n,t_m)\\
&\!\!\!=\!\!A_2\text{asinc}\left[B_r\left(\tau_n\!\!-\!\!2 R(\!t_m\!)/\!c\right)\right]\!\exp\left\{j\pi B_r \left(\tau_n\!-\!2 R(\!t_m\!)/\!c\right) \!\right\}\\
&\,\,\,\,\,\,\,\,\,\,\,\,\,\,\,\,\,\,\,\,\,\,\,\,\,\,\,\,\,\,\,\,\,\,\,\,\,\,\,\,\,\,\,\times\exp\left\{-j\frac{4\pi}{\lambda}R(t_m)\right\}+w'(\tau_n,t_m)\\
\end{split}}
\end{equation}
where:  $A_2$ denotes the target complex attenuation in the time domain, i.e., $A_2=A_1 K_{\text{valid}}$; $\tau_n$  the  fast time (intra-pulse sampling time) which is a multiple of the sampling interval $T_s$, i.e., $\tau_n=nT_s$ and $T_s=1/f_s$; $c$ is the velocity of light, i.e., $c=3\times10^8$\,$\text{m/s}$; $\lambda$ is the wave length, i.e., $\lambda=c/f_c$; $w'(\tau_n,t_m)$ is the  Gaussian-distributed noise in the time domain, i.e., $w'(\tau_n,t_m)=\text{IFFT}\{w(f_k,t_m)\}$; $\text{asinc}(\cdot)$ is the \textit{aliased sinc} (asinc) function \cite{asinc1,SA-FIAA} (also called Dirichlet  \cite{asinc2} or periodic sinc function), i.e.,
\begin{equation}\label{asinc}
{\small\begin{split}
\text{asinc}(B_r \tau_n)=\frac{\sin(\pi B_r \tau_n)}{K_{\text{valid}}\sin(\pi \Delta f \tau_n)}
\end{split}}
\end{equation}
with $\Delta f$ being the interval of frequency bins, $K_{\text{valid}}$ the valid number of frequency bins, and $B_r$  the signal bandwidth, i.e., $B_r=K_{\text{valid}} \, \Delta f$.
%$y_{\text{FT}}(\tau_n,t_m)$ denotes the fast-time signals after PC, i.e.,
%\begin{equation}\label{signal-fast-time}
%\begin{split}
%%y_{\text{FT}}(\tau_n,t_m)=A_2\text{sinc}[B_r \Delta \tau]\exp\{j\pi B_r \Delta \tau \}
%y_{\text{FT}}(\tau_n,t_m)=&A_2\text{asinc}[B_r(\tau_n\!-\!2 R(t_m)/c)]\\
%&\,\,\,\,\,\,\,\,\,\,\,\,\,\,\,\,\,\,\,\,\,\,\,\,\,\times\exp\{j\pi B_r (\tau_n\!-\!2 R(t_m)/c) \}
%\end{split}
%\end{equation}
%with   and $\Delta \tau$ denotes the discretization error in the fast time, i.e.,
%\begin{equation}\label{Delta-tau}
%\begin{split}
%\Delta \tau(t_m)=\tau_{\text{round}\{2 R(t_m)/T_sc\}T_s}\!-\!2 R(t_m)/c.
%%\Delta \tau(t_m)=\tau_{\text{round}\{2 {\sum}_{p=0}^P c_p t_m^p/T_sc\}T_s}\!-\!2 {\sum}_{p=0}^P c_p t_m^p/c.
%\end{split}
%\end{equation}

\begin{Rem}
 %of the signal after  performing IFFT of (\ref{Rx-OFDM-Fre1}) along range frequency-pulse domain
 Note that the signal after performing range IFFT  in (\ref{received signal}) accounts for the fact that (\ref{Rx-OFDM-Fre1}) is a discrete signal in the frequency domain. Hence, the expression in (\ref{received signal}) is different from the usual one  derived from IFFT of a continuous signal in the frequency domain, i.e., \cite{GRFT}
  \begin{equation}\label{received signal_usual}
{\small\begin{split}
&\,\,\,y(\tau_n,t_m)\\
&\!\!\!\!\!\!=\!\!A_2\text{sinc}[B_r(\tau_n\!\!-\!\!2 R(\!t_m\!)/\!c)]\!\exp\{\!-\!j\frac{4\pi}{\lambda}R(t_m)\}\!+\!\!w'(\tau_n,\!t_m)\\
\end{split}}
\end{equation}
where
\begin{equation}\label{sinc}
{\small\begin{split}
\text{sinc}(B_r \tau_n)=\frac{\sin(\pi B_r \tau_n)}{\pi B_r \tau_n}.
\end{split}}
\end{equation}
Specifically, there are two differences between the expressions of the range-pulse domain signal given in (\ref{received signal}) and (\ref{received signal_usual}) (which is the same as  (9) in \cite{GRFT}). The first is an additional phase term
 $\exp\{j\pi B_r (\tau_n\!-\!2(c_0+{\sum}_{p=1}^{P} c_p t_m^p)/c)) \}$ in (\ref{received signal}); the other is that the envelope of (\ref{received signal}) is weighted by the $\text{asinc}(\cdot)$ function while in (\ref{received signal_usual}) it is weighted by
 the $\text{sinc}(\cdot)$ function. The relationship between $\text{asinc}(\cdot)$ and $\text{sinc}(\cdot)$ functions is given by \cite{asinc1}
\begin{equation}\label{Asinc_and_Sinc}
{\small\begin{split}
\lim_{\substack{ \Delta f \rightarrow 0;\\K_{\text{valid}}\Delta f=B_r}} \text{asinc}(B_r \tau_n)=\text{sinc}( B_r \tau_n).
\end{split}}
\end{equation}
\end{Rem}
\subsection{RM and DFM effects}
Assume that there is a moving target with initial slant range $c_0$ at $t_m=0$.
Then, $R(t_m)$ can be  expressed as
\begin{equation}\label{slant range}
{\small\begin{split}
R(t_m)=c_0+{\sum}_{p=1}^{P} c_p t_m^p \\
\end{split}}
\end{equation}
where   $c_1,\cdots,c_P$ are unknown motion parameters to be estimated. Specifically, when $P=3$, $c_1$, $c_2$ and $c_3$ represent the radial components of, respectively,  target initial velocity, acceleration and acceleration rate.

Substituting (\ref{slant range}) into (\ref{received signal}) yields
\begin{equation}\label{Rx_OFDM}
{\small{\begin{split}
&y(\tau_n,t_m)=A'_2\exp\left\{-j\frac{4\pi}{\lambda}\left( c_1+{\sum}_{p=2}^{P} c_p t_m^p\right)\right\}\\
 &\,\,\,\,\,\,\,\,\,\,\,\,\,\,\,\,\,\,\,\,\,\,\,\,\,\,\,\,\,\times \text{asinc}[B_r(\tau_n-2(c_0+{\sum}_{p=1}^{P} c_p t_m^p)/c)]\\
 &\times \exp\{j\pi B_r(\tau_n-2(c_0+{\sum}_{p=1}^{P} c_p t_m^p)/c) \}\!+\!w'(\tau,t_m) \, .
\end{split}}}
\end{equation}
%\begin{equation}
%{\begin{split}
%&y(\tau,f_d)\\
%=&s'\,\,\text{sinc}[ \text{PRF} (f-\frac{2(v+\alpha t_m+\beta t_m^2)}{\lambda})]\\
% &\times \text{sinc}[B_r (\tau-\frac{2(R_0+v t_m+\alpha t_m^2+\beta t_m^3)}{c})]\\
%\end{split}}
%\end{equation}
%\subsection{RM and DFM effects}
where $A'_2=A_2\exp\left\{-j\frac{4\pi c_0}{\lambda}\right\}$. It is seen from (\ref{Rx_OFDM}) that the terms $2c_p t_m^p/c$ ($p=1,\cdots,P$) may result in the RM effect. Specifically, the $t_m$- and $t_m^2$-terms may cause linear range walk and quadratic range curvature respectively. {\color{magenta}{Additionally, each term $2c_p t_m^p /\lambda$ ($p\geq2$) may lead to $p$-order DFM.}} Both RM and DFM effects are major factors leading to target energy dispersion if the corresponding motion parameters are not effectively compensated.

%in turn, can result into a severe {\color{red} integration???} loss
\section{GRFT detector family and its coupling effect of parameter search spaces}
In this paper, we consider  the GRFT  and KT-MFP detectors as reference  LTCI algorithms, due to their superior detection performance. As it will be shown later in this section, the KT-MFP has the same form of the GRFT detector, thus both are collectively referred to as the GRFT detector family.
\subsection{Two implementations of GRFT detector and performance analysis}
For a  given vector of motion parameter variables $\tilde{\mathbf{c}}=[\tilde{c}_0,\cdots,\tilde{c}_P]$, the test statistics of the   GRFT detector  in the frequency-domain can be  expressed in the following form
\begin{equation}\label{standard-GRFT-FD}
\!\!\!\!{\small\begin{split}
&\text{GRFT}_{y(f_k,t_m)}(\tilde{\mathbf{c}})\\
=&\sum_{m=1}^M \!\sum_{k=1}^K  y(f_k,\!t_m) \overline{H}(f_k,\!t_m;\tilde{\mathbf{c}})\exp\!\!\left[j\frac{4\pi}{c}\! f_k \tilde{c}_0 \right]\!\exp\!\!\left[j\frac{4\pi}{\lambda} \tilde{c}_1 t_m \right]
%=\sum_{m=1}^M \text{GIFT}_{y(f_k,t_m)}(\mathbf{c})\exp\!\!\left[j\frac{4\pi}{\lambda} c_1 t_m \right]\\
%&=T_{\text{LRT}}(\mathbf{c})
\end{split}}
\end{equation}
where: $\tilde{c}_0$ denotes the target slant range;  $\tilde{c}_p$ denotes the $p$th- order motion parameter;  $\overline{H}(f_k,t_m;\tilde{\mathbf{c}})$ is the   \textit{Matching Filter} (MF) function defined by
\begin{equation}\label{H-GRFT}
{\small\begin{split}
\overline{H}(f_k,t_m;\tilde{\mathbf{c}})\triangleq\overline{H}_{\text{RM}}(f_k,t_m;\tilde{\mathbf{c}}) \, \overline{H}_{\text{DFM}}(t_m;\tilde{\mathbf{c}})
\end{split}}
\end{equation}
$\overline{H}_{\text{RM}}(f_k,t_m;\tilde{\mathbf{c}})$  and $\overline{H}_{\text{DFM}}(t_m;\tilde{\mathbf{c}})$  being the matching coefficients with respect to RM and DFM, respectively, i.e.
\begin{align}
\label{H-RM}
\overline{H}_{\text{RM}}(f_k,t_m;\tilde{\mathbf{c}})&\small\triangleq\exp \left[j\frac{4\pi}{c} f_k \left({\sum}_{p=1}^{P} \tilde{c}_p t_m^p\right)\right],\\
\label{H-DFM}
\overline{H}_{\text{DFM}}(t_m;\tilde{\mathbf{c}})&\small\triangleq\exp \left[j\frac{4\pi}{\lambda}  \left({\sum}_{p=2}^{P} \tilde{c}_p t_m^p\right)\right].
\end{align}
%while $\overline{H}_{\text{DFM}}(t_m;\tilde{\mathbf{c}})$  is given in (\ref{H-DFM}).
In particular, when  $P=1$, the GRFT detector reduces to the RFT detector which only considers  RM caused by target radial  velocity \cite{RFT}.

{\color{magenta}{Generally, there are two implementations for the GRFT detector, namely  TD-GRFT and  FD-GRFT.
Specifically, the FD-GRFT implements the original test statistics given in (\ref{standard-GRFT-FD}) directly, achieving superior detection performance at the price of high
computational cost. For the sake of computational efficiency, an alternative is the TD-GRFT which  uses the test statistics in the time domain by searching the motion trajectory with the arbitrarily parameterized motion model. Even if the TD-GRFT can  decrease the computational cost, it has obvious energy loss, possibly leading to  degradation of the detection performance as we will analyze later in this subsection.}}
\subsubsection{GRFT detector in the time domain}
The TD-GRFT detector is an approximation of its original frequency-domain representation (\ref{standard-GRFT-FD}) according to \cite{GRFT}. More specifically, its  test statistics  is given by \cite{GRFT}
\begin{equation}\label{standard-GRFT-TD}
{\small{\begin{split}
&\text{GRFT}_{y(\tau_n,t_m)}(\tilde{\mathbf{c}})=\!\!\sum_{m=1}^M  y(\tau_{n(\tilde{\mathbf{c}},t_m)},\!t_m) \, \overline{H}_{\text{DFM}}(t_m;\tilde{\mathbf{c}})\exp\!\!\left[j\frac{4\pi}{\lambda} \tilde{c}_1 t_m \right]
\end{split}}}
\end{equation}
where $n(\tilde{\mathbf{c}},t_m)\triangleq\text{round}\left\{ \frac{2{\sum}_{p=0}^{P} \tilde{c}_p t_m^p}{cT_s} \right\}$.
%\begin{align}
%n(\mathbf{c},t_m)=\text{round}\left\{\frac{2{\sum}_{p=0}^{P} c_p t_m^p}{cT_s}\right\}T_s
%\end{align}
%and $\overline{H}_{\text{DFM}}(t_m;\tilde{\mathbf{c}})$ denotes the matching coefficients with respect to the DFM,
%i.e.
%\begin{align}
%\label{H-DFM}
%\overline{H}_{\text{DFM}}(t_m;\tilde{\mathbf{c}})&\triangleq\exp \left[j\frac{4\pi}{\lambda}  \left({\sum}_{p=2}^{P} \tilde{c}_p t_m^p\right)\right].
%\end{align}

  Define $y'(\tau_{n(\tilde{\mathbf{c}},t_m)},t_m)\triangleq y(\tau_{n(\tilde{\mathbf{c}},t_m)},\!t_m) \, \overline{H}_{\text{DFM}}(t_m;\tilde{\mathbf{c}})$. When $|\tilde c_p-c_p| \rightarrow 0$, $\forall  p=2,\cdots, P$,
$y'(\tau_{n(\tilde{\mathbf{c}},t_m)},t_m)$ yields:
\begin{equation}\label{received-signal-after-compensation-v1}
{\small\begin{split}
&y'(\tau_{n(\tilde{\mathbf{c}},t_m)},t_m)\\
=&A'_2\text{asinc}[B_r\Delta \tau(\tilde{\mathbf{c}},t_m)]\exp[j\pi B_r \Delta \tau(\tilde{\mathbf{c}},t_m) ]\\
&\,\,\,\,\,\,\,\,\,\,\,\,\,\,\,\,\,\,\,\,\,\,\,\,\,\,\,\,\times\exp \left[ \!-j4\pi c_1 t_m/\lambda  \right]\!+\!w'(\tau_{n(\tilde{\mathbf{c}},t_m)},t_m)
\end{split}}
\end{equation}
where $\Delta \tau(\tilde{\mathbf{c}},t_m)$ denotes the discretization error, i.e.,
\begin{equation}\label{Delta-tau}
{\small\begin{split}
\Delta \tau(\tilde{\mathbf{c}},t_m)\!=\!\tau_{\text{round}\{2{\sum}_{p=0}^{P} \tilde{c}_p t_m^p/T_sc\}}\!-\!2{\sum}_{p=0}^{P} c_p t_m^p/c.
%\Delta \tau(t_m)=\tau_{\text{round}\{2 {\sum}_{p=0}^P c_p t_m^p/T_sc\}T_s}\!-\!2 {\sum}_{p=0}^P c_p t_m^p/c.
\end{split}}
\end{equation}
%Since the implementation of the GRFT detector in (\ref{standard-GRFT-TD}) is in the time-domain, it is referred to as TD-GRFT detector in this paper.
 %with $\text{round}(\cdot)$ denotes the round operation.
 %Eq. (\ref{Delta-tau}) shows that the discretization error $\Delta \tau(\tilde{\mathbf{c}},t_m)$  depends on both  $\tilde{\mathbf{c}}$ and $t_m$.

%Thus, the coherent integral procedure in (\ref{standard-GRFT-TD}) reduce to
%\begin{equation}\label{standard-GRFT-TD-new-version}
%\begin{split}
%&\text{GRFT}_{y(\tau,t_m)}(\tilde{\mathbf{c}})=\sum_{m=1}^M  y(\tau_{n(\tilde{\mathbf{c}},t_m)},\!t_m) \, \overline{H}_{\text{DFM}}(t_m;\tilde{\mathbf{c}})\exp\!\!\left[j\frac{4\pi}{\lambda} \tilde{c}_1 t_m \right]
%\end{split}
%\end{equation}
Eq. (\ref{received-signal-after-compensation-v1}) shows that
%the envelope of $y'(\tau_{n(\tilde{\mathbf{c}},t_m)},t_m)$ is  modulated  by $\text{asinc}[B_r \Delta \tau(\tilde{\mathbf{c}},t_m)]$ and
the phase of $y'(\tau_{n(\tilde{\mathbf{c}},t_m)},t_m)$ is modulated not only by the target velocity, i.e., $\exp \left[ -j4\pi c_1/\lambda t_m \right]$,  but also by the discretization error $\Delta \tau(\tilde{\mathbf{c}},t_m)$, i.e., $\exp[j\pi B_r \Delta \tau(\tilde{\mathbf{c}},t_m) ]$.

Specifically, when $|\tilde c_p-c_p| \rightarrow 0, \forall p=0,\cdots, P$,  we have
\begin{equation}
{\small\begin{split}|\Delta \tau(\tilde{\mathbf{c}},t_m)|<0.5T_s,
\end{split}}
\end{equation}
 then for $m=1,\cdots,M$, the phase of $y'(\tau_{n(\tilde{\mathbf{c}},t_m)},t_m)$ caused by the term $\exp[j\pi B_r \Delta \tau(\tilde{\mathbf{c}},t_m) ]$ satisfies
\begin{equation}\label{angle_of_addtion_phase}
{\small\begin{split}
&|\angle (\exp[j\pi B_r \Delta \tau(\tilde{\mathbf{c}},t_m) ])| \leq\frac{B_r 360}{f_s 4} \leq 90(\deg),
\end{split}}
\end{equation}
applying $f_s\geq B_r$ (according to Nyquist Sampling Theorem).

Accordingly, since $|\angle (\exp[j\pi B_r \Delta \tau(\tilde{\mathbf{c}},t_m) ])|$ varies from $-90 (\deg)$ to $90 (\deg)$,  the phase alignment of $y'(\tau_{n(\tilde{\mathbf{c}},t_m)},t_m)$ for $m=1,\cdots,M$ cannot be guaranteed by $\exp \left[ \!-j4\pi c_1 t_m/\lambda  \right]$, thus the coherent integration  in (\ref{standard-GRFT-TD}), i.e., $\sum_{m=1}^M y'(\tau_{n(\tilde{\mathbf{c}},t_m)},t_m)\exp\!\!\left[j\frac{4\pi}{\lambda} \tilde{c}_1 t_m \right]$,
can incur in energy loss, possibly leading to  degradation of the detection performance.
 %This analysis is consistent with (15) in \cite{GRFT} which shows   the TD-GRFT detector is an approximation of its original frequency-domain implementation.
\subsubsection{GRFT detector in frequency domain}
The direct  implementation of the GRFT detector is  in frequency domain  according to (\ref{standard-GRFT-FD}). By replacing the summations over frequency bins with $\text{IFFTs}$ in (\ref{standard-GRFT-FD}),  the test statistics can be further rewritten as
\begin{equation}\label{standard-GRFT-FD-implemeatation}
{\small\begin{split}
&\text{GRFT}_{y(f_k,t_m)}(\tilde{\mathbf{c}})\\
=&\text{IFFT}\!\left\{\!\sum_{m=1}^M  y(f_k,\!t_m) \!\overline{H}(f_k,\!t_m;\tilde{\mathbf{c}})\exp\!\!\left[j\frac{4\pi}{\lambda}\!  \tilde{c}_1 t_m \right]\right\}.
\end{split}}
\end{equation}
Furthermore, by exchanging IFFT and summation, the test statistics can be rewritten as
\begin{equation}\label{standard-GRFT-FD-implemeatation}
{\small\begin{split}
&\text{GRFT}_{y(f_k,t_m)}(\tilde{\mathbf{c}})\\
=&\sum_{m=1}^M\text{IFFT}\!\left\{\!  y(f_k,\!t_m) \!\overline{H}(f_k,\!t_m;\tilde{\mathbf{c}})\right\}\exp\!\!\left[j\frac{4\pi}{\lambda}\!  \tilde{c}_1 t_m \right]\\
=&\sum_{m=1}^M A'_2\text{asinc}[B_r \Delta \tau'(\tilde{\mathbf{c}},t_m)]\exp\{j\pi B_r \Delta \tau'(\tilde{\mathbf{c}},t_m) \}\\
&\,\,\,\,\,\,\,\,\,\,\,\,\,\,\times\exp\!\!\left[j\frac{4\pi}{\lambda}\!  {\sum}_{p=1}^P (\tilde{c}_p-c_p) t_m^p \right]+\overline{w}(\tau_n,t_m)
\end{split}}
\end{equation}
where
%$\tau_{n}\!\!=\!\!\tau_{n(\tilde{c}_0)}$;
\begin{align}\label{discretization-error-FD-v1}
\overline{w}(\tau_n,t_m)\!\!=&\sum_{m=1}^M\text{IFFT}\!\!\left\{w(f_k,t_m) \overline{H}(f_k,t_m;\tilde{\mathbf{c}})\right\}\exp\left[j\frac{4\pi}{\lambda}\!  \tilde{c}_1 t_m \right]\\
\Delta \tau'(\tilde{\mathbf{c}},t_m)=&\tau_{n}-2 \left(c_0+{\sum}_{p=1}^P (c_p-\tilde{c}_p)t_m^p\right)/c \,
\end{align}
with $n=\text{round}(2\tilde{c}_0/cT_s)$.

%\begin{equation}\label{discretization-error-FD-v1}
%{\small\begin{split}
%\Delta \tau'(\tilde{\mathbf{c}},t_m)=&\tau_{\text{round}((c_0+{\sum}_{p=1}^P (c_p-\tilde{c}_p)t_m^p)/cT_s)}\\
%&\,\,\,\,\,\,\,\,\,\,\,-2 \left(c_0+{\sum}_{p=1}^P (c_p-\tilde{c}_p)t_m^p\right)/c \,.
%\end{split}}
%\end{equation}
%\begin{equation}\label{discretization-error-FD-v1}
%\begin{split}
%\Delta \tau'(\tilde{\mathbf{c}},t_m)=&\tau_{n}-2 \left(c_0+{\sum}_{p=1}^P (c_p-\tilde{c}_p)t_m^p\right)/c \,.
%\end{split}
%\end{equation}
When $|\tilde c_p-c_p| \rightarrow 0 $, $\forall p=0,\cdots, P$, we have
%When $|\tilde c_p-c_p|\leq \Delta c_p, p=0,\cdots, P$, ${\sum}_{p=1}^P (c_p-\tilde{c}_p)t_m^p\leq|\sum_{p=1}^P \Delta c_pt_m^p|\leq \frac{f_s}{f_c}T_s$. Furthermore, since $\frac{f_s}{f_c}\ll1$,
\begin{equation}\label{discretization-error-FD-v2}
{\small\begin{split}
\Delta \tau'(\tilde{\mathbf{c}},t_m)\approx \tau_{\text{round}(2\tilde{c}_0/cT_s)}-2c_0/c=\Delta \tau'(\tilde{c}_0).
\end{split}}
\end{equation}
Then, the test statistics of $\text{GRFT}_{y(f_k,t_m)}(\tilde{\mathbf{c}})$   reduces to
\begin{equation}\label{standard-GRFT-FD-implemeatation}
{\small\begin{split}
&\!\text{GRFT}_{y(f_k,t_m)}(\tilde{\mathbf{c}})\\
%\!\approx \!&A'_1\text{asinc}[\!B_r \Delta \tau'(\tilde{c}_0)]{\sum}_{m=1}^M\exp\left[-j4\pi/\lambda c_1 t_m \right]\\
%&\,\,\,\,\,\,\,\,\,\,\,\,\,\,\,\,\,\,\,\,\,\,\,\,\,\,\,\,\,\,\,\,\,\,\,\,\,\,\,\,\,\,\,\,\,\,\,\,\,\times\exp\left[j4\pi/\lambda \tilde{c}_1 t_m \right]+w(f_k,t_m)\\
\!\approx\!&A_3\text{asinc}[\!B_r \Delta \tau'(\tilde{c}_0)]\text{asinc}[2\text{PRF}(\tilde{c}_1\!-\!c_1\!)/\lambda]\!+\!\overline{w}(\tau_n,\!t_m\!)
%\!=&A''_1\text{asinc}[2\text{PRF}(\tilde{c}_1-c_1)/\lambda]+w(f_k,t_m)
\end{split}}
\end{equation}
where %$A'_1=A_1\exp\{j\pi B_r \Delta \tau'(\tilde{c}_0)$ and
$A_3=A'_2 M\exp\{j\pi B_r \Delta \tau'(\tilde{c}_0) \exp\{j 2\pi\text{PRF}(\tilde{c}_1-c_1)/\lambda\}$.
Obviously, the target energy is fully focused at the target motion variable $\tilde{\mathbf{c}}$.
% satisfying $|\tilde c_p-c_p|\leq \Delta c_p, p=0,\cdots, P$.

Hence, in this paper we focus on the GRFT detector in the frequency domain.
\subsection{KT-MFP detector and connection to  the GRFT detector}
For a slowly-moving target, it is possible that its velocity  satisfies   $c_1\in [-V_a/2,V_a/2]$ where $V_a$ denotes the blind velocity.
However, for a fast-moving target, its velocity may exceed the blind velocity $V_a$, and thus can be decomposed as
\begin{equation}\label{velocity_Amb}
{\small\begin{split}
c_1=c'_1+q V_{a}=c'_1+q \frac{\lambda \cdot \text{PRF}}{2}
\end{split}}
\end{equation}
where $c'_1$ is defined as the baseband velocity satisfying $c'_1\in [-V_a/2,V_a/2]$,  and {\color{magenta}$q$ is the  folding factor corresponding to the Doppler center ambiguity number}.

According to   (\ref{velocity_Amb}),   the target velocity is split into baseband velocity $c'_1$ and ambiguity velocity $q V_a$. Since $e^{-j\frac{4\pi}{\lambda} q V_{a} t_m}=1$, substituting (\ref{velocity_Amb}) into (\ref{Rx-OFDM-Fre1}), $y(f_k,t_m)$ can be rewritten as,
\begin{equation}\label{Rx-OFDM-Fre2}
{\small\begin{split}
&\,\,\,\,\,\,y(f_k,t_m)\\
&=A_1\exp[-j\frac{4\pi}{\lambda}c_0]\\
             &\times \text{exp}\left[-j\frac{4\pi}{c}f_k(c_0+c'_1 t_m+ q V_{a} t_m+{\sum}_{p=1}^P c_p t_m^p)\right]\\
             &\times\text{exp}\left[ -j\frac{4\pi}{\lambda} (c'_1 t_m+ {\sum}_{p=2}^P c_p t_m^p)\right]+w(f_k,t_m) .
\end{split}}
\end{equation}

KT is a linear transform that can simultaneously eliminate the effects of the linear RM for all moving targets regardless of their unknown velocities. By exploiting KT,  \cite{KT-MFP}  and \cite{KT-MFP-Three-Order} successively proposed KT-MFP detectors to  eliminate RM and DFM when $P=2$ and $P=3$. In this paper, we only consider the KT-MFP detector with $P=2$ for ease of presentation. Hence, after KT, $y(f_k,t_m)$ in (\ref{Rx-OFDM-Fre2}) is simplified as
\begin{equation}\label{Echo-after-KT}
{\small\begin{split}
&y_{\text{KT}}(f_k,t_m)\\
=&A_1\exp[-j\frac{4\pi}{\lambda}c_0]\\
 &\times \text{exp}\left[-j\frac{4\pi}{c}f_k (c_0+\overline{f} q V_a t_m -c_2 t_m^2)\right]\\
 &\times\text{exp}\left[-j\frac{4\pi}{\lambda} (c_1' t_m+c_2 t_m^2)\right]+\overline{w}(f_k,t_m)\\
\end{split}}
\end{equation}
where  $\overline w(f_k,t_m)$ is the Gaussian noise after KT in the frequency domain and $\overline{f}=\frac{f_c}{f_k+f_c}$. In addition, $\overline{f}\approx1-\frac{f_k}{f_c}$ according to the Taylor series expansion.

%Eq. (\ref{Echo-after-KT}) shows that  the first-order RM and Doppler frequency shift caused by  the velocity is quite different.
%Specifically, the former is related to the ambiguity velocity $q V_a$, while the latter is  related to the baseband velocity $c'_1$.

 Then, for a  given vector of motion parameter variables $\tilde{\mathbf{c}}=[\tilde{c}_0,\tilde{q},\tilde{c}'_1,\cdots,\tilde{c}_P]$, the MF coefficients of KT-MFP are constructed as \cite{KT-MFP}
\begin{equation}\label{H-MFP}
{\small\begin{split}
\overline{H}_{\text{KT}}(f_k,t_m;\tilde{\mathbf{c}})=\overline{H}_{\text{KT},\text{RM}}(f_k,t_m;\tilde{\mathbf{c}}) \, \overline{H}_{\text{DFM}}(t_m;\tilde{\mathbf{c}})
\end{split}}
\end{equation}
where
\begin{equation}\label{H-MFP-RM}
{\small\begin{split}
\overline{H}_{\text{KT},\text{RM}}(f_k,t_m;\tilde{\mathbf{c}})\!=\!\exp \!\left[j\frac{4\pi}{c}f_k\!\left(\overline{f}\tilde{q} V_a t_m \!-\!\tilde{c}_2 t_m^2\right)\right].
\end{split}}
\end{equation}
%Compared to (\ref{H-GRFT}), (\ref{H-MFP-RM}) misses the factor  $\text{exp}\left[j\frac{4\pi}{c} f_k \tilde{c}_1 t_m\right]$ due to  baseband velocity.
%By applying (\ref{H-MFP}) instead of (\ref{H-GRFT}),
Finally, the KT-MFP detector is given as follows
\begin{equation}\label{KT-GRFT}
{\small\begin{split}
&\,\,\,\,\,\,\,\,\text{KT-MFP}_{y(f_k,t_m)}(\tilde{\mathbf{c}})\\
&=\text{FFT}\left\{\text{IFFT}\left\{y_{\text{KT}}(f_k,t_m)\overline{H}_{\text{KT}}(f_k,t_m;\tilde{\mathbf{c}})\right\}\right\}\, .
\end{split}}
\end{equation}
By replacing IFFTs with summations over frequency bins  and FFTs with summations over slow times in (\ref{KT-GRFT}), the test statistics can be
further rewritten as
\begin{equation}\label{KT-GRFT-GRFT}
{\small\begin{split}
&\,\,\,\,\,\,\,\,\text{KT-MFP}_{y(f_k,t_m)}(\tilde{\mathbf{c}})\\
&=\sum_{m=1}^M \!\sum_{k=1}^K  y_{\text{KT}}(f_k,\!t_m) \, \overline{H}_{\text{KT}}(f_k,t_m;\tilde{\mathbf{c}}) \\
&\,\,\,\,\,\,\,\,\,\,\,\,\,\,\,\,\,\,\,\,\,\,\,\,\,\,\,\,\,\,\,\,\,\,\,\,\,\,\,\,\,\,\,\,\,\,\,\,\,\,\,\times\exp\!\!\left[j\frac{4\pi}{c}\! f_k \tilde{c}_0 \right]\!\exp\!\!\left[j\frac{4\pi}{\lambda} \tilde{c}_1' t_m \right]\\
&=\text{GRFT}_{y_{\text{KT}}(f_k,t_m)}(\tilde{\mathbf{c}})\, .
\end{split}}
\end{equation}
Notice from (\ref{KT-GRFT-GRFT})  that the MFP after KT is in the same form of (\ref{standard-GRFT-FD}), thus the KT-MFP can  also be regarded as a special GRFT detector. Similar considerations also hold for KT-MFP with $P=3$.

Hereafter, we collectively refer to the RFT, FD-GRFT, and KT-MFP detectors  for $P=2$ and $P=3$ as members of standard GRFT detector family.
\subsection{Detection on single-scale search space}
%{\color{red} The weights of the GRFT detector differ from the LRT's weights  only by a phase which, however,  does not affect the ultimate result of the detectors.??}
Based on (\ref{standard-GRFT-FD}), the hypothesis test for the GRFT detector family is given by
\begin{equation}\label{LRT}
{\small\begin{split}
|\text{GRFT}_{y(f_k,t_m)}(\tilde{\mathbf{c}})|=T_{\text{LRT}}(\tilde{\mathbf{c}}) \mathop{\lessgtr}\limits_{H_1}^{H_0}\gamma_{\text{LRT}}, \,\tilde{\mathbf{c}} \in \mathbb{C},
\end{split}}
\end{equation}
where: $\gamma_{\text{LRT}}$ is a suitable threshold;
%\begin{equation}\label{MLE-GRFT}
%{\begin{split}
%[\hat{\bf c}]\!=\!\arg\max\limits_{\tilde{\mathbf{c}}\in \mathbb {C}}|\text{GRFT}_{y(f_k,t_m)}(\tilde{\mathbf{c}})|
%\end{split}}
%\end{equation}
%where
$\mathbb{C}\triangleq\mathbb{C}_0\times \cdots \times \mathbb{C}_P$ and $\mathbb{C}_p$ is the search space of the $p$-th order motion.
Let  $\Delta c_p$ denote the step size for the $p$-th order motion parameter, then the search space $\mathbb{C}_p$ is  defined as
 \begin{equation}\label{delta-Beta-1}
{\small \begin{split}
\mathbb{C}_p\triangleq [c_{p,\text{min}}:\Delta c_p:c_{p,\text{max}}], \,\,p=1,\cdots, P.
\end{split}}
\end{equation}
Assuming that the noise is white and Gaussian, and given the false alarm probability $P_{\text{FA}}$, the detection threshold    can be determined  as
\begin{equation}
{\small\begin{split}\gamma_{\text{LRT}}=\sqrt{-M \, K \, \sigma^2 \, \ln P_{\text{FA}}} \, .
\end{split}}
\end{equation}

 When  $\tilde{\bf c}=[\tilde{c}_1,\cdots,\tilde{c}_P]$ satisfies $|\text{GRFT}_{y(f_k,t_m)}(\tilde{\mathbf{c}})|>\gamma_{\text{LRT}}$, $\tilde{\bf c}$ represents the estimates of the true motion parameter $\mathbf{c}=[c_1,\cdots,c_P]$. Then, we have
\begin{equation}
{\small\begin{split}\tilde{c}_p=\text{round}\left(\frac{c_p-c_{p,\min}}{\Delta c_p}\right)\Delta c_p+c_{p,\min} \, .
\end{split}}\end{equation}
Next, the following holds:
\begin{equation}
{\small\begin{split}\left| c_p-\tilde{c}_p\right| \leq \Delta c_p /2 \, .
\end{split}}
\end{equation}
To effectively focus the energy of  a moving target, after correction the  RM quantity $\left|\sum_{p=1}^P (c_p-\tilde c_p) t_m^p\right|$ during the coherent integration time must be less than a half of range resolution cell, while the  DFM quantity $\left|\sum_{p=1}^{P} (c_p-\tilde c_p)  t_m^{p-1}\right|$ must also be limited within a half of Doppler resolution cell. Hence, according to the above restrictions, one has
\begin{align}
\label{RM-DeltaR}
\left | \sum_{p=1}^P (c_p-\tilde{c}_p) t_m^p\right | &\leq \sum_{p=1}^P  \frac{\Delta c_p T^p}{2} \leq \frac{\Delta R}{2}\\
\label{RM-Deltafd}
\left | \sum_{p=1}^P \frac{2(c_p-\tilde{c}_p) t_m^{p-1}}{\lambda}\right | &\leq \sum_{p=1}^P \frac{ \Delta c_p T^{p-1}}{\lambda} \leq \frac{\Delta f_d}{2}
\end{align}
where $\Delta f_d$ denotes the Doppler resolution and $\Delta R$ the range bin. According to the radar principles, the  Doppler resolution $\Delta f_d$ is inversely proportional to the coherent integration time $T$, i.e.,
\begin{equation}\label{delta-V}
{\small\begin{split}\Delta f_d=\frac{1}{T},
\end{split}}
\end{equation}
with $T\!\!=\!\!M \cdot\text{PRT}$, while the range bin $\Delta R$ is inversely proportional to the sampling frequency $f_s$, i.e.,
\begin{equation}\label{delta-R}
{\small\begin{split}\Delta R=\frac{c}{2 f_s}.
\end{split}}
\end{equation}
Therefore, RM and DFM do not occur if
\begin{equation}\label{delta-Beta-1}
{\small\begin{split}\Delta c_p = \min \left(\Delta_{\text{RM},p},\Delta_{\text{DFM},p}\right)
\end{split}}
\end{equation}
where
\begin{align}
%\Delta_{\text{RM},p}\triangleq&\frac{\alpha_p  c}{2 f_s T^{p}}\\
%\Delta_{\text{DFM},p}\triangleq&\frac{\alpha_p c}{2 f_c T^{p}}
\Delta_{\text{RM},p}\triangleq&\alpha_p  c / (2 f_s T^{p}) \\
\Delta_{\text{DFM},p}\triangleq&\alpha_p c/ (2 f_c T^{p})
\end{align}
with $0 \leq\alpha_p\leq 1, p=1,\cdots,P$, satisfying $\sum_{p=1}^P\alpha_p=1$.
Specifically, when $p=3$,  $\Delta c_1$, $\Delta c_2$ and $\Delta c_3$ denote the step size for velocity, acceleration and acceleration rate, respectively.

Since $f_c$ is generally much larger than $f_s$,  the step sizes $\Delta c_p$  are usually set as
\begin{equation}\label{delta-Beta-1}
{\small\begin{split}\Delta c_p= \Delta_{\text{DFM},p},\, p=1,\cdots,P \, .
\end{split}}
\end{equation}
%Let  $\Delta c_p$ denote the maximum searching interval of the $p$-th order motion parameter.  Then, we have
%\begin{equation}
% \tilde{c}_p=\text{round}\left(\frac{c_p-c_{\min}}{\Delta c_p}\right)\Delta c_p+c_{\min} \, .
%\end{equation}
%
%Then, the searching space $\mathbb{C}_p$ is  defined as
%\begin{equation}\label{delta-Beta-1}
%\mathbb{C}_p\triangleq [c_{p,\text{min}}:\Delta c_p:c_{p,\text{max}}], \,\,p=1,\cdots, P.
%\end{equation}
Consequently, both RM and DFM elimination procedures are performed on the same motion parameter search space.  This  phenomenon is referred to as \textit{the coupling effect of parameter search spaces}.
Moreover, the size of  the search space is determined by $\Delta_{\text{DFM}}$ rather than  $\Delta_{\text{RM}}$ (since $\Delta_{\text{DFM}}$ is generally much smaller than $\Delta_{\text{RM}}$).

As for  the KT-MFP detector, even if  the velocity searching for RM and DFM corrections needs not be performed on the same search space,  searches of  the higher order motion parameters still need to be performed at the unified step size $\Delta_{\text{DFM}}$.
As a result,  \emph{\textbf{searching the maxima of the outputs of  the GRFT detector family is on a single-scale space $\mathbb{C}$ which is determined by the unified  fine step size $\Delta c_p$}}, making  the GRFT detector family computationally intensive.
\section{Dual-scale GRFT detector family}
According to (\ref{RM-DeltaR}) and (\ref{RM-Deltafd}), the required step size for eliminating RM and DFM  are much different.
In fact, the sensitivity of DFM caused by the accelerated velocity is $\frac{f_c}{f_s}$ times that of the RM, where $f_c$ is usually much bigger than  $f_s$, e.g.,  $\frac{f_c}{f_s}>10$.  To address the high computational complexity issue, the idea is that  if  the implementation of  the GRFT detector family can  first eliminate the RM effect with the required step size for the range domain, and then perform a fine tuning at a higher resolution to eliminate the DFM effect,  the computational complexity could be significantly reduced.
Inspired by this,  a  more efficient implementation of the GRFT  detector family will be developed in the following sections.
%\begin{Pro}\label{GIFT-Degradation}

%\begin{Def}
% The generalized FT (GFT) procedure among the echoes $y(t,t_p), p=1,\cdots,P$ is defined as
%\begin{equation}
%GFT_{y(t,t_P)}(v,\beta)=\text{IFFT}\left\{y(t,t_P)H(t_p;v,\beta) \right\}
%\end{equation}
%where $H(t_p;v,\beta)$ denotes compensation function in the frequency dimension, and
%\begin{equation}
%{\begin{split}
%H(\cdot)=\exp [j\frac{4\pi}{\lambda} (v t_p+\beta t_p^2)]
%\end{split}}
%\end{equation}
%\end{Def}
\subsection{Generalized PC and MTD procedure}
For the subsequent developments, in this section we first define GFT and GIFT procedures.
%For the subsequent developments, in this section we first define \textit{generalized FT} (GFT) and \textit{generalized IFT} (GIFT) procedures.
{\color{magenta}\begin{Def}\label{Definition-GFT}
The GFT procedure among the echoes $y({\color{magenta}{\tau_n}},t_m), m=1,\cdots,M$, in the pulse domain is defined as follows
\begin{equation}\label{Definition-GFT}
{\small{\begin{split}
&\emph{GFT}_{y({\color{magenta}{\tau_n}},t_m)}(\tilde{\mathbf{c}})\\
=&{\color{magenta}{\sum_{m=1}^{M}}} y({\color{magenta}{\tau_n}},t_m) \overline{H}_{\emph{DFM}}(t_m;\tilde{\mathbf{c}}) \exp\left[j\frac{4\pi}{\lambda} \tilde{c}'_1 t_m \right]\\
=&\emph{FFT}\left\{y({\color{magenta}{\tau_n}},t_m)\overline{H}_{\emph{DFM}}(t_m;\tilde{\mathbf{c}}) \right\}
% &=\left\{\begin{array}{ll}
%\text{FFT}\left\{y(t,t_p)\overline{H}_{\text{DFM}}(t_p;\mathbf{c}) \right\}, \,\,\,&\text{if }c_1 < V_a\\
%\text{DFT}\left\{y(t,t_p)\overline{H}_{\text{DFM}}(t_p;\mathbf{c}) \right\},\,\,\,&\text{if } c_1 \geq V_a
%\end{array}\right.
\end{split}}}
\end{equation}
where $\tilde{c}_1'$ denotes the baseband velocity of $\tilde{c}_1$.
\end{Def}}
\begin{Rem}
 Compared to the traditional \textit{moving target detection} (MTD), the GFT procedure not only compensates  the phase difference caused by velocity, but also by acceleration and acceleration rate. Hence, the GFT can be regarded as a generalized MTD method. Notably, the GFT can achieve better LTCI performance for a maneuvering target with complex motion.
\end{Rem}
\begin{Def}\label{Definition-GIFT}
 The GIFT procedure among the echoes $y(f_k,t_m)$ of the form (\ref{Rx-OFDM-Fre1}), $m=1,\cdots, M$, in the range-frequency domain is defined as follows
\begin{equation}\label{Definition-GIFT}
{\small{\begin{split}
&\emph{GIFT}_{y(f_k,t_m)}(\tilde{\mathbf{c}})\\
=&{\sum}_{k=1}^{K}  y(f_k,t_m)\overline{H}_{\emph{RM}}(f_k,t_m;\tilde{\mathbf{c}})\exp\left[j\frac{4\pi}{c}\! f_k \tilde{c}_0 \right]  \\
=&\emph{IFFT}\left\{y(f_k,t_m)\overline{H}_{\emph{RM}}(f_k,t_m;\tilde{\mathbf{c}}) \right\} \, .
\end{split}}}
\end{equation}
%\begin{equation}\label{Definition-GIFT}
%\begin{split}
%&\emph{GIFT}_{y(f_k,t_m)}(\mathbf{c})=\emph{IFFT}\left\{y(f_k,t_m)\overline{H}_{\emph{RM}}(f_k,t_m;\mathbf{c}) \right\}
%\end{split}
%\end{equation}
\end{Def}
Notice that $\overline{H}_{\text{RM}}(f_k,t_m;\tilde{\mathbf{c}})$ can  be replaced by $\overline{H}_{\text{RM,KT}}(f_k,t_m;\tilde{\mathbf{c}})$ when the inputs are $y_{\text{KT}}(f_k,t_m)$ which are the outputs of the KT on echoes $y(f_k,t_m)$.
\begin{Rem}
Compared to the traditional PC, the GIFT  performs an extra procedure for the compensation of the envelope shift  caused by velocity, acceleration and acceleration rate. Hence, the GIFT can be regarded as a generalized PC method implemented in the range frequency-pulse domain. After the GIFT, the target will fall into the same range bin among multiple pulses.
\end{Rem}

Further, for notational simplicity, we  define a modified GIFT i.e.,
\begin{equation}\label{Definition-modified-GIFT}
{\small\begin{split}
&\text{mGIFT}_{y(f_k,t_m)}(\tilde{\mathbf{c}})\\
=&\text{IFFT}\left\{y(f_k,t_m) \, \overline{H}_{\text{RM}}(f_k,t_m; \tilde{\mathbf{c}})  \right\} H_{\text{pre-DFM}}(t_m;\tilde{\mathbf{c}})
\end{split}}
\end{equation}
where
\begin{equation}\label{pre-DFM}
{\small\begin{split}
H_{\text{pre-DFM}}(t_m;\tilde{\mathbf{c}})=\!\!\left\{\begin{array}{lc}
\!\!\exp\!\!\left[j\frac{4\pi}{\lambda}\!  \sum_{p=1}^P \tilde{c}_{p} t_m^p \right],& \!\!\text{GRFT }\\
\!\!\exp\!\!\left[j\frac{4\pi}{\lambda}\!  \tilde{c}_{2} t_m^2 \right],&\!\!\text{KT-MFP}.
\end{array}
\!\!\right.
\end{split}}
\end{equation}
The modified GIFT differs from the original one only by a phase $H_{\text{pre-DFM}}(t_m;\tilde{\mathbf{c}})$. However, this difference does not affect the IFFT operation.
Indeed, this phase difference can be treated as a pre-compensation for the DFM effects.
\subsection{Dual-scale decomposition of motion parameters }
For the purpose of decoupling motion parameters and enhance the freedom of range compensation and Doppler compensation, we propose a  dual-scale decomposition of  the compensation parameters as follows,
\begin{equation}\label{Partition-Parameter}
{\small\begin{split}
\tilde{c}_{p}&=\tilde{c}_{p,c}+\tilde{c}_{p,f}, \,\,\,\,\, p=1,\cdots, P
\end{split}}
\end{equation}
where $\tilde{c}_{p,c}$ is the coarse motion parameter of $\tilde{c}_{p}$ corresponding to the RM correction and $\tilde{c}_{p,f}$ is the residual part of $c_p$ called fine motion parameter corresponding to the DFM correction.

According to (\ref{RM-DeltaR}), to avoid RM, the coarse motion parameter should be set as
\begin{equation}\label{Definition-of-coarse}
{\small\begin{split}
\tilde c_{p,c} &=\text{round}\left(\frac{c_{p}-c_{p,\min}}{\Delta c_{p,c}}\right) \Delta c_{p,c},+c_{p,\min}
\end{split}}
\end{equation}
with  $\Delta c_{p,c}$ satisfying $\Delta c_{p,c}= \Delta_{\text{RM},p}$. On the other hand, according to (\ref{RM-Deltafd}), to avoid DFM, the fine motion parameter should be set as
\begin{equation}
{\small\begin{split}
\label{Definition-of-fine} \tilde{c}_{p,f} =&\text{round}\left(\frac{c_{p}-\tilde{c}_{p,c}+\Delta_{p,c}/2}{\Delta c_{p,f}}\right) \Delta c_{p,f}-\frac{\Delta_{p,c}}{2}\\
%=&\text{round}\left(\frac{c_{p}-c_{p,c}}{\Delta c_{p,f}}+\frac{f_c}{2f_s}\right) \Delta c_{p,f}-\frac{\Delta_{p,c}}{2}
%\label{Definition-of-fine} c_{p,f} =\text{round}\left(\frac{c_{p}-c_{p,c}-\Delta_{p,c}/2}{\Delta c_{p,f}}\right) \Delta c_{p,f}-\frac{\Delta_{p,c}}{2}
\end{split}}
\end{equation}
with $\Delta c_{p,f}$ satisfying $\Delta c_{p,f}= \Delta_{\text{DRM},p}$.

Correspondingly,  the search spaces  of the coarse part of  target motion parameters are defined as
\begin{equation}
{\small\begin{split}\label{Space_c_c}
\mathbb{C}_{p,c}\triangleq [c_{p,\text{min}}:\Delta c_{p,c}:c_{p,\text{max}}], \,\,\,\,\, p=1,\cdots,P
\end{split}}
\end{equation}
while  the search spaces $\mathbb{C}_{p,f}$ of the fine part are defined as
\begin{equation}
{\small\begin{split}\label{Space_c_f}
\mathbb{C}_{p,f}\triangleq [-\Delta c_{p,c}/2:\Delta c_{p,f}:\Delta c_{p,c}/2], \,p=1,\cdots,P.
\end{split}}
\end{equation}
 The dual-scale search space is illustrated in Fig. \ref{Duale-scale}.
 \begin{figure}[h!]
\centering
\centerline{\includegraphics[width=8cm]{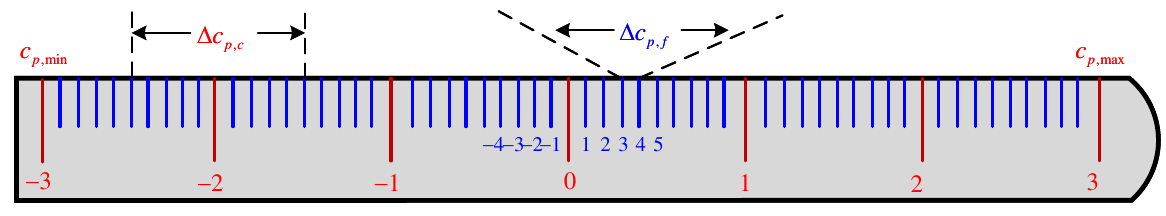}}
\caption{Dual-scale parameter decomposition.}
\label{Duale-scale}
\end{figure}

Notice that, in the case of the KT-MFP detector, the dual-scale decomposition of the velocity variable $\tilde{c}_{1}$ follows the same form of (\ref{velocity_Amb}), i.e.,
\begin{equation}\label{Partition-Parameter-c1}
{\small\begin{split}
\tilde{c}_{1}=\tilde{q}V_a+\tilde{c}'_{1}.
\end{split}}
\end{equation}
Correspondingly,  the search spaces of $\tilde{q}$ and $\tilde{c}'_{1}$ are defined as
\begin{equation}\label{Space_q}
\mathbb{Q}\triangleq [q_{\text{min}}:1:q_{\text{max}}]
\end{equation}
and
\begin{equation}\label{Space_baseband_velocity}
{\small\begin{split}
\mathbb{C}'_{1}\triangleq [-V_a/2:\Delta c_{p,f}:V_a/2].
\end{split}}
\end{equation}
 \begin{Rem}
Due to the increase of the step size, the size of the search space $\mathbb{C}_{p,c}$ of the coarse part is reduced to $f_s/f_c$ of  $\mathbb{C}_p$. Meanwhile, the step size of the fine part remains unchanged, but the search scope has been reduced to $\Delta c_{p,c}$ which is just the step size of the coarse parameter.
\end{Rem}
%\begin{equation}
%{\begin{split}
%v_c &=\text{round}\left(\frac{v}{\Delta v_{c}}\right) \Delta v_{c}  ,\,\,\, v_f=\text{round}\left(\frac{v-v_c}{\Delta v_{f}}\right) \Delta v_{f} \\
%\alpha_c &=\text{round}\left(\frac{\alpha}{\Delta \alpha_{c}}\right) \Delta\alpha_{c}  ,\,\,\, \alpha_f=\text{round}\left(\frac{\alpha-\alpha_c}{\Delta \alpha_{f}}\right) \Delta\alpha_{f}\\
%\beta_c &=\text{round}\left(\frac{\beta}{\Delta \beta_{c}}\right) \Delta\beta_{c},\,\,\,\beta_f=\text{round}\left(\frac{\beta-\beta_c}{\Delta \beta_{f}}\right) \Delta\beta_{f}\\
%\end{split}}
%\end{equation}
\subsection{Factorization of the GRFT detector}
Utilizing the dual-scale decomposition of motion parameters, we propose now a dual-scale implementation of the GRFT detector, which can dramatically decrease the computational complexity.

% Definition 1 concerns GIFT which can be considered as a generalized PC procedure.

\begin{Pro}
Given the dual-scale decomposition in (\ref{Partition-Parameter}) - (\ref{Definition-of-fine}), the following holds:
\begin{equation}\label{GIFT-coarse}
{\small\begin{split}
&\emph{GIFT}_{y(f_k,t_m)}(\tilde{\mathbf{c}}_c)\\
\approx&A_4(\tau_n)\text{asinc}\left[B_r \left(\tau_{n}\!-\!\frac{2 c_0}{c}\right)\right]\\
&\times\exp\!\!\left[\!-j\frac{4\pi}{\lambda}  {\sum}_{p=1}^P (\tilde{c}_{p,c}+\kappa c_{p,f})  t_m^p \right]+\overline{w}'(\tau_n,t_m)\\
=&\emph{GIFT}_{y(f_k,t_m)}(\tilde{\mathbf{c}})\exp\left[j\frac{4\pi}{\lambda}{\sum}_{p=1}^P \frac{B_r}{2f_c} \, \tilde{c}_{p,f} t_m^p \right]
\end{split}}
\end{equation}
where:
 $A_4(\tau_n)\!\triangleq\!A'_2\! \, \exp\{j\pi B_r (\tau_n-2c_0/c)\}$;
 $\tilde{\mathbf{c}}_c\!\triangleq\![\tilde{c}_{0},\tilde{c}_{1,c},\cdots,\tilde{c}_{P,c}]$; $c_{p,f}\triangleq c_p- \tilde{c}_{p,c}$  for $p\!=\!1,\! \cdots\!,\! P$; $\kappa\triangleq 1-\frac{B_r}{2f_c}$;
  $\overline{w}'\!(\tau_n,t_m)\!\triangleq\!\emph{IFFT}\!\left\{\!w(f_k,t_m) \, \overline{H}_{\emph{RM}}(f_k,t_m; \tilde{\mathbf{c}}) \!\right\}\!H_\emph{pre-DFM}(t_m;\!\tilde{\mathbf{c}}_c)$ with $\overline{H}_{\emph{RM}}(\cdot)$ and $\overline H_\emph{pre-DFM}(\cdot)$ taking the forms (\ref{H-RM}) and (\ref{pre-DFM}) respectively.
 %and $\mathbf{c}_f=(c_{1,f},\cdots,c_{P,f})$.
\end{Pro}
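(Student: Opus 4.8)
The plan is to compute $\text{GIFT}_{y(f_k,t_m)}(\tilde{\mathbf{c}}_c)$ directly in the range-frequency / slow-time domain and then pass to range-time via the inverse FFT, with a single controlled approximation licensed by the coarse step-size budget. First I would substitute the slant-range model (\ref{slant range}) into (\ref{Rx-OFDM-Fre1}) and multiply the noise-free part by the coarse RM matched filter $\overline{H}_{\text{RM}}(f_k,t_m;\tilde{\mathbf{c}}_c)=\exp[j\tfrac{4\pi}{c}f_k\sum_{p=1}^P\tilde{c}_{p,c}t_m^p]$ from (\ref{Definition-GIFT}). Collecting the $f_k$-linear terms, the coarse parameters cancel against $\sum_p c_p t_m^p$ up to the residual $\sum_{p=1}^P c_{p,f}t_m^p$ with $c_{p,f}=c_p-\tilde{c}_{p,c}$, so the effective two-way delay entering the IFFT is $\tfrac{2}{c}(c_0+\sum_p c_{p,f}t_m^p)$, while the leftover carrier phase regroups into $\exp[-j\tfrac{4\pi}{\lambda}\sum_p c_p t_m^p]=\exp[-j\tfrac{4\pi}{\lambda}\sum_p(\tilde{c}_{p,c}+c_{p,f})t_m^p]$. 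Taking the IFFT over $f_k$ and invoking the discrete-spectrum model (\ref{received signal}) then yields the aliased-sinc envelope $\text{asinc}\big[B_r(\tau_n-\tfrac{2}{c}(c_0+\sum_p c_{p,f}t_m^p))\big]$; its companion envelope phase $\exp\{j\pi B_r(\tau_n-\tfrac{2}{c}(c_0+\sum_p c_{p,f}t_m^p))\}$, which is exactly the extra term flagged in Remark 1 and absent from the idealized sinc model; the slow-time Doppler phase; and the transformed noise, which together with the pre-DFM phase $H_{\text{pre-DFM}}(t_m;\tilde{\mathbf{c}}_c)$ gives $\overline{w}'(\tau_n,t_m)$.

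Next I would bound the residual RM using (\ref{RM-DeltaR}) with $\Delta c_{p,c}=\Delta_{\text{RM},p}$: $\big|\sum_p c_{p,f}t_m^p\big|\le\sum_p\Delta_{\text{RM},p}T^p/2\le\Delta R/2$, so the envelope stays inside its range bin over the integration time and $\text{asinc}[B_r(\tau_n-\tfrac{2}{c}(c_0+\sum_p c_{p,f}t_m^p))]\approx\text{asinc}[B_r(\tau_n-2c_0/c)]$. The envelope phase, by contrast, is phase-sensitive and must be retained: I would write it as $\exp\{j\pi B_r(\tau_n-2c_0/c)\}\exp\{-j\tfrac{2\pi B_r}{c}\sum_p c_{p,f}t_m^p\}$, absorb the first factor into $A_4(\tau_n)\triangleq A'_2\exp\{j\pi B_r(\tau_n-2c_0/c)\}$, and use $\tfrac{2\pi B_r}{c}=\tfrac{4\pi}{\lambda}\cdot\tfrac{B_r}{2f_c}$ to read the second factor as a Doppler-type phase with coefficient $\tfrac{B_r}{2f_c}$ on the residual. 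Combining it with the genuine Doppler phase $\exp[-j\tfrac{4\pi}{\lambda}\sum_p(\tilde{c}_{p,c}+c_{p,f})t_m^p]$ collapses the residual's coefficient to $\kappa=1-\tfrac{B_r}{2f_c}$, producing $\exp[-j\tfrac{4\pi}{\lambda}\sum_p(\tilde{c}_{p,c}+\kappa c_{p,f})t_m^p]$ and hence the first claimed identity; consistency with the continuous/sinc limit (\ref{Asinc_and_Sinc}), where the envelope phase disappears and $\kappa\to1$, is a useful sanity check.

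For the second identity I would run the same steps with the full search vector $\tilde{\mathbf{c}}$ in place of $\tilde{\mathbf{c}}_c$; since the fine residual $c_p-\tilde{c}_p$ is bounded by $\sum_p\Delta_{\text{DFM},p}T^p/2\ll\Delta R$, the envelope-freezing is a fortiori valid, giving $\text{GIFT}_{y(f_k,t_m)}(\tilde{\mathbf{c}})\approx A_4(\tau_n)\text{asinc}[B_r(\tau_n-2c_0/c)]\exp[-j\tfrac{4\pi}{\lambda}\sum_p(\tilde{c}_p+\kappa(c_p-\tilde{c}_p))t_m^p]$. Dividing the two expressions, $A_4(\tau_n)$ and the envelope cancel and the exponents telescope — using $\tilde{c}_p-\tilde{c}_{p,c}=\tilde{c}_{p,f}$ and $(c_p-\tilde{c}_{p,c})-(c_p-\tilde{c}_p)=\tilde{c}_{p,f}$ — to $\exp[-j\tfrac{4\pi}{\lambda}(\kappa-1)\sum_p\tilde{c}_{p,f}t_m^p]=\exp[j\tfrac{4\pi}{\lambda}\tfrac{B_r}{2f_c}\sum_p\tilde{c}_{p,f}t_m^p]$, which is the stated factorization. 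I expect the main obstacle to be the bookkeeping in the IFFT and envelope steps: correctly carrying the aliased-sinc envelope phase (which the usual sinc model drops) and converting $\pi B_r/c$ into a $\tfrac{4\pi}{\lambda}$-scaled coefficient so that the $\kappa=1-B_r/(2f_c)$ distortion of the residual-DFM phase is exposed, while applying the ``freeze the envelope but expand its phase'' split consistently with the coarse-versus-fine step budget of (\ref{RM-DeltaR})--(\ref{RM-Deltafd}).
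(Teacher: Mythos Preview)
Your proposal is correct and follows essentially the same route as the paper's own proof: substitute the coarse RM filter into the GIFT, evaluate the IFFT to obtain the asinc envelope together with its discrete-spectrum companion phase, freeze the envelope via the coarse step-size bound $\bigl|\sum_p c_{p,f}t_m^p\bigr|\le\Delta R/2$ while retaining and re-expressing the envelope phase as a $\tfrac{B_r}{2f_c}$-scaled Doppler term, and then compare the coarse and full GIFT expressions to extract the $\exp\!\big[j\tfrac{4\pi}{\lambda}\sum_p\tfrac{B_r}{2f_c}\tilde{c}_{p,f}t_m^p\big]$ factor. The paper's derivation and yours differ only in presentation (the paper writes out the intermediate equations explicitly rather than describing the telescoping), not in substance.
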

 %and $\!\tilde{c}_{p,f}\!=\!\tilde{c}_{p}-\tilde{c}_{p,c}\!$
\begin{proof}
Let $\overline{H}_{\text{RM}}(f_k,t_m;\tilde{\mathbf{c}}_c)$ denote the coarse compensation, i.e.,
\begin{equation}\label{H-RM-coarse}
{\small{\begin{split}
\overline{H}_{\text{RM}}(f_k,t_m;\tilde{\mathbf{c}}_c)&=\exp\left[j\frac{4\pi}{c} f_k \left({\sum}_{p=1}^P \tilde{c}_{p,c} t_m^p\right)\right].\\
\end{split}}}
\end{equation}
Then, substituting (\ref{H-RM-coarse}) into (\ref{Definition-GIFT}) yields
%\begin{equation}\label{GIFT-fine}
%{\begin{split}
%&\text{GIFT}_{y(f,t_p)}(v,\alpha)=\\
%&\text{IFFT}\left\{\!\exp[j \frac{4 \pi f}{c}  (R_0\!\!-\!\!(v\!\!-\!\!v_c\!\!-\!\!v_f)t_p+(\alpha\!\!-\!\!\alpha_c\!\!-\!\!\alpha_f)t_p^2)]\!\right\}\\
%\end{split}}
%\end{equation}
\begin{equation}\notag
%\label{GIFT-fine-GRFT}
{\small{\begin{split}
&\,\,\,\,\,\,\,\,\,\text{GIFT}_{y(f_k,t_m)}(\tilde{\mathbf{c}}_c)\\
&=\text{IFFT}\left\{y(f_k,t_m)\overline{H}_{\text{RM}}(f_k,t_m;\tilde{\mathbf{c}}_c)\right\}\\
&=\text{IFFT}\left\{\!A_1\exp\left[j \frac{4 \pi f_k}{c}  \left(c_0\!\!+{\sum}_{p=1}^{P} (c_p-\tilde{c}_{p,c})t_m^p\right)\right]\!\right\}\\
&\,\,\,\,\,\,\,\,\,\,\,\,\times\exp\left[j \frac{4 \pi }{\lambda}  \left({\sum}_{p=0}^{P} c_p t_m^p\right)\right]+\overline{w}'(\tau_n,t_m)\\
&=A'_2\text{asinc}\left[B_r \left(\tau_n\!-2\left(c_0\!+{\sum}_{p=1}^P c_{p,f} t_m^p\right)/c\right)\right]\\
&\,\,\,\,\,\,\,\,\,\times \exp\left[j\pi B_r \left(\tau_n\!-2(c_0+{\sum}_{p=1}^P (c_p-\tilde{c}_{p,c}) t_m^p)/c\right)\right]\\
&\,\,\,\,\,\,\,\,\,\times\exp\!\!\left[\!-j\frac{4\pi}{\lambda}\!  {\sum}_{p=1}^P c_{p} t_m^p \right]+\overline{w}'(\tau_n,t_m)\\
\end{split}}}
\end{equation}
\begin{equation}\label{GIFT-fine-GRFT}
{\small{\begin{split}
&=A_4(\tau_n)\text{asinc}\left[B_r \left(\tau_n\!-2\left(c_0\!+{\sum}_{p=1}^P c_{p,f} t_m^p\right)/c\right)\right]\\
&\times\exp\!\!\left[\!-j\frac{4\pi}{\lambda}\!\sum_{p=1}^P c_{p} t_m^p \right]\!\!\exp\left[j\pi B_r\!\!{\sum}_{p=1}^P 2(c_p-\tilde{c}_{p,c}) t_m^p/c\right]\\
&\,\,\,\,\,\,\,\,\,\,\,\,\,\,\,\,\,\,\,\,\,\,\,\,\,\,\,\,\,\,\,\,\,\,\,\,\,\,\,\,\,\,\,\,\,\,\,\,\,\,\,\,\,\,\,\,\,\,\,\,\,\,\,\,\,\,\,\,\,\,\,\,\,\,\,\,\,\,\,\,
\,\,\,\,\,\,\,\,\,\,\,\,\,\,\,\,\,\,\,\,\,\,\,\,\,\,\,\,+\overline{w}'(\tau_n,t_m).
\end{split}}}
\end{equation}
%where $A_3(\tau_n)=A_1\exp\left\{j\pi B_r (\tau_n\!-2c_0/c)\right\}$ and $\kappa=1-\frac{B_r}{4f_c}$.

Clearly, according to the definition of fine compensations in (\ref{Definition-of-fine}), we can obtain the following inequality,
  \begin{equation}\label{Residual_RM_DeltaR}
 {\small\begin{split}
\left|  \sum_{p=1}^P c_{p,f} t_m^p\right|  & \leq \sum_{p=1}^P \left |c_{p,f}\right| T^p \\
&  \leq \sum_{p=1}^P \Delta c_{p,c} T^p\big/2\\
&\leq\Delta R\big/2.
\end{split}}
\end{equation}
%Then, the following holds
%$$
%\text{asinc}[B_r (\tau_n-\frac{2(c_0\!+\sum_{p=1}^P c_{p,f} t_m^p)}{c})]\approx\text{asinc}[B_r (\tau_n-\frac{2c_0}{c})]
%$$
Applying (\ref{Residual_RM_DeltaR}), the following approximation holds:
\begin{equation}\notag
{\small\begin{split}
\text{asinc}\left[B_r\!\left(\tau_n\!-\!2\left(c_0\!+\!\sum_{p=1}^P\!c_{p,f} t_m^p\right)\!/c\!\right)\right]\!\approx\!\text{asinc}\!\!\left[\!B_r(\!\tau_n\!-\!\frac{2 c_0}{c})\!\right].
\end{split}}
\end{equation}
Consequently, (\ref{GIFT-fine-GRFT}) can be further computed by
\begin{equation}\label{GRFT-GIFT-coarse-1}
{\small{\begin{split}
&\text{GIFT}_{y(f_k,t_m)}(\tilde{\mathbf{c}}_c)\\
\approx &A_4(\tau_n)\text{asinc}\!\!\left[\!B_r(\!\tau_n\!-\!\frac{2 c_0}{c})\!\right]\!\!\exp\!\!\left[\!-j\frac{4\pi}{\lambda}\!  {\sum}_{p=1}^P c_{p} t_m^p \right]\\
&\times\exp\left[j\pi B_r\!{\sum}_{p=1}^P 2(c_p-\tilde{c}_{p,c}) t_m^p/c\right]\!+\!\overline{w}'(\tau_n,t_m).\\
\end{split}}}
\end{equation}
Then, applying $c_{p,f}\triangleq c_{p}-\tilde{c}_{p,c}$ and $\kappa\triangleq 1-\frac{B_r}{2f_c}$, (\ref{GRFT-GIFT-coarse-1}) can be further simplified:
\begin{equation}\label{GRFT-GIFT-coarse-2}
{\small{\begin{split}
&\text{GIFT}_{y(f_k,t_m)}(\tilde{\mathbf{c}}_c)\\
= &\overline{w}'(\tau_n,t_m)+A_4(\tau_n)\text{asinc}\!\!\left[\!B_r(\!\tau_n\!-\!\frac{2 c_0}{c})\!\right]\\
&\,\,\,\,\,\,\,\,\,\,\,\,\,\,\,\,\,\,\,\,\,\,\,\,\,\,\,\,\,\,\,\,\,\,\times\exp\!\!\left[\!-j\frac{4\pi}{\lambda}\!  {\sum}_{p=1}^P (\tilde{c}_{p,c}+\kappa c_{p,f}) t_m^p \right].\\
\end{split}}}
\end{equation}

On the other hand, $\text{GIFT}_{y(f_k,t_m)}(\tilde{\mathbf{c}})$  can be written as:
\begin{equation}\label{GRFT-GIFT-complete}
{\small{\begin{split}
&\text{GIFT}_{y(f_k,t_m)}(\tilde{\mathbf{c}})\\
=&A_4(\tau_n)\text{asinc}\!\!\left[\!B_r(\!\tau_n\!-\!\frac{2 c_0}{c})\!\right]\!\!\exp\!\!\left[\!-j\frac{4\pi}{\lambda}\!  {\sum}_{p=1}^P c_{p} t_m^p \right]\\
&\,\,\,\,\,\times\exp\left[j\pi B_r\!\!{\sum}_{p=1}^P 2(c_p-\tilde{c}_{p}) t_m^p/c\right]+\overline{w}'(\tau_n,t_m).\\
\end{split}}}
\end{equation}
Comparing (\ref{GRFT-GIFT-coarse-1}) with (\ref{GRFT-GIFT-complete}), we have
\begin{equation}
{\small{\begin{split}
&\text{GIFT}_{y(f_k,t_m)}(\tilde{\mathbf{c}}_c)\\
\approx &\text{GIFT}_{y(f_k,t_m)}(\tilde{\mathbf{c}})\exp\left[j\frac{4\pi}{\lambda}{\sum}_{p=1}^P \frac{B_r}{2f_c}\tilde{c}_{p,f} t_m^p \right].
\end{split}}}
\end{equation}
%Thus, the relationship between $\text{mGIFT}_{y(f_k,t_m)}(\tilde{\mathbf{c}}_c)$ and $\text{GIFT}_{y(f_k,t_m)}(\tilde{\mathbf{c}})$ is
%\begin{equation}
%\begin{split}
%&\text{mGIFT}_{y(f_k,t_m)}(\tilde{\mathbf{c}}_c)\exp\left[-j\frac{4\pi}{\lambda}{\sum}_{p=1}^P \frac{B_r}{2f_c}c_{f,p} t_m^p \right]\\
%=&\text{GIFT}_{y(f_k,t_m)}(\tilde{\mathbf{c}})H_{\text{pre-DFM}}(t_m;\tilde{\mathbf{c}}_c)
%\end{split}
%\end{equation}
\end{proof}

Proposition 1 has two implications:
\begin{itemize}
 \item After the coarse compensation, the motion parameters leading to the DFM can be factorized into the  expected coarse motion parameters $\tilde{c}_{p,c}$ and scaled residual ones $\kappa c_{p,f}=\left(1-\frac{B_r}{2f_c}\right) c_{p,f}$, $p=1,\cdots,P$.
  \item By the dual-scale decomposition (\ref{Partition-Parameter}),  the  RM effect can be eliminated by only  utilizing the coarse compensation coefficient $\overline{H}_{\text{RM}}(f_k,t_m;\tilde{\mathbf{c}}_c)$, instead of the complete compensation coefficient $\overline{H}_{\text{RM}}(f_k,t_m;\tilde{\mathbf{c}})$. Notice that compared with the complete compensation, the coarse compensation   induces another term of  phase $\exp\left[j\frac{4\pi}{\lambda}{\sum}_{p=1}^P \frac{B_r}{2f_c}\tilde{c}_{p,f} t_m^p \right]$ which  can be used to compensate the DFM caused by $-\frac{B_r}{2f_c} c_{p,f}$.
      %, which is only related to the .  Hence,
 %which clearly can be used to compensate the DFM caused by $-\frac{B_r}{2f_c} c_{p,f}$ later
  \end{itemize}
 % 2) Furthermore, compared with the complete compensation, the coarse compensation  induces another term of  phase $\exp\left[j\frac{4\pi}{\lambda}{\sum}_{p=1}^P \frac{B_r}{2f_c}\tilde{c}_{p,f} t_m^p \right]$ which clearly can be used to compensate the DFM caused by $-\frac{B_r}{2f_c} c_{p,f}$ later.
% [Luigi: Say better without repeating "compensating the compensation coefficient"]
%Eq. (\ref{GIFT-fine-GRFT}) shows that the fast-time signal of DS-GRFT ($P=2$) is $A_1\exp\{j\pi B_r (\tau-\frac{2(c_0\!+\sum_{p=1}^P c_{p,f} t_m^p)}{c})\}
%\text{asinc}[B_r (\tau-\frac{2(c_0\!+\sum_{p=1}^P c_{p,f} t_m^p)}{c})]$, which can be considered as the modulation of a time-varying and
%{\color{red} none-\text{sine} function modulation????}.
\begin{Pro}\label{GRFT2Two-Step}
As long as $\frac{f_c/f_s}{M}\leq 1$ holds, we have
 \begin{equation}\label{DS-GRFT}
{\small{\begin{split}
\emph{GRFT}_{y(f_k,t_m)}(\tilde{\mathbf{c}})\approx \emph{GFT}_{\emph{mGIFT}_{y(f_k,t_m)}(\tilde{\mathbf{c}}_c)}(\tilde{\mathbf{c}}'_f)
\end{split}}}
\end{equation}
where: $\tilde{\mathbf{c}}=[\tilde{c}_{0}, \tilde{c}_{1},\cdots,\tilde{c}_{P}]$; $\tilde{\mathbf{c}}_c=[\tilde{c}_{0}, \tilde{c}_{1,c},\cdots,\tilde{c}_{P,c}]$; $\tilde{\mathbf{c}}_f=[\tilde{c}_{1,f},\cdots,\tilde{c}_{P,f}]$; $\tilde{\mathbf{c}}_f'=[\kappa\tilde{c}_{1,f},\cdots,\kappa\tilde{c}_{P,f}]$.
\end{Pro}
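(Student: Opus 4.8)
The plan is to reorganize the double sum in the frequency-domain statistic $(\ref{standard-GRFT-FD})$ so that the inner sum over $f_k$ becomes a GIFT, invoke Proposition~1 to replace the full range compensation by the coarse one, and then re-read the remaining sum over $t_m$ as a GFT acting on a modified GIFT. First I would observe that $\overline{H}_{\text{DFM}}(t_m;\tilde{\mathbf{c}})$ and $\exp[j\frac{4\pi}{\lambda}\tilde{c}_1 t_m]$ are independent of $f_k$; pulling them out of the sum over $k$ and recognizing the surviving bracket as $\text{GIFT}_{y(f_k,t_m)}(\tilde{\mathbf{c}})$ (Definition~2) yields the exact identity $\text{GRFT}_{y(f_k,t_m)}(\tilde{\mathbf{c}})=\sum_{m=1}^M \text{GIFT}_{y(f_k,t_m)}(\tilde{\mathbf{c}})\,\overline{H}_{\text{DFM}}(t_m;\tilde{\mathbf{c}})\exp[j\frac{4\pi}{\lambda}\tilde{c}_1 t_m]$, where by $(\ref{H-DFM})$ the phase factor equals $\exp[j\frac{4\pi}{\lambda}\sum_{p=1}^P\tilde{c}_p t_m^p]$.

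Next I would substitute the factored form of Proposition~1, namely $\text{GIFT}_{y(f_k,t_m)}(\tilde{\mathbf{c}})\approx\text{GIFT}_{y(f_k,t_m)}(\tilde{\mathbf{c}}_c)\exp[-j\frac{4\pi}{\lambda}\sum_{p=1}^P\frac{B_r}{2f_c}\tilde{c}_{p,f}t_m^p]$, and regroup the $t_m$-phases. Using the dual-scale decomposition $(\ref{Partition-Parameter})$, $\tilde{c}_p=\tilde{c}_{p,c}+\tilde{c}_{p,f}$, the exponent collapses: $\sum_p(\tilde{c}_p-\frac{B_r}{2f_c}\tilde{c}_{p,f})t_m^p=\sum_p\tilde{c}_{p,c}t_m^p+\sum_p\kappa\tilde{c}_{p,f}t_m^p$ with $\kappa=1-\frac{B_r}{2f_c}$. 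The factor $\text{GIFT}_{y(f_k,t_m)}(\tilde{\mathbf{c}}_c)\exp[j\frac{4\pi}{\lambda}\sum_p\tilde{c}_{p,c}t_m^p]$ is exactly $\text{mGIFT}_{y(f_k,t_m)}(\tilde{\mathbf{c}}_c)$ by $(\ref{Definition-modified-GIFT})$–$(\ref{pre-DFM})$ with $H_{\text{pre-DFM}}$ taken in its GRFT form, and the residual sum $\sum_m\text{mGIFT}_{y(f_k,t_m)}(\tilde{\mathbf{c}}_c)\exp[j\frac{4\pi}{\lambda}(\kappa\tilde{c}_{1,f}t_m+\sum_{p\ge2}\kappa\tilde{c}_{p,f}t_m^p)]$ is precisely $\text{GFT}_{\text{mGIFT}_{y(f_k,t_m)}(\tilde{\mathbf{c}}_c)}(\tilde{\mathbf{c}}'_f)$ in the sense of Definition~1, with baseband-velocity argument $\kappa\tilde{c}_{1,f}$, higher-order arguments $\kappa\tilde{c}_{p,f}$, and $\tilde{\mathbf{c}}'_f=[\kappa\tilde{c}_{1,f},\dots,\kappa\tilde{c}_{P,f}]$. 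Chaining these equalities with the single $\text{asinc}$-envelope approximation already proved in Proposition~1 via $(\ref{Residual_RM_DeltaR})$ produces $(\ref{DS-GRFT})$; the noise component of the two sides is obtained from the same white Gaussian field by unitary IFFTs and unit-modulus phase multiplications, so it requires no approximation.

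The step I expect to be the real obstacle is the very last identification, and this is where the hypothesis $\frac{f_c/f_s}{M}\le1$ enters: for the outer sum over pulses to be a genuine GFT — realizable as a single $M$-point FFT — its baseband-velocity argument must remain inside the unambiguous Doppler window $[-\text{PRF}/2,\text{PRF}/2]$, equivalently inside $[-V_a/2,V_a/2]$. Here that argument is $\kappa\tilde{c}_{1,f}$ with $\tilde{c}_{1,f}$ ranging over $\mathbb{C}_{1,f}$, whose half-width is $\Delta c_{1,c}/2=\Delta_{\text{RM},1}/2=\alpha_1 c/(4f_sT)$, so the induced Doppler shift satisfies $|2\kappa\tilde{c}_{1,f}/\lambda|\le\kappa\alpha_1 f_c\,\text{PRF}/(2f_sM)$; this stays below $\text{PRF}/2$ exactly when $\kappa\alpha_1(f_c/f_s)/M\le1$, which holds because $\kappa<1$, $\alpha_1\le1$, and $\frac{f_c/f_s}{M}\le1$. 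I would also note that the higher-order fine parameters $\tilde{c}_{p,f}$, $p\ge2$, enter the GFT through the explicit multiply $\overline{H}_{\text{DFM}}$ rather than through the FFT, so they impose no constraint — only the velocity term does — and that everything else in the argument is routine bookkeeping of unit-modulus phases.
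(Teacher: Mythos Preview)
Your proposal is correct and follows essentially the same route as the paper: rewrite the GRFT as $\sum_m \text{GIFT}_{y(f_k,t_m)}(\tilde{\mathbf{c}})$ times the slow-time phase $\exp[j\frac{4\pi}{\lambda}\sum_p\tilde{c}_p t_m^p]$, invoke Proposition~1 to swap $\text{GIFT}(\tilde{\mathbf{c}})$ for $\text{GIFT}(\tilde{\mathbf{c}}_c)$ at the cost of the extra phase $\exp[-j\frac{4\pi}{\lambda}\sum_p\frac{B_r}{2f_c}\tilde{c}_{p,f}t_m^p]$, absorb the coarse part into $\text{mGIFT}$ via $H_{\text{pre-DFM}}$, and read the residual $\kappa\tilde{c}_{p,f}$ phases as a GFT with argument $\tilde{\mathbf{c}}'_f$. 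Your treatment of the hypothesis $\frac{f_c/f_s}{M}\le 1$ is the same as the paper's (and slightly more careful in retaining the factor $\alpha_1$, which the paper silently sets to $1$ in its computation $\Delta c_{1,c}=c/(2f_sT)$); the only cosmetic difference is that the paper splits the full slow-time phase into three named pieces $H_{\text{pre-DFM}}(t_m;\tilde{\mathbf{c}}_c)\,\overline{H}_{\text{DFM}}(t_m;\tilde{\mathbf{c}}_f)\,\exp[j\frac{4\pi}{\lambda}\tilde{c}_{1,f}t_m]$ before regrouping, whereas you keep it as a single exponential throughout.
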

\begin{proof}
Substituting (\ref{Partition-Parameter}) into (\ref{standard-GRFT-FD}), we have
%\begin{figure*}[bt]
 \begin{equation}\label{TS-GRFT-step1}
{\small{\begin{split}
&\text{GRFT}_{y(f_k,t_m)}(\tilde{\mathbf{c}})\\
=&{\sum}_{m=1}^{M}\text{IFFT}\left\{y(f_k,t_m) \exp \left[j\frac{4\pi}{c}f_k {\sum}_{p=1}^P\tilde{c}_{p} t_m^p\right]\right\}\\
&\,\,\,\,\,\,\,\,\,\,\,\,\,\,\,\,\,\,\,\,\,\,\times \exp \left[j\frac{4\pi}{\lambda} \left({\sum}_{p=1}^P(\tilde{c}_{p,c}+\tilde{c}_{p,f}) t_m^p\right)\right]\\
=& \sum_{m=1}^{M}\text{GIFT}_{y(f_k,t_m)}(\tilde{\mathbf{c}})H_{\text{pre-DFM}}(t_m;\tilde{\mathbf{c}}_c)\\
&\,\,\,\,\,\,\,\,\,\,\,\,\,\,\,\,\,\,\,\,\,\,\,\,\,\,\,\,\,\,\,\,\,\,\,\,\,\,\,\,\,\,\,\times\overline{H}_{\text{DFM}}(t_m;\!\tilde{\mathbf{c}}_f)\exp[j\frac{4\pi}{\lambda}\tilde{c}_{1,f}t_m].\\
%\approx& \!\sum_{m=1}^{M}\text{mGIFT}_{y(f_k,t_m)}(\tilde{\mathbf{c}}_c)\overline{H}_{\text{DFM}}(t_m;\!\tilde{\mathbf{c}}_f)\exp[j\frac{4\pi}{\lambda}\tilde{c}_{1,f}t_m]\\
\end{split}}}
\end{equation}
%Based on (\ref{TS-GRFT-step1}), exploiting the decomposition of  the compensation coefficient in (\ref{Partition-Parameter}), the elimination procedure of the DFM effects is also divided into two steps, i.e., first eliminate the DFM effects caused by the coarse part of the target motion parameter, and then  the fine part.

Further, making use of Proposition 1, (\ref{TS-GRFT-step1}) can be represented as
\begin{equation}\label{TS-GRFT-step2}
{\small{\begin{split}
&\text{GRFT}_{y(f_k,t_m)}({\color{magenta}{\tilde{\mathbf{c}}}})\\
\!\approx &\sum_{m=1}^{M}\!\text{GIFT}_{y(f_k,t_m)}(\tilde{\mathbf{c}}_c)\exp\!\left[\!-j\frac{4\pi}{\lambda}{\sum}_{p=1}^P \frac{B_r}{2f_c}\tilde{c}_{p,f} t_m^p \right]\\
&\,\,\,\,\,\,\,\times H_{\text{pre-DFM}}(t_m;\tilde{\mathbf{c}}_c)\overline{H}_{\text{DFM}}(t_m;\!\tilde{\mathbf{c}}_f)\exp[j\frac{4\pi}{\lambda}\tilde{c}_{1,f}t_m]\\
\!=&\!\sum_{m=1}^{M}\!\text{mGIFT}_{y(f_k,t_m)}\!(\tilde{\mathbf{c}}_c)\!\exp\!\left[\!-j\frac{4\pi}{\lambda}{\sum}_{p=1}^P \frac{B_r}{2f_c}\tilde{c}_{p,f} t_m^p \right]\\
&\,\,\,\,\,\,\,\,\,\,\,\,\,\,\,\,\,\,\,\,\,\,\,\,\,\,\,\,\,\,\,\,\,\,\,\,\,\,\,\,\,\,\,\,\,\,\,\times\overline{H}_{\text{DFM}}(t_m;\!\tilde{\mathbf{c}}_f)\exp[j\frac{4\pi}{\lambda}\tilde{c}_{1,f}t_m]\\
\!=&\!\sum_{m=1}^{M}\!\text{mGIFT}_{y(f_k,t_m)}\!(\tilde{\mathbf{c}}_c)\!\overline{H}_{\text{DFM}}(t_m;\!\tilde{\mathbf{c}}_f')\!\exp[j\frac{4\pi}{\lambda}\tilde{c}_{1,f}'t_m] .
%=&\sum_{m=1}^{M}A_3(\tau_n)\text{asinc}\left[B_r \left(\tau_n\!-\frac{2 c_0}{c}\right)\right]\exp\!\!\left[\!-j\frac{4\pi}{\lambda}\!\sum_{p=1}^P c_{p,f} t_m^p \right]\\
%&\times\exp\left[j\frac{4\pi}{\lambda}{\sum}_{p=1}^P \frac{B_r}{2f_c}(c_p-\tilde{c}_{p,c}-\tilde{c}_{p,f}) t_m^p\right]\\
%&\times \overline{H}_{\text{DFM}}(t_m;\!\tilde{\mathbf{c}}_f)\exp[j\frac{4\pi}{\lambda}\tilde{c}_{1,f}t_m]+\overline{w}'(\tau_n,t_m)\\
%=&\sum_{m=1}^{M}A_3(\tau_n)\text{asinc}\left[B_r \left(\tau_n\!-2c_0/c\right)\right]\exp\!\!\left[\!-j\frac{4\pi}{\lambda}\!\sum_{p=1}^P c_{p,f} t_m^p \right]\\
%&\!\!\exp\left[j\frac{4\pi}{\lambda}{\sum}_{p=1}^P (1-\frac{B_r}{2f_c})\tilde{c}_{p,f} t_m^p\right]+\overline{w}'(\tau_n,t_m)\\
\end{split}}}
\end{equation}
 Eq. (\ref{TS-GRFT-step2})  shows conditioned on the coarse compensation procedures for RM  and DFM effects are completed, the residual fine compensation procedure for the DFM is independent of the previous procedure.

%clearly shows that the common part of  motion parameters for eliminating RM and DFM effects becomes only the coarse compensation parameter. ????
%For the purpose of  concise presentation, we further define  the modified GIFT  by
% \begin{equation}\label{Modified-GIFT-coarse}
%\begin{split}
%&\,\,\,\,\,\,\,\,\text{GIFT}_{y(f_k,t_m)}(\mathbf{c}_c)\\
%& =\text{IFFT}\left\{y(f_k,t_m) \overline{H}_{\text{RM}}(f_k,t_m;\mathbf{c}_c)\right\}\overline{H}_{\text{DFM}}(t_m;\mathbf{c}_c)
%\end{split}
%\end{equation}
%
%Substituting (\ref{Modified-GIFT-coarse}) into (\ref{TS-GRFT-step2}) yields
% \begin{equation}\label{GRFT2TwoStep-3}
%\begin{split}
%&\,\,\,\,\,\,\,\,\text{GRFT}_{y(f_k,t_m)}(\mathbf{c})\\
%&= \sum_{m=1}^{M} \text{GIFT}_{y(f_k,t_m)}(\mathbf{c}_c)\overline{H}_{\text{DFM}}(t_m;\mathbf{c}_f)\exp[j\frac{4\pi}{\lambda}c_{1,f} t_m]
%\end{split}
%\end{equation}
Notice that the RHS of (\ref{TS-GRFT-step2}) has the form of  $\sum_{m=1}^{M} f(x) \exp[j 2\pi\frac{2 x}{\lambda} t_m]$ which can be regarded as a DFT procedure. Only when $-V_{\text{a}}/2<\tilde{c}'_{1,f}<V_{\text{a}}/2$, this expression can be implemented by FFT efficiently. Thus, it is necessary to discuss  the relationship between $\tilde{c}'_{1,f}$ and $V_{\text{a}}$.

Clearly, we have $0<\kappa<1$. Then, according to (\ref{Definition-of-fine}) and (\ref{Space_c_f}), the fine part of the target velocity satisfies
 \begin{equation}\label{base-band-velocity}
{\small{\begin{split}
|\tilde{c}'_{1,f}| < |\tilde{c}_{1,f}|\leq\Delta c_{1,c}/2.
\end{split}}}
\end{equation}
Further, $\Delta c_{1,c}$ can be computed by
 \begin{equation}\label{delta-c1c}
{\small{\begin{split}
 \Delta c_{1,c}=\frac{c}{2 f_s T}=\frac{c \text{PRF}}{2 f_c}\frac{f_c/f_s}{M}=\frac{f_c/f_s}{M} V_{\text{a}} \, .
\end{split}}}
\end{equation}
Clearly, when the condition $\frac{f_c/f_s}{M} \leq 1$ holds, we have
\begin{equation}
{\small{\begin{split}
 \Delta c_{1,c} \leq V_{\text{a}}
\end{split}}}
\end{equation}
and thus
 \begin{equation}
{\small{\begin{split}\label{Range-of-Velocity}
| \tilde{c}'_{1,f}|<  \Delta c_{1,c}/2< V_{\text{a}}/2.
\end{split}}}
\end{equation}
Hence, under the condition $\frac{f_c/f_s}{M}\leq 1$, based on Definition 2, the GRFT procedure can be decomposed as
 \begin{equation}\label{GRFT2TwoStep-4}
{\small{\begin{split}
\text{GRFT}_{y(f_k,t_m)}(\tilde{\mathbf{c}})\approx\text{GFT}_{\text{mGIFT}_{y(f_k,t_m)}(\tilde{\mathbf{c}}_c)}(\tilde{\mathbf{c}}_f').
\end{split}}}
\end{equation}
%\begin{equation}
%\begin{split}
%V_{\text{max}}=\frac{c \text{PRF}}{2 f_c }
%\end{split}
%\end{equation}
%\end{figure*}
\end{proof}
%Eq. (\ref{GRFT2TwoStep-4}) shows that the standard GRFT procedure is degraded into a generalized PC and a generalized MTD in cascade mode. Specifically, the generalized PC is implemented by the GIFT procedure on the coarse searching space, and the generalized MTD is implemented  by the GFT procedure on the fine searching space, respectively.

According to Proposition 2, by utilizing the dual-scale decomposition (\ref{Partition-Parameter}), the range-Doppler joint GRFT can be factorized into a GIFT process in the range domain and GFT processes in Doppler domain conditioned on the coarse motion parameter if $\frac{f_c/f_s}{M}\leq 1$.
Thanks to this appealing property, the joint correction of RM and DFM effects is decoupled into a cascade procedure, first RM correction on the coarse search space followed by DFM correction on the fine search spaces.  In this respect, this decoupled GRFT procedure is called \textit{Dual-Scale GRFT} (DS-GRFT).

We observe that the condition $\frac{f_c/f_s}{M}\leq 1$ can easily hold in practice. In fact:
\begin{itemize}
  \item  the number of pulses $M$ for LTCI is usually large, satisfying  $M\geq 256$ in practice;
  \item as far as  the ratio $f_c/f_s$ is concerned, for   high-resolution radars the bandwidth $B_r$ can range from hundreds of MHz to a few GHz ($f_s>B_r$ according to the
  Shannon-Nyquist sampling theorem), while for  low-resolution radars, which are usually used for long-range forewarning, the commonly used  carrier frequency $f_c$ is at L-band or S-band.
\end{itemize}
Overall, the condition $f_c/f_s\leq M$ can hold in most practical cases.

\begin{Rem}
According to (\ref{Partition-Parameter}), the fine part of target velocity $c_{1,f}$ falls into $[-\Delta c_{1,c}/2,\, \Delta c_{1,c}/2]$,  while the corresponding  fine velocity spectrum  estimated by $\emph{GRFT}_{y(f_k,t_m)}(\tilde{\mathbf{c}})$  is within $[-V_a/2,\, V_a/2]$. Thus, under the condition  $\frac{f_c/f_s}{M}\leq 1$, one needs to truncate $\emph{GRFT}_{y(f_k,t_m)}(\tilde{\mathbf{c}})$ beyond the domain $[-\Delta c_{1,c}/2,\, \Delta c_{1,c}/2]$.
For some applications which cannot satisfy the condition $\frac{f_c/f_s}{M} \leq 1$, the dual decomposition given in (\ref{TS-GRFT-step2}) still holds. However, since it cannot  be guaranteed  that the fine part of the velocity  satisfies (\ref{base-band-velocity}), the DFM correction cannot be efficiently implemented via FFT.
\end{Rem}

 According to \cite{GRFT}, the \textit{ML estimation} (MLE) of target motion parameters $\bf c$ is given by
\begin{equation}\label{MLE-GRFT-FD}
{\small{\begin{split}
[\hat{\bf c}_c,\hat{\bf c}_f']
%=&\arg \max_{\bf c_c\in \mathbb {C}_c;\bf c_f\in \mathbb {C}_f} \text{GRFT}_{y(f_k,t_m)}(\mathbf{c})\\
=&\arg\max_{\tilde{\bf c}_c\in \mathbb {C}_c;\tilde{\bf c}_f'\in \kappa\mathbb {C}_f}  |\text{GFT}_{\text{mGIFT}_{y(f_k,t_m)}(\tilde{\mathbf{c}}_c)}(\tilde{\mathbf{c}}_f')|
\end{split}}}
\end{equation}
where: $\hat{\bf c}_c=[\hat c_0,\cdots, \hat c_{P,c}]$; $\hat{\bf c}_f'=[\hat c'_{1,f},\cdots,\hat c'_{P,f}]$; $\mathbb{C}_c=\mathbb{C}_0\times \cdots\times \mathbb{C}_{P,c}$ and $\mathbb{C}_f=\mathbb{C}_{1,f}\times \cdots\times \mathbb{C}_{P,f}$. Hence, the  target motion estimation is $\hat{\mathbf{c}}=[\hat c_0, \hat c_1,\cdots,\hat c_P]$ with $\hat{c}_p=\hat{c}_{p,c}+\hat{c}'_{p,f}/\kappa$ for $p\geq 1$. After the cascade procedure of RM and DFM compensations, a moving target can be finely focused in the fast-time range and slow-time Doppler domain, i.e.,
\begin{equation}\label{focused-RD}
{\small{\begin{split}
&\text{GFT}_{\text{mGIFT}_{y(f_k,t_m)}(\hat{\mathbf{c}}_c)}(\hat{\mathbf{c}}_f')\\
= &A_3\text{asinc}\left\{B \left(\tau_{\hat{n}}-\frac{2c_0}{c}\right)\right\} \\
&\times\text{asinc} \left\{\text{PRF}\left( \hat f_d(\hat c_{1,f}')-\frac{2 c_{1,f}'}{\lambda} \right) \right\}+\overline{w}'(\tau_n,t_m)\end{split}
}}
\end{equation}
where: $A_3$ denotes the  target complex attenuation after GIFT and GFT; $\hat{n}=\text{round}(2\hat{c}_0/cT_s)$ and $\hat f_d(\hat c_{1,f})$ is the Doppler frequency estimate, i.e., $\hat f_d(\hat c_{1,f})=\frac{2\hat c_{1,f}'}{\lambda}$.
\subsection{Factorization of  the  KT-MFP detector}
The decomposition (\ref{velocity_Amb}) can be  considered as another dual-scale composition into  baseband velocity $c'_1$ and  ambiguity velocity $qV_a$.  Thanks to the KT, the RM caused by the baseband velocity is first eliminated. Consequently, the  first order RM and Doppler frequency shift are caused by $q V_a$ and $c'_1$, respectively.
In this respect, corrections of first order RM and Doppler frequency shift are  performed  on dual-scale spaces. However,   since the higher-order RM and DFM  (say for $p\geq2$) are  related to the same motion parameters,  the corresponding correction for standard KT-MFP is still performed on the unified single-scaled space. To this end, we  further study  factorization of the KT-MFP detector utilizing the dual-scale  decomposition in (\ref{Partition-Parameter}).
\begin{Cor}
Given the dual-scale decomposition in (\ref{Partition-Parameter}) - (\ref{Definition-of-fine}), the following holds:
\begin{equation}\label{GIFT-KT-coarse}
{\small{\begin{split}
&\emph{GIFT}_{y_{\emph{KT}}(f_k,t_m)}(\tilde{\mathbf{c}}_c)\\
\approx&A_4(\tau_n)\emph{asinc}[B_r (\tau_n-2 c_0/c)]\exp\!\!\left[\!-j\frac{4\pi}{\lambda} c_{1} t_m \right]\\
&\,\,\,\,\,\,\,\,\,\,\,\,\,\,\,\,\,\times\exp\left[\!-j\frac{4\pi}{\lambda}(c_{2,c}+\kappa c_{2,f}) t_m^2 \right]+\overline{w}'(\tau_n,t_m)\\
=&\emph{GIFT}_{y_{\text{KT}}(f_k,t_m)}(\tilde{\mathbf{c}})\exp\left[-j\frac{4\pi}{\lambda}\frac{B_r}{2f_c}\tilde{c}_{2,f} t_m^2 \right]\end{split}}}
\end{equation}
where:
%$A_2(\tau_n)\triangleq\!A_2\exp\{j\pi B_r (\tau_n-\frac{2\tilde{c}_0}{c})\}$;
$\tilde{\mathbf{c}}_c\triangleq[\tilde{c}_{0},\tilde{q},\tilde{c}_{2,c}]$; $c_{p,f}\!\triangleq\!c_{p}-\tilde{c}_{p,c}$ for $ p=1, 2$; $\overline{w}'(\tau_n,t_m)\!\triangleq\!\text{IFFT}\left\{\overline{w}(f_k,t_m)\overline{H}_{\emph{KT},\emph{RM}}(f_k,t_m; \tilde{\mathbf{c}}_c) \right\} \overline H_\emph{pre-DFM}(t_m;\tilde{\mathbf{c}}_c)$, the compensation function $\overline{H}_{\emph{KT},\emph{RM}}(\cdot)$ taking the form (\ref{H-MFP-RM}) and $\overline H_\emph{pre-DFM}(\cdot)$ the form (\ref{pre-DFM}).
 %and $\mathbf{c}_f=(c_{1,f},\cdots,c_{P,f})$.
\end{Cor}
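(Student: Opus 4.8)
The plan is to treat this corollary as the KT-MFP counterpart of Proposition~1 and to replay that proof almost verbatim, with the raw echo $y(f_k,t_m)$ replaced by the keystone-processed echo $y_{\text{KT}}(f_k,t_m)$ of (\ref{Echo-after-KT}), the RM matching coefficient $\overline{H}_{\text{RM}}$ replaced by $\overline{H}_{\text{KT},\text{RM}}$ of (\ref{H-MFP-RM}), and the pre-DFM phase taken from the KT branch of (\ref{pre-DFM}) (which contains only $\tilde{c}_2 t_m^2$, since KT has already removed the linear RM caused by the baseband velocity). Only three things differ structurally from Proposition~1: (i) with $P=2$ just the quadratic parameter is given the dual-scale split $\tilde{c}_2=\tilde{c}_{2,c}+\tilde{c}_{2,f}$, while the velocity is handled through the ambiguity decomposition $\tilde{c}_1=\tilde{q}V_a+\tilde{c}'_1$ of (\ref{velocity_Amb})/(\ref{Partition-Parameter-c1}); (ii) the keystone resampling, through the Taylor-linearised factor $\overline{f}\approx 1-f_k/f_c$ baked into (\ref{Echo-after-KT}) and (\ref{H-MFP-RM}), flips the sign of the $f_k$-coupled quadratic term relative to the raw echo, which is exactly why the residual phase in (\ref{GIFT-KT-coarse}) carries a minus sign rather than the plus of Proposition~1; (iii) the analysis is carried out in the matched regime $\tilde{q}=q$ and $|\tilde{c}_2-c_2|\to0$, just as Proposition~1 is carried out for $|\tilde{c}_p-c_p|\to0$.

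First I would substitute $\overline{H}_{\text{KT},\text{RM}}(f_k,t_m;\tilde{\mathbf{c}}_c)=\exp[j\frac{4\pi}{c}f_k(\overline{f}\tilde{q}V_a t_m-\tilde{c}_{2,c}t_m^2)]$ into the GIFT definition (with the KT replacement noted right after Definition~2). With $\tilde{q}=q$ the $\overline{f}(q-\tilde{q})V_a t_m$ term vanishes, so the $f_k$-coupled phase of $y_{\text{KT}}(f_k,t_m)\overline{H}_{\text{KT},\text{RM}}(f_k,t_m;\tilde{\mathbf{c}}_c)$ collapses to $\exp[-j\frac{4\pi}{c}f_k(c_0-c_{2,f}t_m^2)]$ with $c_{2,f}\triangleq c_2-\tilde{c}_{2,c}$, while the $f_k$-independent Doppler phase is the unchanged $\exp[-j\frac{4\pi}{\lambda}(c'_1 t_m+c_2 t_m^2)]$; the identity $e^{-j4\pi qV_a t_m/\lambda}=1$ then lets me rewrite $c'_1$ as $c_1$. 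Next I would apply the IFFT-over-$f_k$ identity used to pass from (\ref{Rx-OFDM-Fre1}) to (\ref{received signal}) (and used in the proof of Proposition~1), namely $\text{IFFT}_{f_k}\{\exp[-j\frac{4\pi}{c}f_k X]\}\propto\text{asinc}[B_r(\tau_n-2X/c)]\exp[j\pi B_r(\tau_n-2X/c)]$, here with $X=c_0-c_{2,f}t_m^2$. Since $\Delta c_{2,c}=\Delta_{\text{RM},2}$, the bound $|c_{2,f}t_m^2|\leq\Delta R/2$ follows exactly as in (\ref{Residual_RM_DeltaR}), so $\text{asinc}[B_r(\tau_n-2(c_0-c_{2,f}t_m^2)/c)]\approx\text{asinc}[B_r(\tau_n-2c_0/c)]$ and the residual RM survives only as a slow-time phase $\exp[\,j\pi B_r\cdot2c_{2,f}t_m^2/c\,]=\exp[\,j\frac{4\pi}{\lambda}\frac{B_r}{2f_c}c_{2,f}t_m^2\,]$.

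Folding that leftover factor into $\exp[-j\frac{4\pi}{\lambda}c_2 t_m^2]$ and using $c_2=\tilde{c}_{2,c}+c_{2,f}$ together with $\kappa=1-\frac{B_r}{2f_c}$ turns the quadratic coefficient into $\tilde{c}_{2,c}+\kappa c_{2,f}$; absorbing the constant $A_1K_{\text{valid}}e^{-j4\pi c_0/\lambda}=A'_2$ into $A_4(\tau_n)=A'_2 e^{j\pi B_r(\tau_n-2c_0/c)}$ and carrying the noise along as $\overline{w}'(\tau_n,t_m)=\text{IFFT}\{\overline{w}(f_k,t_m)\overline{H}_{\text{KT},\text{RM}}(f_k,t_m;\tilde{\mathbf{c}}_c)\}\,\overline{H}_{\text{pre-DFM}}(t_m;\tilde{\mathbf{c}}_c)$ yields the first line of (\ref{GIFT-KT-coarse}). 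For the second line I would run the identical computation with the full coefficient $\overline{H}_{\text{KT},\text{RM}}(f_k,t_m;\tilde{\mathbf{c}})$, i.e.\ with $\tilde{c}_2=\tilde{c}_{2,c}+\tilde{c}_{2,f}$ replacing $\tilde{c}_{2,c}$, to obtain the analogous closed form for $\text{GIFT}_{y_{\text{KT}}(f_k,t_m)}(\tilde{\mathbf{c}})$; the two expressions share the same envelope and the same $c_0,c'_1,c_2$ dependence, so dividing them leaves precisely the residual phase $\exp[-j\frac{4\pi}{\lambda}\frac{B_r}{2f_c}\tilde{c}_{2,f}t_m^2]$, which is the second line of (\ref{GIFT-KT-coarse}).

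The hard part will be the bookkeeping of the keystone resampling rather than any single calculation: keeping straight which portion of the velocity ($qV_a$ versus $c'_1$) is annihilated mod $2\pi$ and which portion survives as a Doppler phase, and tracking both the \emph{sign} and the $\kappa$-scaling that the Taylor-linearised factor $\overline{f}\approx1-f_k/f_c$ induces on the quadratic term once it is pushed through the range IFFT. That is exactly the point where the corollary departs from Proposition~1 (minus versus plus in the residual phase), and it must be reconciled with the sign conventions already fixed in (\ref{Echo-after-KT})--(\ref{H-MFP-RM}); everything else is the same routine envelope/phase manipulation and the same residual-RM-negligibility argument as in the proof of Proposition~1.
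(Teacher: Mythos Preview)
Your proposal is correct and follows essentially the same route as the paper's own proof: substitute the coarse coefficient $\overline{H}_{\text{KT},\text{RM}}(f_k,t_m;\tilde{\mathbf{c}}_c)$ into the GIFT definition, carry out the range IFFT to obtain the $\text{asinc}$ envelope with residual argument $c_0\pm c_{2,f}t_m^2$, invoke $|c_{2,f}t_m^2|\leq\Delta c_{2,c}T^2/2\leq\Delta R/2$ to freeze the envelope at $\tau_n-2c_0/c$, fold the leftover $\exp[j\pi B_r\cdot 2c_{2,f}t_m^2/c]$ into the Doppler phase to produce the $\tilde{c}_{2,c}+\kappa c_{2,f}$ coefficient, and finally compare with the same computation done at the full $\tilde{c}_2$ to read off the $\exp[\mp j\tfrac{4\pi}{\lambda}\tfrac{B_r}{2f_c}\tilde{c}_{2,f}t_m^2]$ relation. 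The paper's proof is exactly this sequence, just written more tersely; in particular it also silently works in the matched regime $\tilde{q}=q$ (the linear $\overline{f}(q-\tilde{q})V_a t_m$ contribution simply never appears in its intermediate $\text{asinc}$ argument), so your making that assumption explicit is a clarification rather than a departure.
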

\begin{proof}
%Let $\overline{H}_{\text{KT},\text{RM}}(f_k,t_m;\mathbf{c}_c,q)$ denote the coarse compensation.
%\begin{equation}\label{H-KT-coarse}
%{\begin{split}
%\!\!\overline{H}_{\text{KT},\text{RM}}(f_k,t_m;\mathbf{c}_c,q)\!=\!\exp \left[j\frac{4\pi}{c}f_k\left(\overline{f}_k q V_a t_m\!+\!\sum^P_{p=2}c_{p,c} t_m^p\right)\right]
%\end{split}}
%\end{equation}
Substituting expressions of the coarse compensation $\overline{H}_{\text{KT},\text{RM}}(f_k,t_m;\tilde{\mathbf{c}}_c)$, $H_{\text{pre-DFM}}(t_m;\tilde{\mathbf{c}}_c)$ and $y_\text{KT}(f_k,t_m)$ into (\ref{Definition-GIFT}) yields
\begin{equation}\label{GIFT-KT-fine}
{\small{\begin{split}
&\text{GIFT}_{y_{\text{KT}}(f_k,t_m)}(\tilde{\mathbf{c}}_c)\\
=&\text{IFFT}\!\left\{y_{\text{KT}}(f_k,t_m)\overline{H}_{\text{KT},\text{RM}}(f_k,t_m;\tilde{\mathbf{c}}_c)\right\}\\
=&A_4(\tau_n)\text{asinc}\left[B_r \left(\tau_n-2(c_0+ c_{2,f} t_m^2)/c\right)\right]\\
&\,\,\,\,\,\,\,\,\,\,\,\,\,\,\,\,\,\,\,\,\,\,\,\,\,\,\times\exp\left\{j\pi B_r 2(c_2-\tilde{c}_{2,c}) t_m^2/c\right\}\\
&\,\,\,\,\,\,\,\,\,\,\,\,\,\,\,\,\,\,\,\,\,\,\,\,\times\exp\!\!\left[\!-j\frac{4\pi}{\lambda}\!  {\sum}_{p=1}^2 c_{p} t_m^p \right]+\overline{w}'(\tau_n,t_m)\, .
\end{split}}}
\end{equation}
Similarly, we have $|c_{2,f} t_m^2|\leq|\Delta c_{2,c} T^2/2|\leq \Delta R\big/2$, then (\ref{GIFT-KT-fine}) can be further expressed as
\begin{equation}\label{GIFT-KT-coarse-2}
{\small{\begin{split}
&\text{GIFT}_{y_{\text{KT}}(\!f_k,t_m\!)}(\tilde{\mathbf{c}}_c)\\
\approx&A_4(\tau_n)\text{asinc}[B_r (\tau_n\!-\!2 c_0/c)]\!\exp\!\!\left[\!-j\frac{4\pi}{\lambda}\!  {\sum}_{p=1}^2 c_{p} t_m^p \right]\\
&\,\,\,\,\,\,\,\,\,\,\,\,\,\times\exp\left\{j\pi B_r 2(c_2-\tilde{c}_{2,c}) t_m^2/c\right\}+\overline{w}'(\tau_n,t_m)\\
=&A_4(\tau_n)\text{asinc}[B_r (\tau_n-2 c_0/c)]\exp\!\!\left[\!-j\frac{4\pi}{\lambda} c_{1} t_m \right]\\
&\,\,\,\,\,\,\,\,\,\,\,\,\times\exp\left[\!-j\frac{4\pi}{\lambda}(c_{2,c}+\kappa c_{2,f}) t_m^2 \right]+\overline{w}'(\tau_n,t_m).
\end{split}}}
\end{equation}

On the other hand, $\text{GIFT}_{y_{\text{KT}}(f_k,t_m)}(\tilde{\mathbf{c}})$ can be rewritten as:
\begin{equation}
{\small{\begin{split}
&\text{GIFT}_{y_{\text{KT}}(f_k,t_m)}(\tilde{\mathbf{c}})\\
=&A_4(\tau_n)\text{asinc}\!\!\left[\!B_r(\!\tau_n\!-\!\frac{2 c_0}{c})\!\right]\!\!\exp\!\!\left[\!-j\frac{4\pi}{\lambda}\!  {\sum}_{p=1}^2 c_{p} t_m^p \right]\\
&\,\,\,\,\,\,\,\,\,\,\,\,\,\times\exp\left[j\pi B_r 2(c_2-\tilde{c}_{2}) t_m^2/c\right]+\overline{w}'(\tau_n,t_m)\\
\approx&\text{GIFT}_{y_{\text{KT}}(f_k,t_m)}(\tilde{\mathbf{c}}_c)\exp\left[-j\frac{4\pi}{\lambda}\frac{B_r}{2f_c}\tilde{c}_{2,f} t_m^2 \right].
\end{split}}}
\end{equation}
%Thus, the relationship between $\text{mGIFT}_{y_{\text{KT}}(f_k,t_m)}(\tilde{\mathbf{c}}_c)$ and $\text{GIFT}_{y_{\text{KT}}(f_k,t_m)}(\tilde{\mathbf{c}})$ is
%\begin{equation}
%\begin{split}
%&\text{mGIFT}_{y_{\text{KT}}(f_k,t_m)}(\tilde{\mathbf{c}}_c)\exp\left[-j\frac{4\pi}{\lambda}\frac{B_r}{2f_c}\tilde{c}_{2,f} t_m^2 \right]\\
%=&\text{GIFT}_{y_{\text{KT}}(f_k,t_m)}(\tilde{\mathbf{c}})H_{\text{pre-DFM}}(t_m;\tilde{\mathbf{c}}_{2,c})
%\end{split}
%\end{equation}
\end{proof}
\begin{Pro}\label{KT-GFT2Two-Step}
%The Range-Doppler joint KT-GRT can be degraded into GIFT process in Range domain with coarse compensation and the GFT process in Doppler domain with fine compensation condition on coarse compensation.
Given the echo $y_{\emph{KT}}(f_k,t_m)$ of the form (\ref{Echo-after-KT}) in range-frequency and slow-time domain after KT, the following holds:
 \begin{equation}\label{DS-KT-MFP}
{\small{\begin{split}
\emph{KT-MFP}_{y_{\emph{KT}}(f_k,t_m)}(\tilde{\mathbf{c}})
\approx \emph{GFT}_{\emph{mGIFT}_{y_{\emph{KT}}(f_k,t_m)}(\tilde{\mathbf{c}}_c)}(\tilde{\mathbf{c}}_f')
\end{split}}}
\end{equation}
where: $\tilde{\mathbf{c}}=[\tilde{c}_{0},\tilde{c}_{1},\tilde{c}_{2}]$; $\tilde{\mathbf{c}}_c=[\tilde{c}_{0},\tilde{q},\tilde{c}_{2,c}]$; $\tilde{\mathbf{c}}_f=[\tilde{c}'_{1}, \tilde{c}_{2,f}]$; $\tilde{\mathbf{c}}_f'=[\tilde{c}'_{1},\kappa\tilde{c}_{2,f}]$.
\end{Pro}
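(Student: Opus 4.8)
The plan is to mirror the argument used for Proposition~2, but starting from the GRFT-form of the KT-MFP statistic already established in (\ref{KT-GRFT-GRFT}), namely $\text{KT-MFP}_{y_{\text{KT}}(f_k,t_m)}(\tilde{\mathbf{c}})=\text{GRFT}_{y_{\text{KT}}(f_k,t_m)}(\tilde{\mathbf{c}})$, and then inserting the dual-scale decomposition. For the KT-MFP case only the acceleration is split into a coarse and a fine part via (\ref{Partition-Parameter}), while the velocity has already been split into $\tilde{q}V_a+\tilde{c}'_1$ through (\ref{Partition-Parameter-c1}), with the baseband velocity $\tilde{c}'_1$ playing the role of the first-order fine parameter. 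First I would write $\text{GRFT}_{y_{\text{KT}}(f_k,t_m)}(\tilde{\mathbf{c}})=\sum_{m=1}^{M}\text{IFFT}\{y_{\text{KT}}(f_k,t_m)\,\overline{H}_{\text{KT},\text{RM}}(f_k,t_m;\tilde{\mathbf{c}})\}\exp[j\frac{4\pi}{\lambda}(\tilde{c}'_1 t_m+\tilde{c}_2 t_m^2)]$, substitute $\tilde{c}_2=\tilde{c}_{2,c}+\tilde{c}_{2,f}$ both inside the KT-RM filter (\ref{H-MFP-RM}) and inside the DFM exponential, and then peel the coarse filter $\overline{H}_{\text{KT},\text{RM}}(f_k,t_m;\tilde{\mathbf{c}}_c)$ off, together with the pre-DFM phase $H_{\text{pre-DFM}}(t_m;\tilde{\mathbf{c}}_c)=\exp[j\frac{4\pi}{\lambda}\tilde{c}_{2,c} t_m^2]$ (the KT-MFP branch of (\ref{pre-DFM})), so as to form $\text{mGIFT}_{y_{\text{KT}}(f_k,t_m)}(\tilde{\mathbf{c}}_c)$ inside the sum.

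Next I would invoke Corollary~1 in the form $\text{GIFT}_{y_{\text{KT}}(f_k,t_m)}(\tilde{\mathbf{c}}_c)\approx\text{GIFT}_{y_{\text{KT}}(f_k,t_m)}(\tilde{\mathbf{c}})\exp[-j\frac{4\pi}{\lambda}\frac{B_r}{2f_c}\tilde{c}_{2,f} t_m^2]$, i.e.\ to convert the coarse GIFT into the modified GIFT times a residual second-order phase. The leftover phase collected from the acceleration is then $\exp[-j\frac{4\pi}{\lambda}\frac{B_r}{2f_c}\tilde{c}_{2,f} t_m^2]\exp[j\frac{4\pi}{\lambda}\tilde{c}_{2,f} t_m^2]=\exp[j\frac{4\pi}{\lambda}\kappa\tilde{c}_{2,f} t_m^2]=\overline{H}_{\text{DFM}}(t_m;\tilde{\mathbf{c}}_f')$ with $\kappa=1-\frac{B_r}{2f_c}$, whereas the velocity contributes $\exp[j\frac{4\pi}{\lambda}\tilde{c}'_1 t_m]$ unchanged, with no $\kappa$ factor, because the KT has already removed the velocity-induced RM so the baseband velocity never enters $\overline{H}_{\text{KT},\text{RM}}$. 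Collecting terms yields $\text{KT-MFP}_{y_{\text{KT}}(f_k,t_m)}(\tilde{\mathbf{c}})\approx\sum_{m=1}^{M}\text{mGIFT}_{y_{\text{KT}}(f_k,t_m)}(\tilde{\mathbf{c}}_c)\,\overline{H}_{\text{DFM}}(t_m;\tilde{\mathbf{c}}_f')\exp[j\frac{4\pi}{\lambda}\tilde{c}'_1 t_m]$, which is exactly $\text{GFT}_{\text{mGIFT}_{y_{\text{KT}}(f_k,t_m)}(\tilde{\mathbf{c}}_c)}(\tilde{\mathbf{c}}_f')$ by Definition~\ref{Definition-GFT}, with $\tilde{\mathbf{c}}_f'=[\tilde{c}'_1,\kappa\tilde{c}_{2,f}]$ and $\tilde{\mathbf{c}}_c=[\tilde{c}_0,\tilde{q},\tilde{c}_{2,c}]$.

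The step I expect to require the most care is the bookkeeping of the first-order term and the check that the resulting sum is a bona fide FFT-implementable GFT. Unlike Proposition~2, where one must impose $\frac{f_c/f_s}{M}\le 1$ to guarantee that the rescaled fine velocity $\kappa\tilde{c}_{1,f}$ stays inside $[-V_a/2,V_a/2]$, here the first-order phase is carried solely by the baseband velocity $\tilde{c}'_1$, which is defined to be unambiguous from the outset ($\tilde{c}'_1\in[-V_a/2,V_a/2]$ by (\ref{Space_baseband_velocity})); hence no extra condition is needed and the GFT in $\tilde{c}'_1$ maps directly onto an FFT over slow time, which is why Proposition~3 carries no hypothesis on $f_c/f_s$ and $M$. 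I would also verify that the only approximation used is the $\text{asinc}$-envelope approximation underlying Corollary~1, namely $|c_{2,f}t_m^2|\le\Delta R/2$ following from $\Delta c_{2,c}=\Delta_{\text{RM},2}$ and the estimate (\ref{Residual_RM_DeltaR}), so that the ``$\approx$'' in the statement has the same meaning as in Corollary~1, and that the KT-domain noise term $\overline{w}'(\tau_n,t_m)$ passes consistently through the IFFT/FFT cascade.
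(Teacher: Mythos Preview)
Your proposal is correct and follows essentially the same route as the paper: expand $\text{KT-MFP}_{y_{\text{KT}}}(\tilde{\mathbf c})$ in its GRFT form, split $\tilde c_2=\tilde c_{2,c}+\tilde c_{2,f}$, invoke Corollary~1 to trade $\text{GIFT}_{y_{\text{KT}}}(\tilde{\mathbf c})$ for $\text{GIFT}_{y_{\text{KT}}}(\tilde{\mathbf c}_c)$ plus the residual $\exp[-j\tfrac{4\pi}{\lambda}\tfrac{B_r}{2f_c}\tilde c_{2,f}t_m^2]$ phase, absorb $H_{\text{pre-DFM}}(t_m;\tilde c_{2,c})$ into $\text{mGIFT}$, and combine the leftover second-order phases into $\overline H_{\text{DFM}}(t_m;\kappa\tilde c_{2,f})$. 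Your observation that no $\tfrac{f_c/f_s}{M}\le 1$ hypothesis is needed here---because the first-order fine parameter is the baseband velocity $\tilde c_1'\in[-V_a/2,V_a/2]$ by construction, so the slow-time sum is automatically FFT-implementable---is exactly the point the paper makes at the end of its proof.
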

\begin{proof}
Substituting (\ref{Partition-Parameter}) into (\ref{KT-GRFT}) yields
 \begin{equation}\label{KT-GRFT2TwoScal-step1}
{\small{\begin{split}
&\text{KT-MFP}_{y_{\text{KT}}(f_k,t_m)}(\tilde{\mathbf{c}})\\
=&\sum_{m=1}^{M}\!\text{IFFT}\!\!\left\{\!y_{\text{KT}}(f_k,t_m)\!\exp\!\!\left[\!j\frac{4\pi}{c}f_k(\overline{f}\tilde{q} V_a t_m\!-\!\tilde{c}_{2} t_m^2)\right]\!\right\}\\
&\,\,\,\,\,\,\,\,\,\,\,\,\,\,\,\,\,\,\,\,\times\exp \left[j\frac{4\pi}{\lambda} (\tilde{c}_{2,c}\!+\!\tilde{c}_{2,f}) t_m^2\right]\!\exp\left[\!j\frac{4\pi}{\lambda}\tilde{c}'_1t_m\right]\\
=&\sum_{m=1}^{M}\text{GIFT}_{y_{\text{KT}}(f_k,t_m)}(\tilde{\mathbf{c}})H_{\text{pre-DFM}}(t_m;\tilde{c}_{2,c})\\
&\,\,\,\,\,\,\,\,\,\,\,\,\,\,\,\,\,\,\,\,\,\,\,\,\,\,\,\,\,\,\,\,\,\,\,\,\,\,\,\,\times\overline{H}_{\text{DFM}}(t_m;\tilde{c}_{2,f})\!\exp\left[\!j\frac{4\pi}{\lambda}\tilde{c}'_1t_m\right]
\end{split}}}
\end{equation}
Further, applying \textbf{Corollary 1}, (\ref{KT-GRFT2TwoScal-step1})  can be further  rewritten as
 \begin{equation}\label{KT-GRFT2TwoScal-step2}
{\small{\begin{split}
&\text{KT-MFP}_{y_{\text{KT}}(f_k,t_m)}(\tilde{\mathbf{c}})\\
\!\!\!&\approx\sum_{m=1}^{M}\! \text{GIFT}_{y_{\text{KT}}(f_k,t_m)}\!(\tilde{\mathbf{c}}_c)\exp\left[-j\frac{4\pi}{\lambda}\frac{B_r}{2f_c}\tilde{c}_{2,f} t_m^2 \right]\\
&\,\,\,\times H_{\text{pre-DFM}}(t_m;\tilde{c}_{2,c})\overline{H}_{\text{DFM}}(t_m;\tilde{c}_{2,f}\!)\exp \!\left[ j\frac{4\pi}{\lambda} \tilde{c}_1' t_m \!\right]\\
\!\!\!=&\!\sum_{m=1}^{M}\!\!\text{mGIFT}_{y_{\text{KT}}(f_k,t_m)}\!(\tilde{\mathbf{c}}_c\!)\overline{H}_{\text{DFM}}(t_m;\!\tilde{c}_{2,f}'\!) \!\exp \!\!\left[j\!\frac{4\pi}{\lambda} \tilde{c}_1' t_m \!\right].
\end{split}}}
\end{equation}
Since $|c_1'|\leq V_{a}/2$,  (\ref{KT-GRFT2TwoScal-step2}) can be efficiently implemented using the GFT according to Definition 2, i.e.,
\begin{equation}
\begin{split}
\text{KT-MFP}_{y_{\text{KT}}(f_k,t_m)}(\tilde{\mathbf{c}})\approx\text{GFT}_{\text{mGIFT}_{y_{\text{KT}}(f_k,t_m)}(\tilde{\mathbf{c}}_c)}(\tilde{\mathbf{c}}_f').
\end{split}
\end{equation}
\end{proof}
According to Proposition 3,  the KT-MFP procedure can be also factorized into a GIFT process in the range domain on the coarse search space and  a  GFT process in Doppler domain conditioned on the coarse motion parameter. Then, the joint correction of RM and DFM effects is decoupled into a cascade procedure, first RM correction on the coarse search space followed by  DFM correction on the fine search spaces.  This decoupled KT-MFP procedure is called \textit{Dual-Scale KT-MFP} (DS-KT-MFP).

For the DS-KT-MFP,  the MLE of target motion parameters $\bf c$ is given by
\begin{equation}\label{LRT-KT-MFP}
{\small{\begin{split}
[\hat{\bf c}_c,\hat{\bf c}_f']
\!=\!\arg\max_{\tilde{\bf c}_c\in \mathbb {C}_c;\tilde{\bf c}'_f\in \kappa\mathbb {C}_f}  |\text{GFT}_{\text{mGIFT}_{y_\text{KT}(f_k,t_m)}(\tilde{\mathbf{c}}_c)}(\tilde{\mathbf{c}}_f')|
\end{split}}}
\end{equation}
where: $\hat{\bf c}_c=[\hat c_0,\hat q,\hat c_{2,c}]$; $\hat{\bf c}_f'=[\hat c'_1,\hat c_{2,f}']$; $\mathbb{C}_c=\mathbb{C}_0\times \mathbb{Q}\times\mathbb{C}_{2,c}$ and $\mathbb{C}_f=\mathbb{C}'_{1}\times \mathbb{C}_{2,f}$. Hence, the  target motion estimation is $\hat{\mathbf{c}}=[\hat c_0, \hat q V_a+\hat c'_1, \hat c_{2,c}+\hat{\mathbf{c}}'_f/\kappa]$.
%After the KT procedure and the cascade procedure of RM and DFM corrections, a moving target can be finely focused in fast-time range and slow-time Doppler domain, i.e.,
%\begin{equation}\label{focused-RD-KT-MFP}
%{\small{\begin{split}
%&\text{GFT}_{\text{mGIFT}_{y_{\text{KT}}(f_k,t_m)}(\hat{\mathbf{c}}_c)}(\hat{\mathbf{c}}_f')\\
%=&A_6\text{asinc} \!\left\{\!B \!\left( \tau_{n(\hat c_0)}\!-\!2 c_0/c \right)\!\right\} \!\text{asinc}\! \left\{\!\text{PRF} \!\left(\hat f_d(\hat{c}'_{1})\!-\!2 c'_{1}/\lambda \right)\!\right\}\\
%&\,\,\,\,\,\,\,\,\,\,\,\,\,\,\,\,\,\,\,\,\,\,\,\,\,\,\,\,\,\,\,\,\,\,\,\,\,\,\,\,\,\,\,\,\,\,\,\,\,\,\,\,\,\,\,\,\,\,\,\,\,\,\,\,\,\,\,\,\,\,\,\,\,\,\,\,\,\,\,\,\,\,\,\,\,\,\,\,\,\,\,\,\,\,\,
%\,\,\,\,\,\,\,\,\,\,\,+\overline{w}'(\tau_n,t_m)
%\end{split}}}
%\end{equation}
%where $A_6$ denotes the  target complex attenuation after GIFT and GFT, and $\hat f_d(\hat c'_{1})$ denotes the Doppler frequency estimation, i.e., $\hat f_d(\hat c'_{1})=\frac{2\hat c'_{1}}{\lambda}$.
\begin{Rem}
As to the case of multiple targets, the signal model in (\ref{Rx-OFDM-Fre1}) should be re-expressed as follows:
\begin{equation}\label{Rx-OFDM-Multi-Target}
{\small{\begin{split}
\!\!\!y(f_k,t_m)\!\!=\!\!\sum_{\ell=1}^L\! \!A_1^{(\ell)} \emph{exp} \left[ -j\frac{4\pi}{c}\!(f_k\!+\!f_c) R^{(\ell)}(t_m) \right]\!+\!w(f_k,t_m)
\end{split}}}
\end{equation}
where $L$ denotes the number of targets.
Then, the DS-GRFT and  DS-KT-MFP  detectors have the same form of  (\ref{DS-GRFT}) and (\ref{DS-KT-MFP}), respectively, by replacing the input signal by (\ref{Rx-OFDM-Multi-Target}).
\end{Rem}

\subsection{Advantages of DS-GRFT family}
Compared to the standard GRFT family, the DS-GRFT family yields the following advantages.
\begin{itemize}
\item
Thanks to the proposed dual-scale decomposition, the RM correction  of the DS-GRFT family is more computationally efficient than that of standard ones, since  the number of  IFFT operations used for RM correction  is greatly reduced. Specifically, for the DS-GRFT, such a number is reduced  of a factor  $(f_c/f_s)^2$  with respect to the standard one when $P=2$.

\item Moreover, a range spectrum with no RM effect is also obtained in advance, thus giving more freedom
for the subsequent DFM compensation. Based on this range spectrum, one can correct the
DFM  within   the scope of range bins of interest, rather than within the whole range scope, which can also help to save execution time.

\item
Again, due to the dual-scale decomposition, the coarse motion parameters for the DFM correction can be pre-compensated at an early stage. Then, for  the residual fine motion parameter compensation,  the search space is tailored into a much smaller scope $[-V_a/2,V_a/2$]. As such,  the DFM correction can be implemented using  efficient FFT operations rather than MF operations, greatly reducing the computational burden.
\end{itemize}

Besides reducing the computational complexity, the family of DS-GRFT detectors also provides implementation advantages with respect to the single-scale space implementation.  The decoupling of the joint correction greatly improves the freedom of algorithm implementation, e.g., one can adopt different algorithms, in place of GIFT or GFT, for the RM and  DFM corrections,  considering practical requirements such as  detection performance, computational efficiency, etc.

\section{Implementation Issues }

\subsection{Pseudo-code of the proposed DS-GRFT detector family}
This subsection provides pseudo-codes of the proposed family of DS-GRFT detectors. According to Propositions 2 and 3, both DS-GRFT and DS-KT-MFP involve two cascade procedures, i.e.,  mGIFT compensation with respect to the coarse motion parameters followed by GFT compensation with respect to the fine motion parameters.
Hence, we first provide pseudo-codes of  mGIFT and GFT operations, in Algorithms 1 and 2, respectively.  Then, pseudo-codes of the DS-GRFT and DS-KT-MFP (taking, for instance, the order of motion $P=2$) are given in Algorithms 3 and 4, respectively.
Note that Algorithm 3 can be easily extended to DS-GRFT with $P>2$. As for the DS-KT-MFP with $P=3$, it also can be easily extended by the combination of \cite{KT-MFP-Three-Order} and Algorithm 4.
\begin{algorithm}[htb]\label{algorithm: LM-GCI-LMB}
\caption{$\text{mGIFT}$}
\small
{\underline{INPUT:}  Baseband signal   $y(f_k,t_m)$ of form (\ref{Rx-OFDM-Fre2}) or (\ref{Echo-after-KT}); coarse motion parameters $\mathbf{c}_c$\;
\underline{OUTPUT:} Signal $y(\tau,t_m;\mathbf{c}_{c})$ after coarse compensation\;
\textbf{function}  $\text{mGIFT}(y(\tau,t_m),\mathbf{c}_{c})$\\
Construct the coarse compensation  $H_{\text{RM}}(f_k,t_m;\mathbf{c}_{c})$  via (\ref{H-RM}) or (\ref{H-MFP-RM})\;
Construct the pre-compensation   $H_{\text{pre-DFM}}(t_m;\mathbf{c}_{c})$  via (\ref{pre-DFM})\;
Multiply $y(f_k,t_m)$ by $H_{\text{RM}}(f_k,t_m;\mathbf{c}_{c})$ and $H_{\text{pre-DFM}}(t_m;\mathbf{c}_{c})$\;
Perform the IFFT in the Range frequency domain for $m=1,\cdots,M$, i.e.,
$$y(\tau,t_m;\mathbf{c}_{c})=\text{IFFT}\{y(f_k,t_m)H_{\text{RM}}(f_k,t_m;\mathbf{c}_{c})\}H_{\text{pre-DFM}}(t_m;\mathbf{c}_{c})$$
\textbf{return}: $y(\tau,t_m;\mathbf{c}_{c})$.}
\end{algorithm}

\begin{algorithm}[htb]\label{algorithm: LM-GCI-LMB}
\caption{$\text{GFT}$}
\small
{\underline{INPUT:}  Signal   $y(\tau,t_m;\mathbf{c}_c)$ output by $\text{mGIFT}$; fine motion parameters $\mathbf{c}_{f}$\;
\underline{OUTPUT:} Signal $y(\tau,f_d;\mathbf{c}_f|\mathbf{c}_c)$ after fine compensation\;
\textbf{function}  $\text{GFT}(y(\tau,t_m;\mathbf{c}_{c}),\mathbf{c}_f)$\\
Construct the fine compensation $H_{\text{DFM}}(t_m;\mathbf{c}_f)$  via (\ref{H-DFM})\;
Multiply $H_{\text{DFM}}(t_m;\mathbf{c}_f)$  by $y(\tau,t_m;\mathbf{c}_c)$\;
Perform FFT in the Doppler domain for each range bin of interest, i.e.,
$$y(\tau,v_d;\mathbf{c}_f|\mathbf{c}_c)=\text{FFT}\{y(\tau,t_m;\mathbf{c}_c)H_{\text{DFM}}(t_m;\mathbf{c}_f)\}$$
\textbf{return}: $y(\tau,v_d;\mathbf{c}_f|\mathbf{c}_c)$.}
\end{algorithm}
\begin{algorithm}[h!]\label{algorithm: LM-GCI-LMB}
\caption{DS-GRFT}
\small
{\underline{INPUT:}  Baseband signal   $y(f_k,t_m)$ of the form (\ref{Rx-OFDM-Multi-Target}) \;

\underline{OUTPUT:} motion parameter estimates $\hat{\mathbf{c}}$\;
Initialize $\hat {\bf c}=\emptyset$\;
\textbf{function} DS-GRFT \,($y(f_k,t_m)$)\\

\For{$(\tilde{c}_{1,c},\tilde{c}_{2,c})\in\mathbb{C}_{1,c}\times\mathbb{C}_{2,c}$}
{
%\For{$\tilde{c}_{2,c}\in\mathbb{C}_{2,c}$}{
$\tilde{\mathbf{c}}_c:=[\tilde{c}_{1,c},\tilde{c}_{2,c}]$\;
$
y(\tau_n,t_m;\tilde{\mathbf{c}}_c):=\text{mGIFT}(y(f_k,t_m),\tilde{\mathbf{c}}_c)$\;
\For{$\tilde{c}_{2,f}'\in\kappa\mathbb{C}_{2,f}$}
{
$\tilde{\mathbf{c}}_f':=\tilde{c}_{2,f}'$\;
$y(\tau_n,v_d;\tilde{\mathbf{c}}_f'|\tilde{\mathbf{c}}_c):=\text{GFT}(y(\tau_n,t_m;\tilde{\mathbf{c}}_c),\tilde{\mathbf{c}}_f')$\;
\For{$(\tau_n,v_d)\in2\mathbb{C}_0/c\times\kappa\mathbb{C}_{1,f}$}
{
\If{$|y(\tau_n,v_d;\tilde{\mathbf{c}}_f'|\tilde{\mathbf{c}}_c)|\!>\!\gamma_{\emph{LRT}}$}
{
$\hat{c}_{0}:=c\tau_n/2$; $\hat{c}_{1}:=\tilde{c}_{1,c}+v_d/\kappa$; $\hat{c}_{2}:=\tilde{c}_{2,c}+\tilde{c}_{2,f}/\kappa$\; $\hat{\mathbf{c}}:=\hat{\mathbf{c}}\bigcup\{[\hat{c}_{0},\hat{c}_{1},\hat{c}_2]\}$\;
}}}}
%$[\hat{c}_{0},\hat{c}_{1,c},\hat{c}_{2,c},\hat{c}_{1,f},\hat{c}_{2,f}]
%=\arg\max\limits_{{\bf{c}}_c\in \mathbb {C}_c; {\bf{c}}_f\in \mathbb {C}_f}  \text{GFT}_{\ddot{\text{GIFT}}_{y(f_k,t_m)}(c_{0},c_{1,c},c_{2,c})}(c_{2,f})$\;
%$[\hat{c}_{0},\hat{c}_{1,c},\hat{c}_{2,c},\hat{c}_{1,f},\hat{c}_{2,f}]\!=\!\arg\max\limits_{\substack{\tilde{\bf{c}}_c\in \mathbb {C}_c;\\ \tilde{\bf{c}}_f\in \mathbb {C}_f}}y(\tau,v_d;\tilde{\mathbf{c}}_f|\tilde{\mathbf{c}}_c)$\;
%$[\hat{c}_{0},\hat{c}_{1,c},\hat{c}_{2,c},\hat{c}_{1,f},\hat{c}_{2,f}]\!=\!\arg\max\limits_{\tilde{\bf{c}}_c\in \mathbb {C}_c; \tilde{\bf{c}}_f\in \mathbb {C}_f}y(\tau,v_d;\tilde{\mathbf{c}}_f|\tilde{\mathbf{c}}_c)$\;
%$\hat{c}_{1}=\hat{c}_{c,1}+\hat{c}_{f,1}$; $\hat{c}_{2}=\hat{c}_{c,2}+\hat{c}_{f,2}$; $\hat{\mathbf{c}}=[\hat{c}_{0},\hat{c}_{1},\hat{c}_2]$\;
\textbf{return}: $\hat{\mathbf{c}}$.}
\end{algorithm}

\begin{algorithm}[h!]\label{algorithm: LM-GCI-LMB}
\caption{DS-KT-MFP}
\small
{\underline{INPUT:}  Baseband signal   $y(f_k,t_m)$ of the form (\ref{Rx-OFDM-Multi-Target}) \;

\underline{OUTPUT:}  motion parameter estimates $\hat{\mathbf{c}}$;

\textbf{function} DS-KT-MFP \,($y(f_k,t_m)$)\;
Perform KT on $y(f_k,t_m)$, yielding $y_{\text{KT}}(f_k,t_m)$ of the form (\ref{Echo-after-KT})\;
\For{$(\tilde{q},\tilde{c}_{2,c})\in \mathbb{Q}\times \mathbb{C}_{2,c}$}
{
%\For{$\Tilde{C}_{1,C}\In\Mathbb{C}_{1,C}$}
%{
$\tilde{\mathbf{c}}_c:=[\tilde{q},\tilde{c}_{2,c}]$\;
$
y(\tau_n,t_m;\tilde{\mathbf{c}}_c):=\text{mGIFT}(y_{\text{KT}}(f_k,t_m),\tilde{\mathbf{c}}_c)$\;
\For{$\tilde{c}_{2,f}'\in \kappa\mathbb{C}_{2,f}$}
{
$\tilde{\mathbf{c}}_f':=[\tilde{c}_{2,f}']$\;
$y(\tau_n,v_d;\tilde{\mathbf{c}}_{f}'|\tilde{\mathbf{c}}_c):=\text{GFT}(y_{\text{KT}}(\tau_n,t_m;\tilde{\mathbf{c}}_c),\tilde{\mathbf{c}}_f')$\;
\For{$(\tau_n, v_d)\in 2\mathbb{C}_0/c \times \mathbb{C}_1'$}
{
\If{$|y(\tau_n,v_d;\tilde{\mathbf{c}}_f'|\tilde{\mathbf{c}}_c)|\!>\!\gamma_{\emph{LRT}}$}
{
$\hat{c}_{0}=c\tau_n/2$; $\hat{c}_{1}=\tilde{q}V_a+v_d$; $\hat{c}_{2}=\tilde{c}_{2,c}+\tilde{c}_{2,f}/\kappa$\;  $\hat{\mathbf{c}}:=\hat{\mathbf{c}}\bigcup\{[\hat{c}_{0},\hat{c}_{1},\hat{c}_2]\}$\;
}}}}
%$[\hat{c}_{0},\hat{c}'_1,\hat q,\hat{c}_{2,c},\hat{c}_{2,f}]\!=\!\arg\max\limits_{\substack{\tilde{\mathbf{c}}_c\in \mathbb {C}_c;\\ \tilde{\mathbf{c}}_f\in \mathbb {C}_f}}y_{\text{KT}}(\tau,v_d;\tilde{\mathbf {c}}_{f}|{\tilde{\mathbf{c}}_c})$\;
%$[\hat{c}_{0},\hat{c}'_1,\hat q,\hat{c}_{2,c},\hat{c}_{2,f}]\!=\!\arg\max\limits_{\tilde{\mathbf{c}}_c\in \mathbb {C}_c; \tilde{\mathbf{c}}_f\in \mathbb {C}_f}y_{\text{KT}}(\tau,v_d;\tilde{\mathbf {c}}_{f}|{\tilde{\mathbf{c}}_c})$\;
%$\hat{c}_{1}=\hat{c}'_1+\hat q V_a$; $\hat{c}_{2}=\hat{c}_{c,2}+\hat{c}_{f,2}$; $\hat{\mathbf{c}}=[\hat{c}_{0},\hat{c}_{1},\hat{c}_2]$\;
\textbf{return}: $\hat{\mathbf{c}}$. }
\end{algorithm}
\subsection{Computational complexity analysis}
\begin{table*}[hbt]
\renewcommand{\arraystretch}{2}
\caption{Computational complexity of GRFT detectors}\label{Computation_Complexity_Comparison_of Searching}
\begin{center}
\footnotesize
\scalebox{0.85}{ \begin{tabular*}{0.95\textwidth}{@{\extracolsep{\fill}}c| c c c c ||c}
\hline
\hline
 \diagbox{Alg.s}{\rotatebox{-35}{Times}}{\rotatebox{-70}{BOs}}& \makecell{\text{KT}  \\ {\footnotesize{$\mathcal{O}(M^2N_0)$}}}   &  \makecell{MF\\ {\footnotesize{$\mathcal{O}(M)$}}} &  \makecell{\text{IFFT} \\ {\footnotesize{$\mathcal{O}(0.5N_0 \log_2N_0)$}}}                                  &   \makecell{\text{FFT}\\ {\footnotesize{$\mathcal{O}(0.5M\log_2M)$}}} & \makecell{\text{Total Computational Complexity }} \\
\hline
TD-GRFT                  &  --             &  $N_{2} N_{1}N_0 $    & --                                            &          --       &   $\mathcal{O}(MN_0 N_{1}N_{2}) $           \\
\hline
FD-GRFT                     &  --             &  $N_{2} N_{1}N_0 $    & $N_{2} N_{1}  $                          &         --        &    $\mathcal{O}(MN_0 N_{1}N_{2} +0.5N_{2} N_{1}N_0 \log_2N_0)$        \\
\hline
DS-GRFT                  &  --             &  $\frac{f_s}{f_c} N_{2} N_{1}\widetilde{N}_0 $   & $(\frac{f_s}{f_c} )^2 N_{2} N_{1}M $     & $\frac{f_s}{f_c} N_{2} N_{1}\widetilde{N}_0$ & \makecell{ $\mathcal{O}(\frac{f_s}{f_c}M\widetilde{N}_0N_{2} N_{1} +\frac{f_s^2}{2f_c^2}  N_{2} N_{1}MN_0 \log_2N_0$\\$+ \frac{f_s}{2f_c} N_{2} N_{1}\widetilde{N}_0M\log_2M)$}\\
\hline
KT-MFP                   & $1$             & $ N_{2} N_{a}N_0$                  &$N_{2} N_{a}M$                & $N_{2} N_{a}\widetilde{N}_0 $    & \makecell{ $\mathcal{O}(N_{2}N_{a}MN_0 + 0.5N_{2} N_{a}MN_0 \log_2N_0$ \\$+M^2N_0+ 0.5 N_{2} N_{a}\widetilde{N}_0M\log_2M)$ }                        \\
\hline
DS-KT-MFP                &  $1$            & $ N_{2} N_{a}\widetilde{N}_0$                   & $\frac{f_s}{f_c}  N_{2} N_{a} M$           & $N_{2} N_{a} \widetilde{N}_0$  & \makecell{ $\mathcal{O}(N_{2}N_{a}M\widetilde{N}_0 + \frac{f_s}{2f_c}N_{2} N_{a}MN_0 \log_2N_0$ \\$+ M^2N_0+ 0.5N_{2} N_{a}\widetilde{N}_0M\log_2M)$ }\\
\hline
\hline
\end{tabular*}}
\label{computing-complex}
\normalsize
\end{center}
\end{table*}
In what follows, the computational complexity of the DS-GRFT family (including  DS-GRFT and DS-KT-MFP detectors) is analyzed by comparison with the TD-GRFT detector, the standard GRFT family (including FD-GRFT and KT-MFP detectors). For ease of exposition, the case of target motion order $P=2$  will be considered.  Let $M$ denote the number of pulses, $N_0$  the number of range bins, $K$  the number of frequency bins (i.e., $K=N_0$), $N_{1}$ the number of velocity bins, $N_{2}$  the number of acceleration bins, $N_a$ the number of ambiguity velocities, and $\widetilde{N}_0$ the number of range bins for the area of interest.

According to  implementations of the five considered detectors, their computational costs are mainly due to
four basic operations (BOs): MF, FFT, IFFT and KT. Then computational complexity of  these four basic operations and overall complexity for the different detectors are provided in Table \ref{Computation_Complexity_Comparison_of Searching}.
%Moreover, an overall computational complexity comparison among the considered  five detectors  is given in the 6th column in Table \ref{Computation_Complexity_Comparison_of Searching}.
{\color{magenta}As it can be seen, compared to the FD-GRFT detector,  the TD-GRFT detector has  lower complexity at the expense of worse detection performance consistently with the analysis given in Section III-A.}
As for the  computational complexity comparison between standard GRFT and DS-GRFT families,  the following  conclusions can be drawn.
%{\color{red}{the computation of DS-GRFT families is significantly less than that of standard GRFT, the following detailed conclusions can be drawn.}}

 1) The most computationally  demanding part of the FD-GRFT is the MF operation, while for the DS-GRFT is the FFT operation.
 Thus, the computational complexity of the latter is reduced to about $\frac{f_s \widetilde{N}_0}{2f_c N_0}\log_2(M)$ of the former. Since $f_c$ is usually much bigger than $f_s$, the computational efficiency improvement achieved by  DS-GRFT  is generally significant.

 2) For KT-MFP and DS-KT-MFP detectors, the most computationally  burdensome tasks  are IFFT  and FFT operations, respectively. Through a rough calculation, it can be seen that the computational complexity of the latter is decreased to about $\frac{\widetilde{N}_0\log_2(M)}{N_0\log_2(N_0)}$ of the former. Usually, range bins for the area of interest  are corresponding to the  far field of the radar, where  targets are usually difficult to be detected due to the low
 \textit{signal-to-noise-ratio} (SNR). Thus  $\widetilde{N}_0$ could be half or a quarter of $N_0$ (or even lower than a quarter) generally. With this respect, the computational efficiency improvement of DS-KT-MFP is significant.

 In addition, compared to the DS-GRFT detector, the complexity of the DS-KT-MFP detector is only $\frac{f_c N_a}{f_s N_{1}}$ of the former. Actually, the factor $\frac{N_{c1}}{f_c/f_s}$ equals the number of coarse velocity bins.
 In particular, when $\frac{f_c}{f_s}=\frac{N_{1}}{N_a}$, the proposed two dual-scale detectors (i.e., DS-GRFT and DS-KT-MFP)   achieve similar  computational efficiency.
\section{Performance assessment}
{\color{magenta}{In this section, coherent integration performance  and  computational efficiency of the proposed detectors are assessed  in the presence of Gaussian noise through two synthetic experiments}}. In the considered scenario (see Fig. \ref{scenario-UVA-detection}) an \textit{Integrated Communication Sensing Radar} (ICSR)  \cite{ICSR} is deployed, working in the millimeter wave frequency band and  surveilling \textit{Unmanned Aerial Vehicles} (UAVs).
The system parameters of the radar are summarized in Table \ref{radar-para.}.
%Notably, the PRF is rather low considering the integration of communication and sensing in a time-division mode.
Generally, the \textit{Radar Cross-Section} (RCS) of an UAV is  small and the transmission power of the ICSR  is restricted by the NG-RAN protocol.
With this respect, an LTCI algorithm is needed for the ICSR in order to detect weak targets.
\begin{figure}[h!]
\centering
\centerline{\includegraphics[width=8cm]{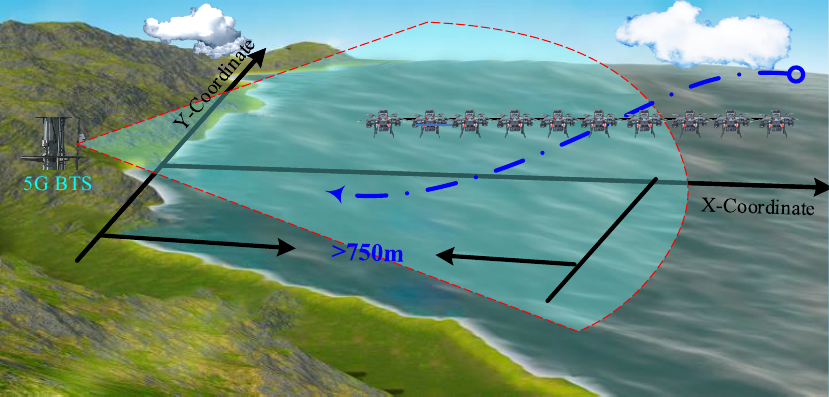}}
\caption{Scenario of UAV surveillance via ICSR system.}
\label{scenario-UVA-detection}
\end{figure}

\begin{table}[th]
\renewcommand{\arraystretch}{1.5}
\caption{Radar parameters}\label{radar-para.}
\begin{center}
\centering
\footnotesize
\begin{tabular*}{0.48\textwidth}{@{\extracolsep{\fill}}c c c c  c}
\hline
\hline
$f_c $ [GHz]            & $f_s$ [MHz]    &   $B_r$ [MHz]    &   PRF [Hz]    & M\\
\hline
 28                        &491.52          & 400           &   1905             & 512\\
\hline
\hline
\end{tabular*}
\normalsize
\end{center}
\end{table}
% &    {\color{red}PDU} [s]
%  & 0.2688

\begin{table}[h!]
\renewcommand{\arraystretch}{1.5}
\caption{Target motion parameters in Experiment 1}\label{target-para}
\begin{center}
\centering
\footnotesize
\begin{tabular*}{0.45\textwidth}{@{\extracolsep{\fill}}c| c|c|c}
\hline
\hline
Num.           &   slant range [$\text{m}$]  &     velocity [$\text{m}/\text{s}$]  &   acceleration   [$\text{m}/\text{s}^2$]      \\
\hline
target 1       &    538.52             &     20          &   5.07          \\
target 2       &    538.52             &     21          &    5.07          \\
target 3       &    492.44             &     17          &    7.58          \\
\hline
\hline
\end{tabular*}
\normalsize
\end{center}
\end{table}
\subsection{Experiment 1}
The purpose of this experiment is to show the effectiveness of the proposed DS-GRFT family  in a typical situation. We  consider a group of UAVs consisting of three moving targets with similar motion parameters, as shown in Table \ref{target-para}.
\begin{figure}[t!]
\begin{minipage}[c]{0.49\linewidth}
\centering
\centerline{\includegraphics[width=4.5cm]{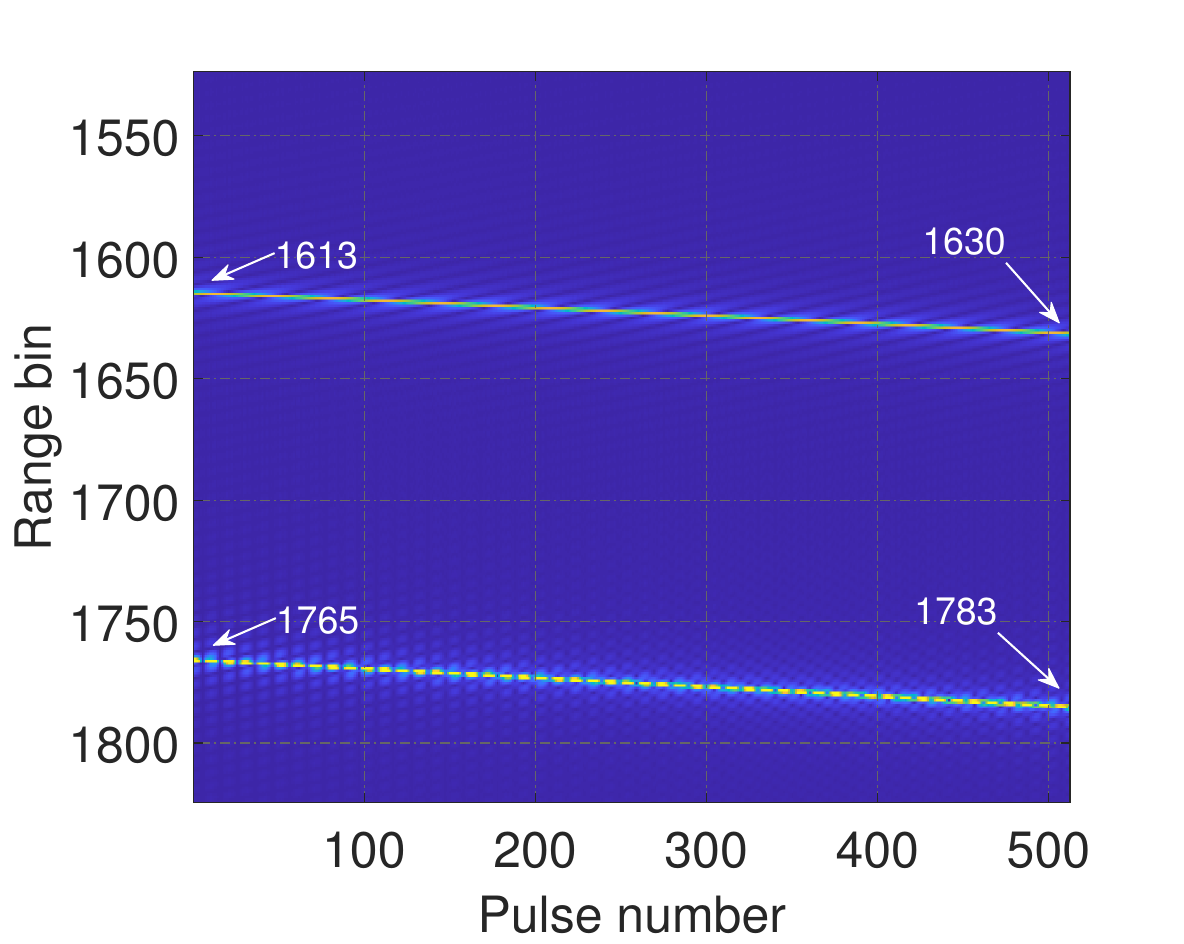}}
\centerline{\small{\small{(a)}}}
\end{minipage}
\hfill
\begin{minipage}[c]{0.49\linewidth}
\centering
\centerline{\includegraphics[width=4.5cm]{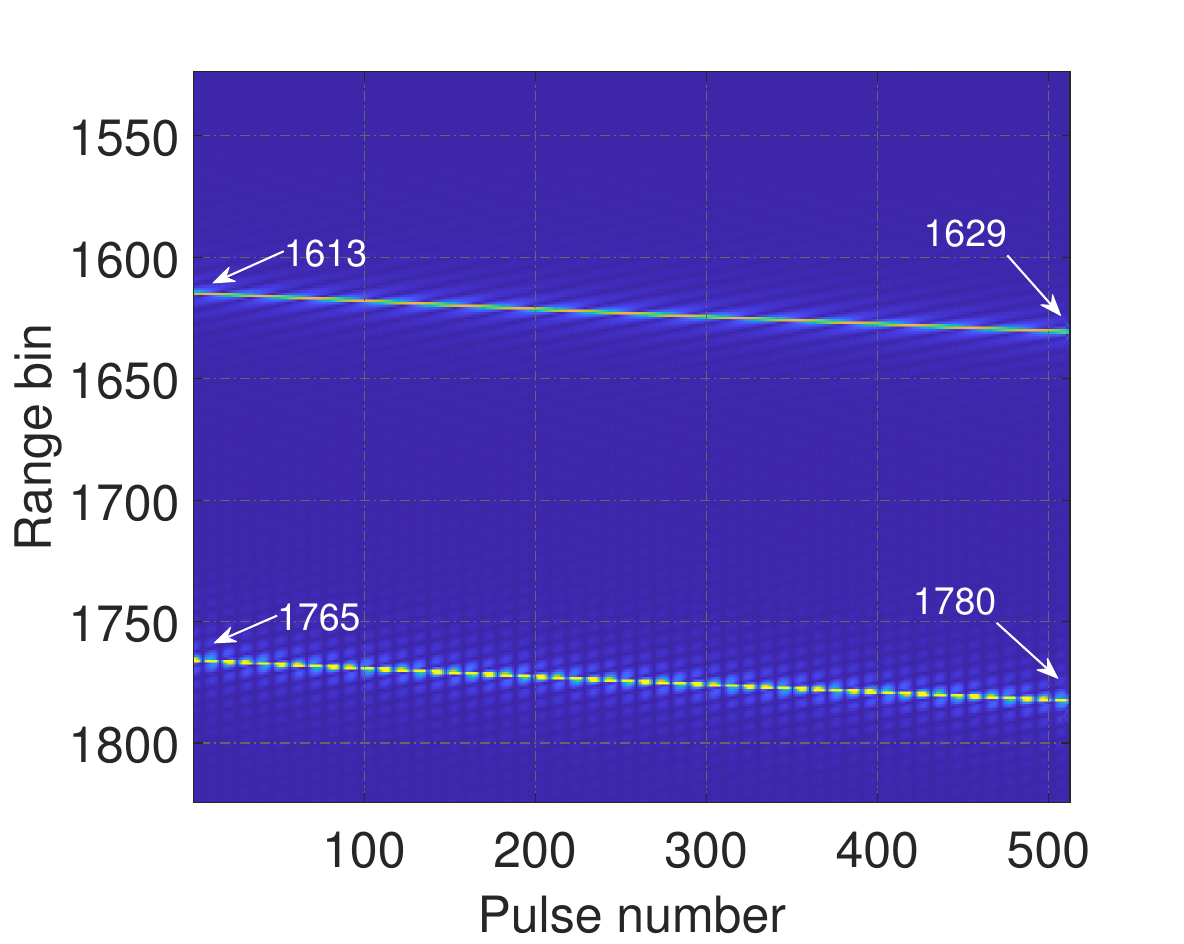}}
\centerline{\small{\small{(b)}}}
\end{minipage}
\vfill
\begin{minipage}[c]{0.49\linewidth}
\centering
 \centerline{\includegraphics[width=4.5cm]{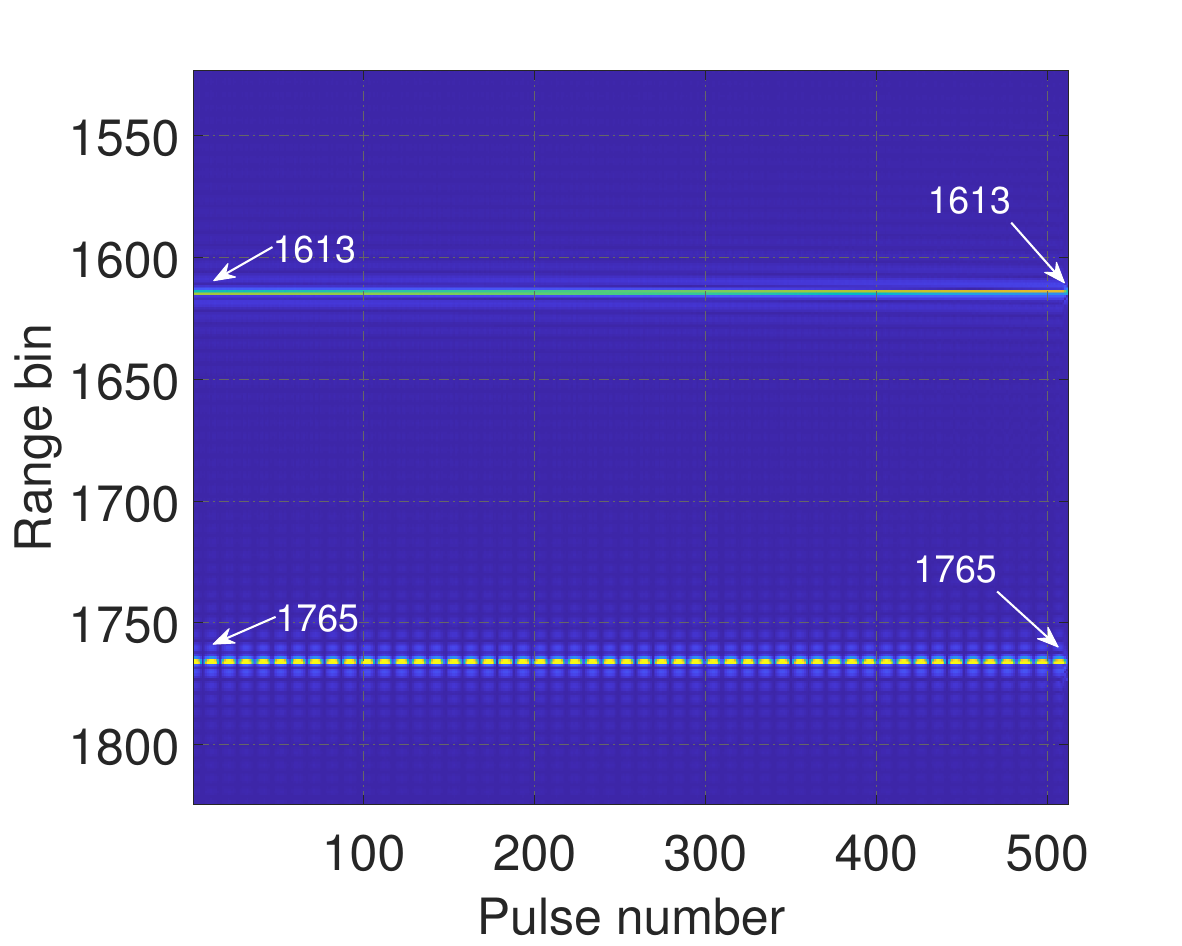}}
 \centerline{\small{\small{(c)}}}
\end{minipage}
\hfill
\begin{minipage}[c]{0.49\linewidth}
\centering
\centerline{\includegraphics[width=4.5cm]{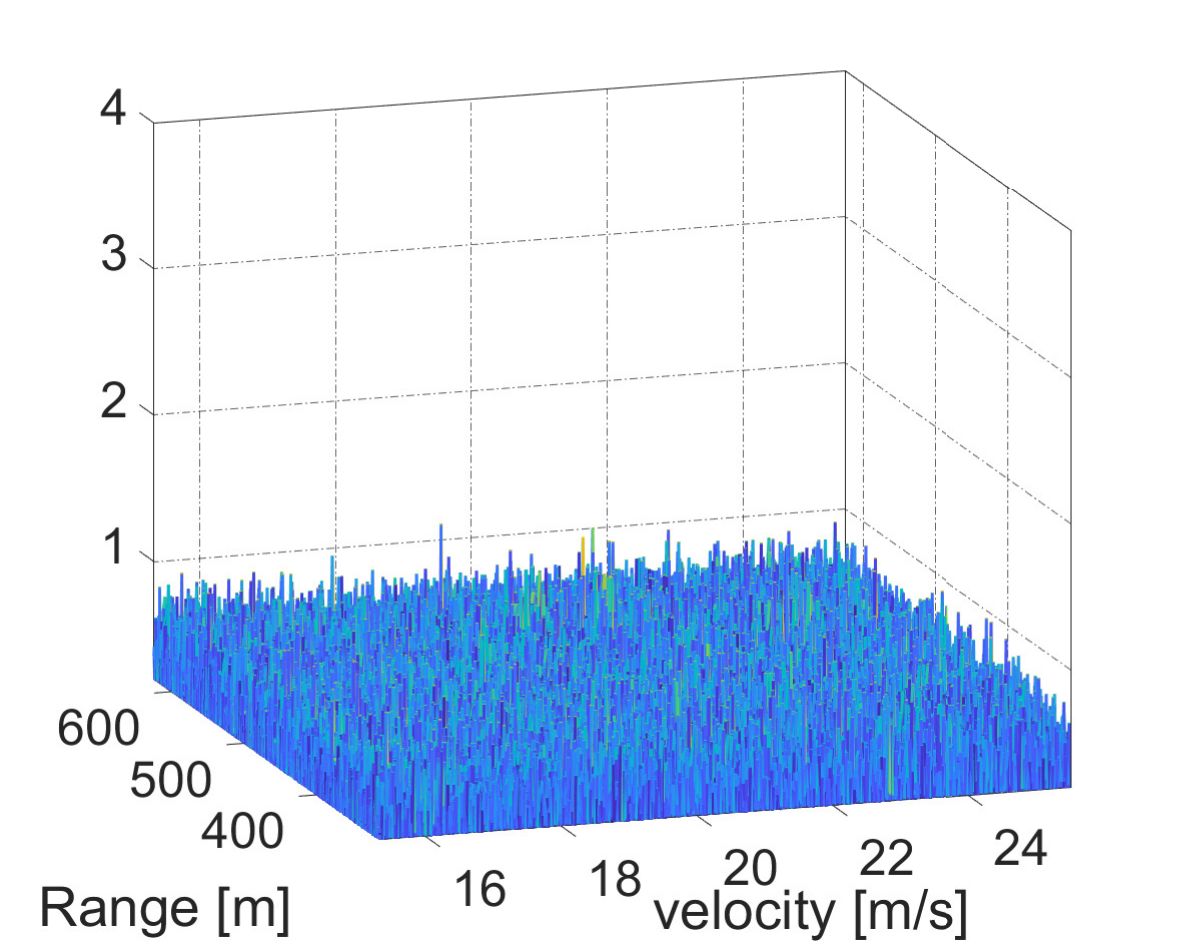}}
\centerline{\small{\small{(d)}}}
\end{minipage}
\caption{(a) Range spectrum with slow-time before RM correction; (b) range spectrum with slow-time after coarse acceleration compensation; (c) range spectrum with slow-time after coarse acceleration  and coarse velocity compensations; (d) range-Doppler spectrum after RM correction.}
\label{coarse-compensation-instance}
\end{figure}

Fig. \ref{coarse-compensation-instance} (a) shows the range-pulse spectrum  before the KT procedure.  An obvious RM effect can be observed in Fig. \ref{coarse-compensation-instance}  (a).   Fig. \ref{coarse-compensation-instance}  (b) shows the range spectrum with slow-time after KT.  It can be seen that the RM effect is partially corrected by one range bin for target 3 and three range bins for targets 1 and 2.  In Fig. \ref{coarse-compensation-instance}  (c), the result after coarse
 compensation is given, showing that the RM effect has been completely eliminated.
 The aforementioned results clearly verify Proposition 1 in the sense that the RM effect can be eliminated by  the coarse compensation only. Notice that, for the sake of demonstration, the results in Figs. \ref{coarse-compensation-instance}  (a) - (c) are given without noise background.
 However, as Fig. \ref{coarse-compensation-instance}  (d) shows, even if the RM effect has been eliminated,  the target motion trajectory is  submerged in the noise background due to the DFM effect.
\begin{figure}[htb]
\begin{minipage}[c]{0.49\linewidth}
\centering
 \centerline{\includegraphics[width=4.5cm]{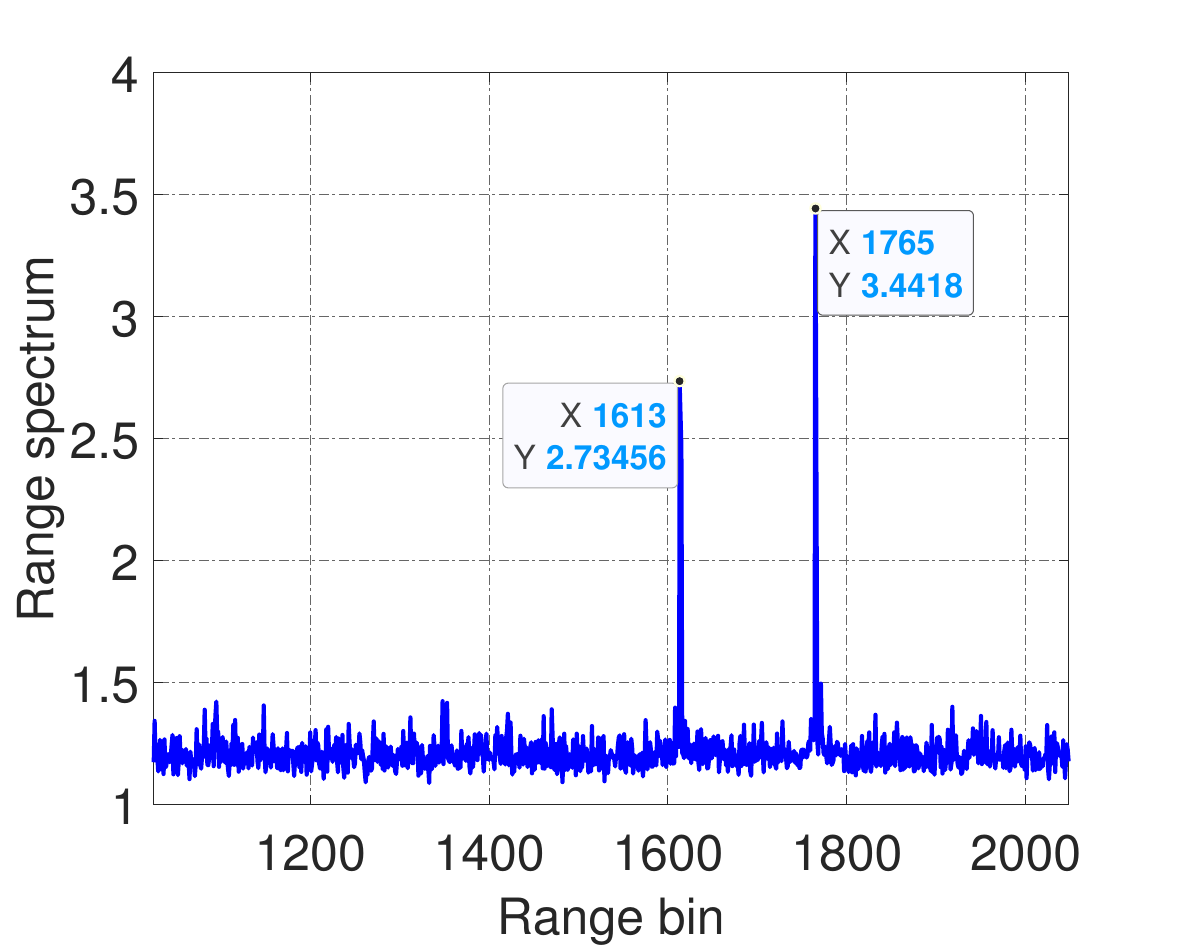}}
 \centerline{\small{\small{(a)}}}
\end{minipage}
\hfill
\begin{minipage}[c]{0.49\linewidth}
\centering
\centerline{\includegraphics[width=4.5cm]{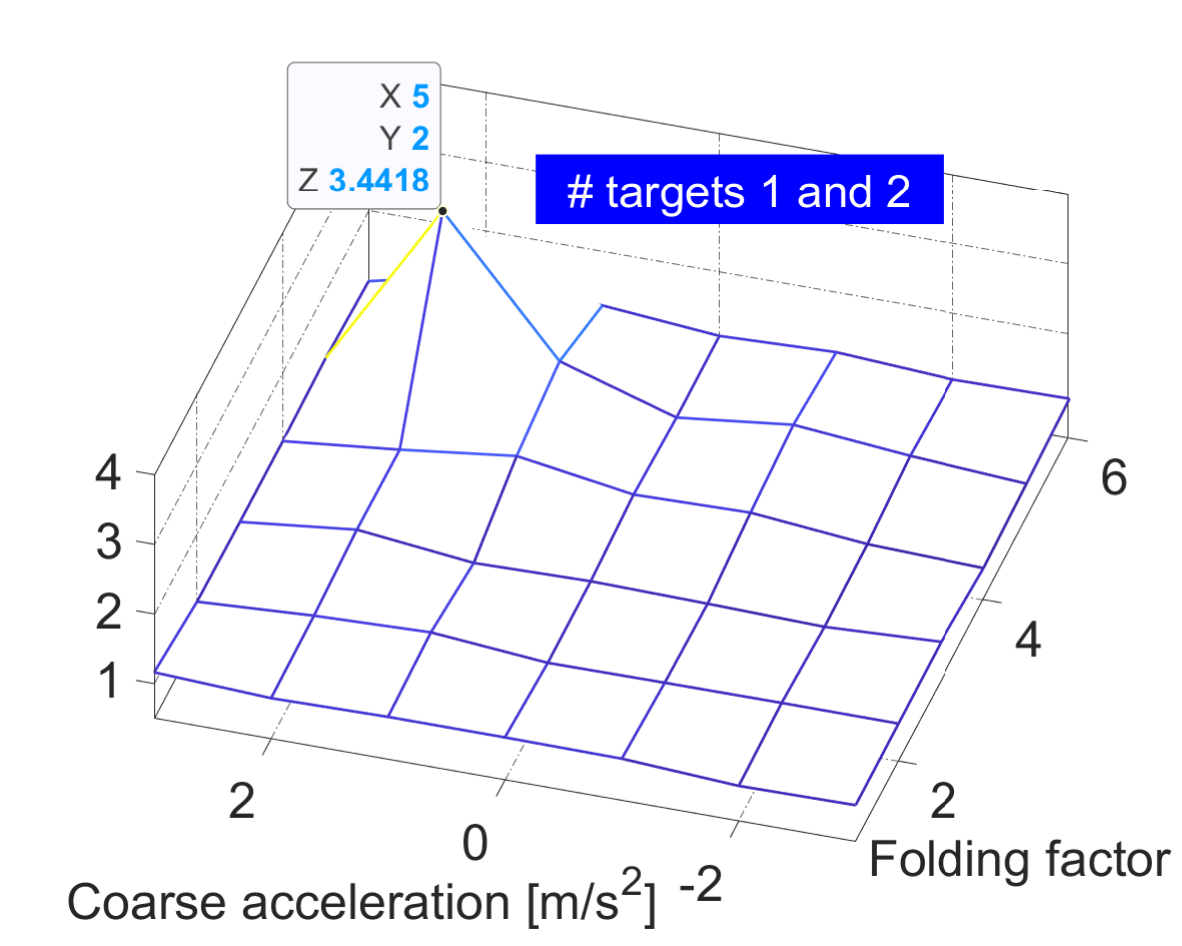}}
\centerline{\small{\small{(b)}}}
\end{minipage}
\vfill
\begin{minipage}[c]{0.49\linewidth}
 \centering
\centerline{\includegraphics[width=4.5cm]{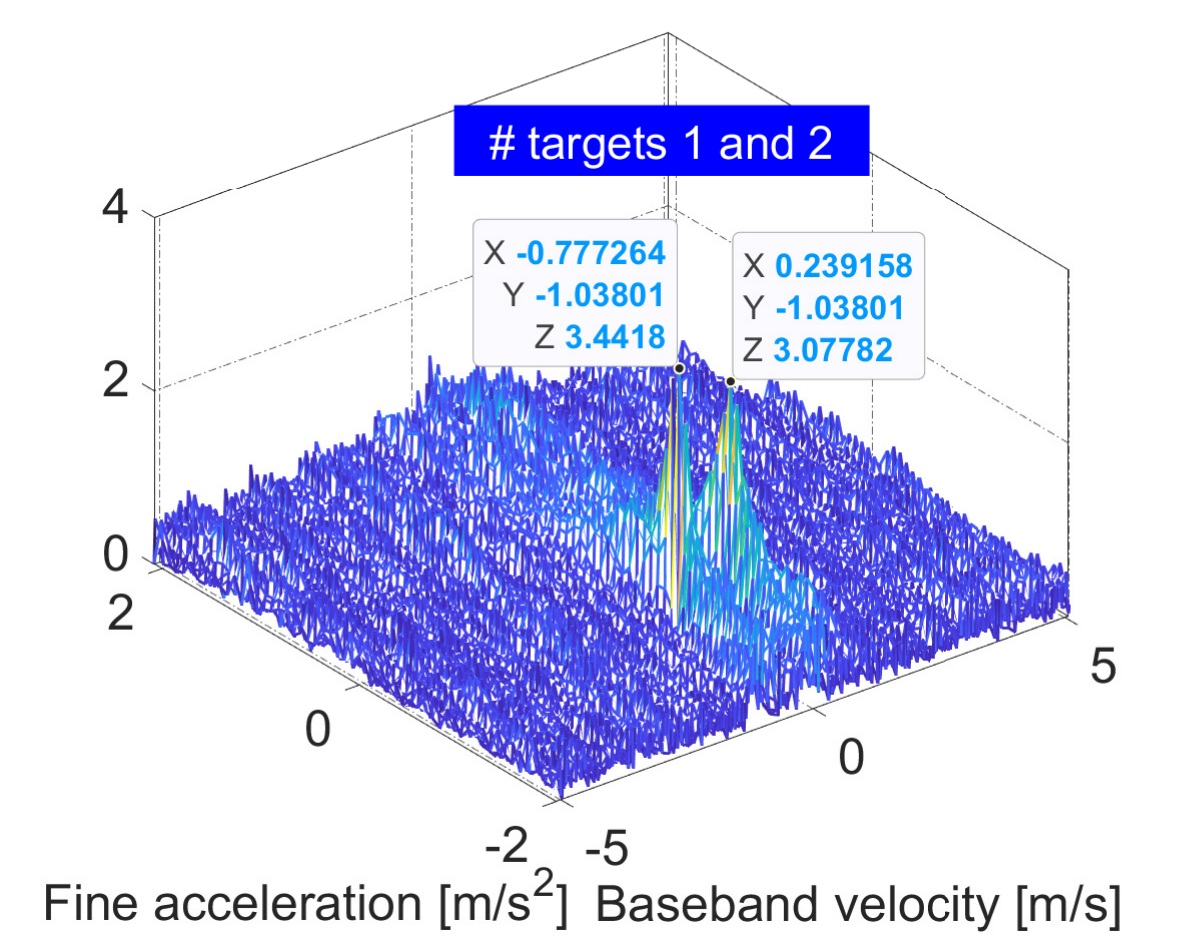}}
\centerline{\small{\small{(c)}}}
\end{minipage}
\hfill
\begin{minipage}[c]{0.49\linewidth}
 \centering
\centerline{\includegraphics[width=4.5cm]{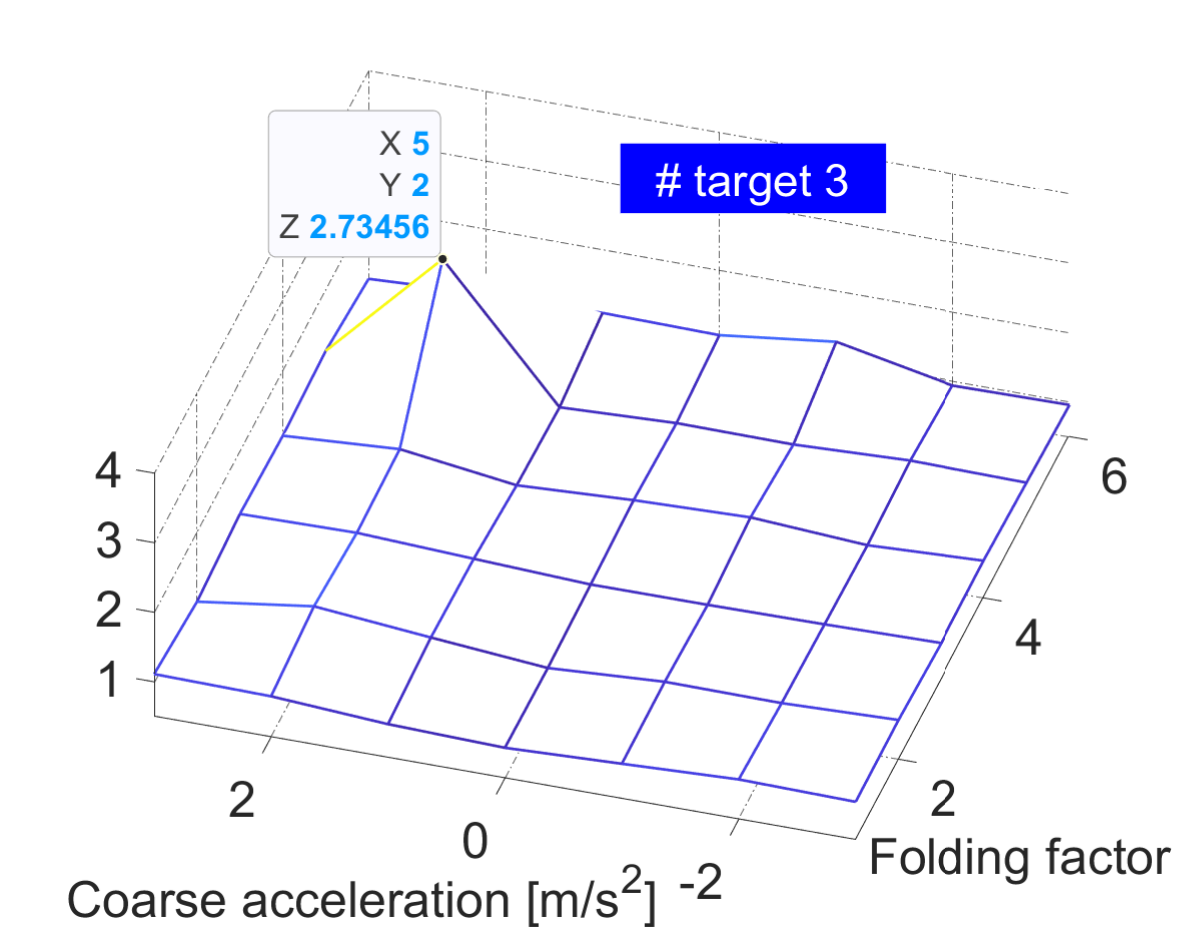}}
\centerline{\small{\small{(d)}}}
\end{minipage}
\vfill
\begin{minipage}[c]{0.49\linewidth}
 \centering
\centerline{\includegraphics[width=4.5cm]{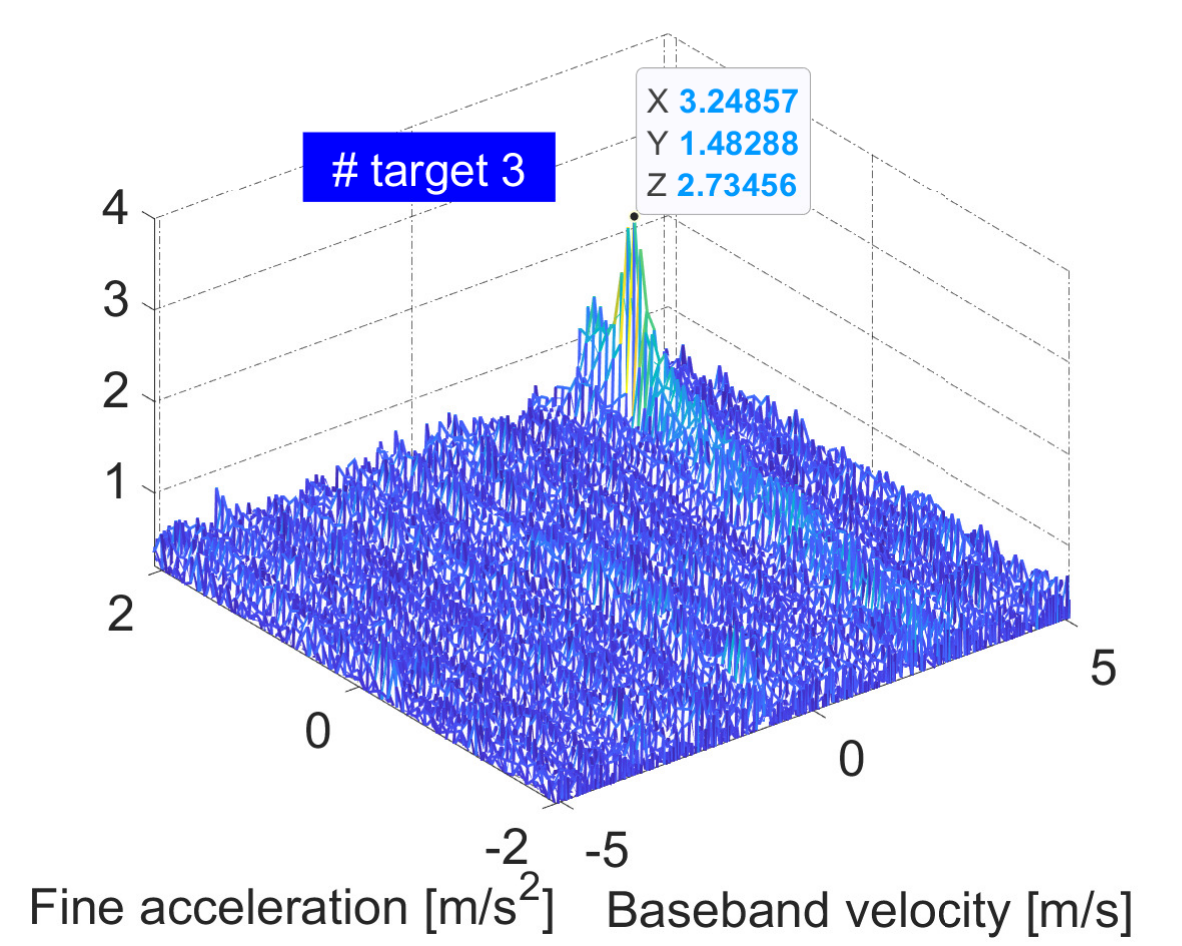}}
\centerline{\small{\small{(e)}}}
\end{minipage}
\hfill
\begin{minipage}[c]{0.49\linewidth}
 \centering
\centerline{\includegraphics[width=4.5cm]{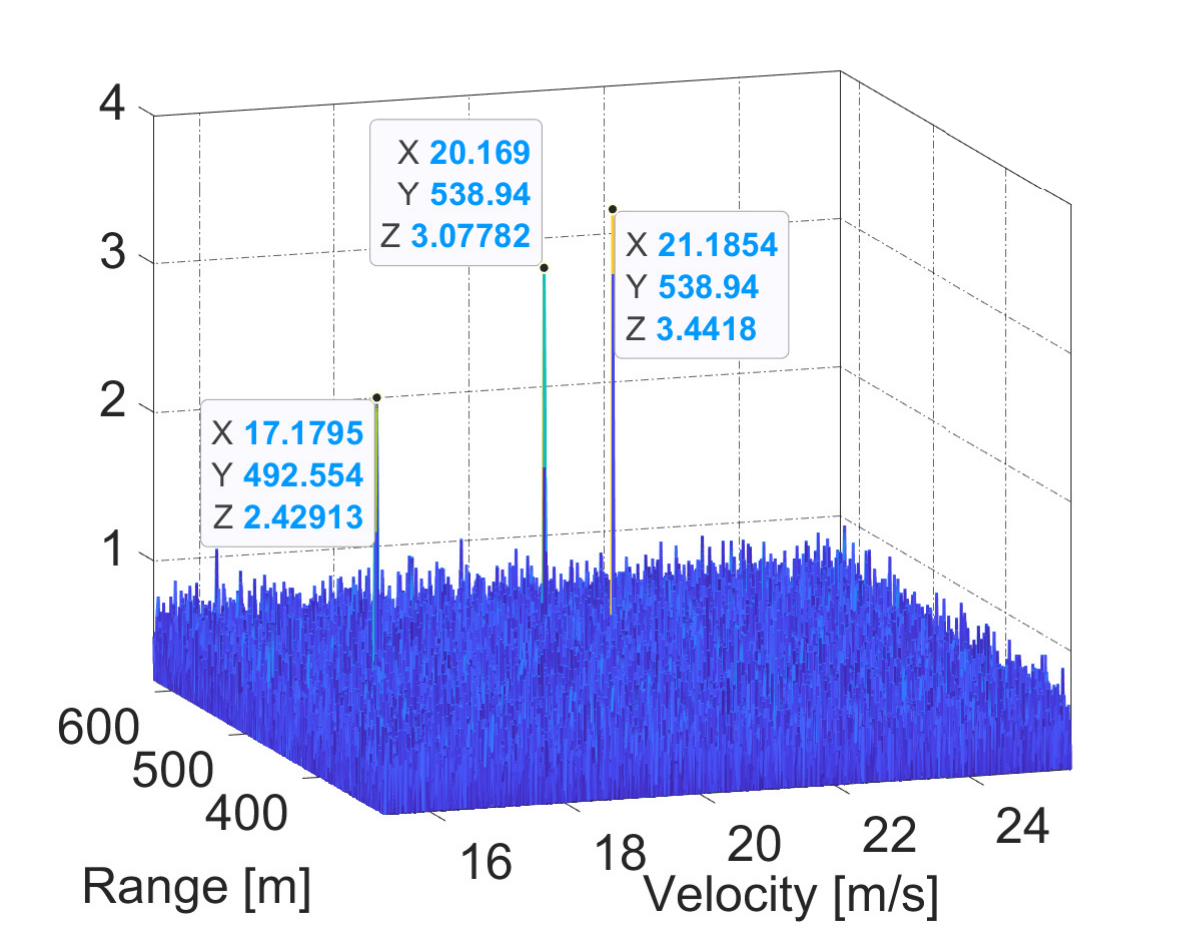}}
\centerline{\small{\small{(f)}}}
\end{minipage}
\caption{Estimation of target motion parameters:  (a) maximum output of each range bin; (b) coarse
acceleration and  folding factor spectrum at range bin 1765; (c)  fine acceleration and baseband velocity spectrum  at bin-pair $(5,2)$ of (b); (d) coarse acceleration and folding factor spectrum at range bin 1613; (e) fine acceleration and baseband velocity spectrum  at bin-pair $(5,2)$ of (d); (f) range-Doppler spectrum after target motion compensation.}
\label{Motion_Para_Est}
\end{figure}
\begin{figure}[h!]
\begin{minipage}[c]{0.49\linewidth}
\centering
 \centerline{\includegraphics[width=4.5cm]{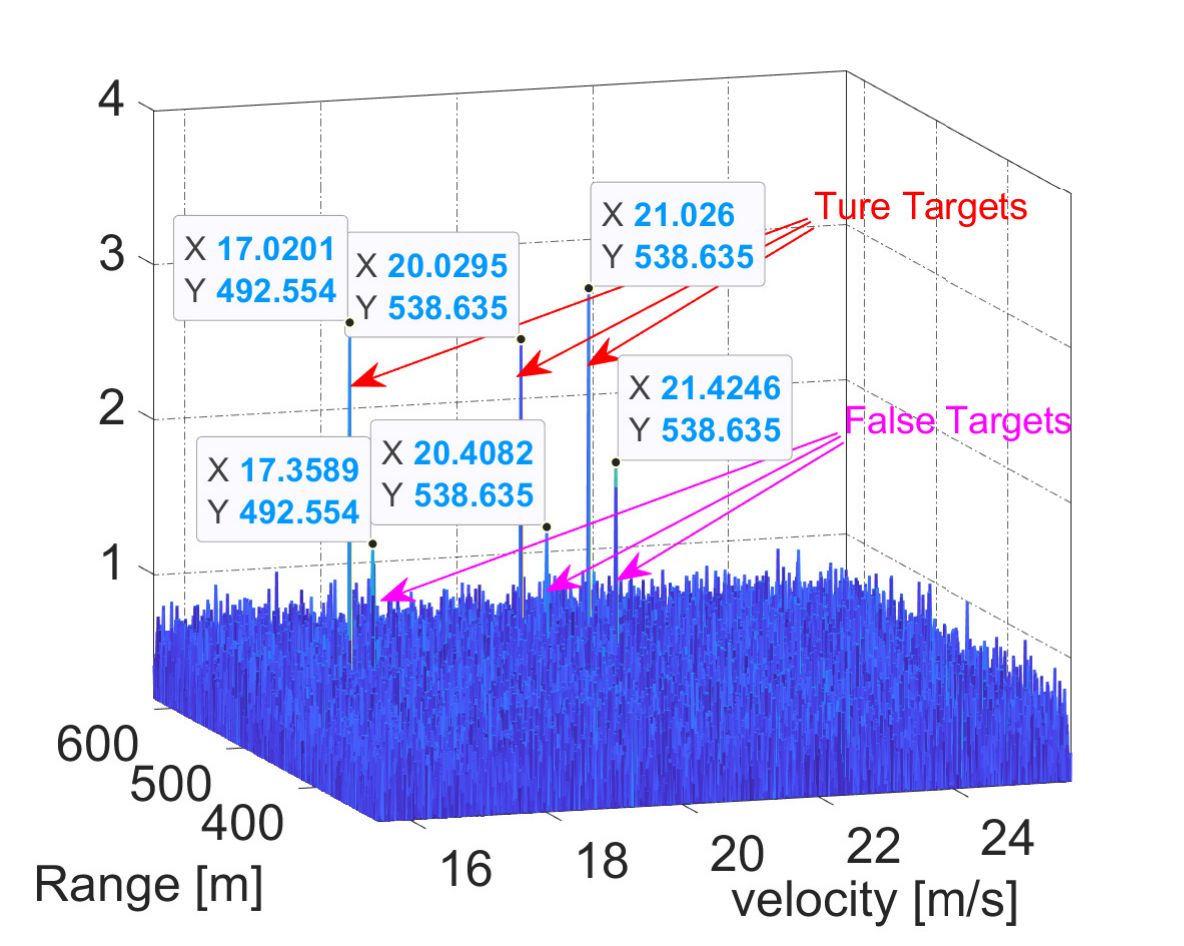}}
 \centerline{\small{\small{(a)}}}
\end{minipage}
\hfill
\begin{minipage}[c]{0.49\linewidth}
\centering
\centerline{\includegraphics[width=4.5cm]{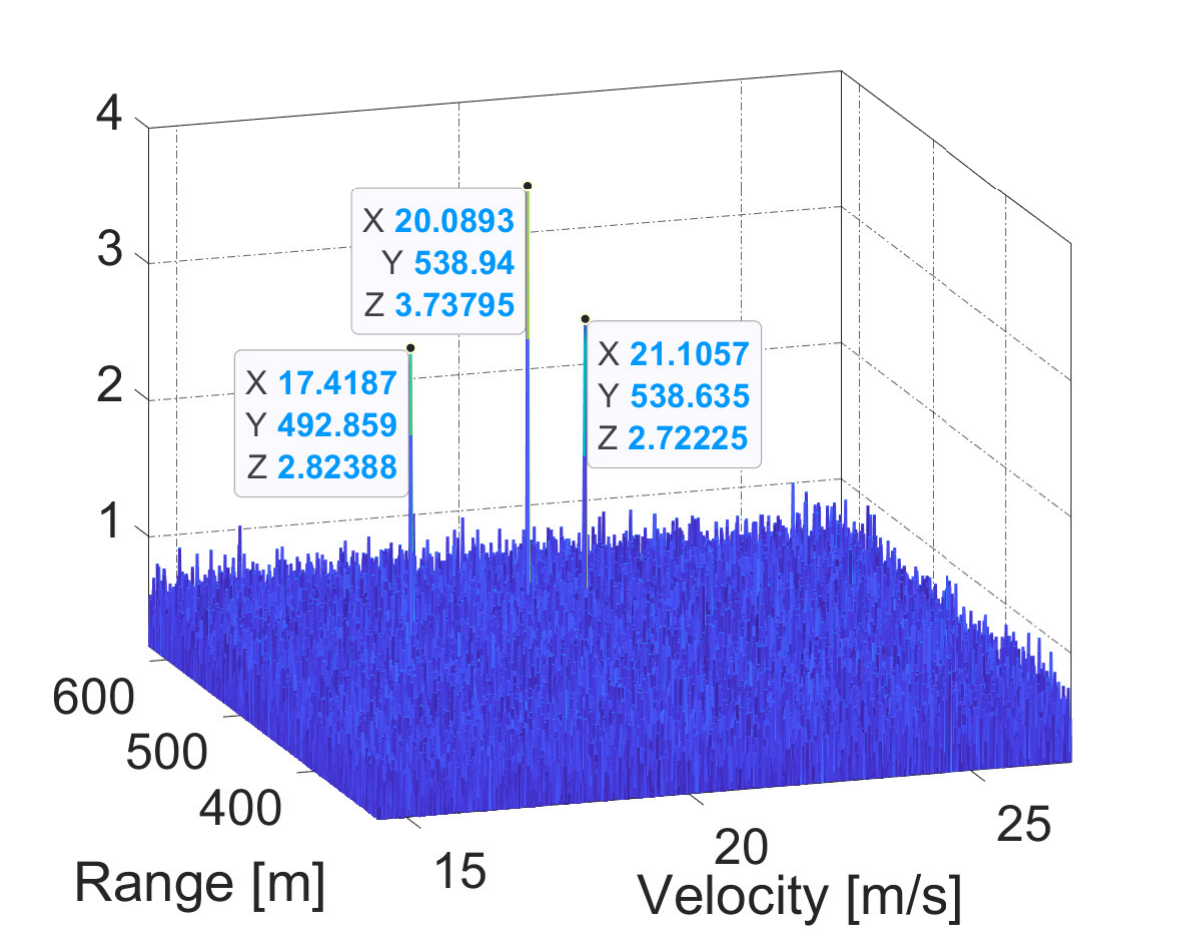}}
\centerline{\small{\small{(b)}}}
\end{minipage}
\caption{Range-Doppler spectrum  estimation: (a) TD-GRFT; (b) DS-GRFT.}
\label{RDspec_for_comparison}
\end{figure}

Fig. \ref{Motion_Para_Est}  illustrates the procedure of DS-based estimation of target motion parameters.
In particular, Fig.  \ref{Motion_Para_Est}  (a) plots the maximum outputs of DS-KT-MFP over slow time for each range bin. It can be observed that the peaks  occur at range bins
1613 (492.44 m)  and 1765 (538.5 m) which are consistent with the  slant range values of the true objects. Figs.  \ref{Motion_Para_Est}  (b) and (c) show the DS-based motion parameter estimation procedure for targets 1 and 2. Specifically, Fig.  \ref{Motion_Para_Est}  (b) shows the spectrum with respect to the folding factor and the coarse acceleration at range bin 1765 for targets 1 and 2. Observing Fig.  \ref{Motion_Para_Est}  (b), the maximum output is at bin-pair $(5,2)$.  Then, at this maximum,  we further unfold the  fine acceleration and baseband velocity spectrum as shown in Fig. \ref{Motion_Para_Est}  (c), from which  we observe that for target 1, estimations  of the fine acceleration  and the baseband velocity  are $-1.038\,\text{m}/\text{s}^2$ and $-0.777\,\text{m}/\text{s}$, respectively and for target 2, the counterparts are  $-1.038\,\text{m}/\text{s}^2$ and $0.239\,\text{m}/\text{s}$, respectively.
Similarly to Figs.  \ref{Motion_Para_Est}  (b) and (c), Figs. \ref{Motion_Para_Est}  (d) and (e) show the DS-based motion parameter estimation procedure for target 3.
Finally,  Fig. \ref{Motion_Para_Est} (f) shows the range-Doppler spectrum after both RM and DFM are compensated. It can be seen that the target power is well-focused after compensation of target motion.

Fig. \ref{RDspec_for_comparison}  further shows the range-Doppler spectra of DS-GRFT and TD-GRFT for comparison. As it can be seen from Fig. \ref{RDspec_for_comparison} (b), the proposed DS-GRFT algorithm can also provide a clear range-Doppler spectrum with three targets.
However, for the TD-GRFT algorithm in Fig. \ref{RDspec_for_comparison} (a), besides the three true targets focused in their right positions, there are also three false targets with the same range but  different velocity corresponding to the true target. The reason behind this phenomenon will be under our future investigation.
%These false targets will lead to wrong estimation of both the swarm morphological structure and the number of targets in the swarm.
%In addition, after calculation according to the motion parameters of the targets, the positions of the first-order harmonics   are consistent with false targets. The above results once again verify the harmonics  phenomenon of the TD-GRFT detector and the consequent detection of false targets.
\subsection{Experiment 2}
 In this section,  the performance of the proposed   DS-GRFT and DS-KT-MFP detectors is assessed via 500 Monte Carlo runs. The TD-GRFT, FD-GRFT and KT-MFP detectors are also considered as benchmarks.
\begin{figure}[htb]
\begin{minipage}[c]{0.49\linewidth}
\centering
 \centerline{\includegraphics[width=4.5cm]{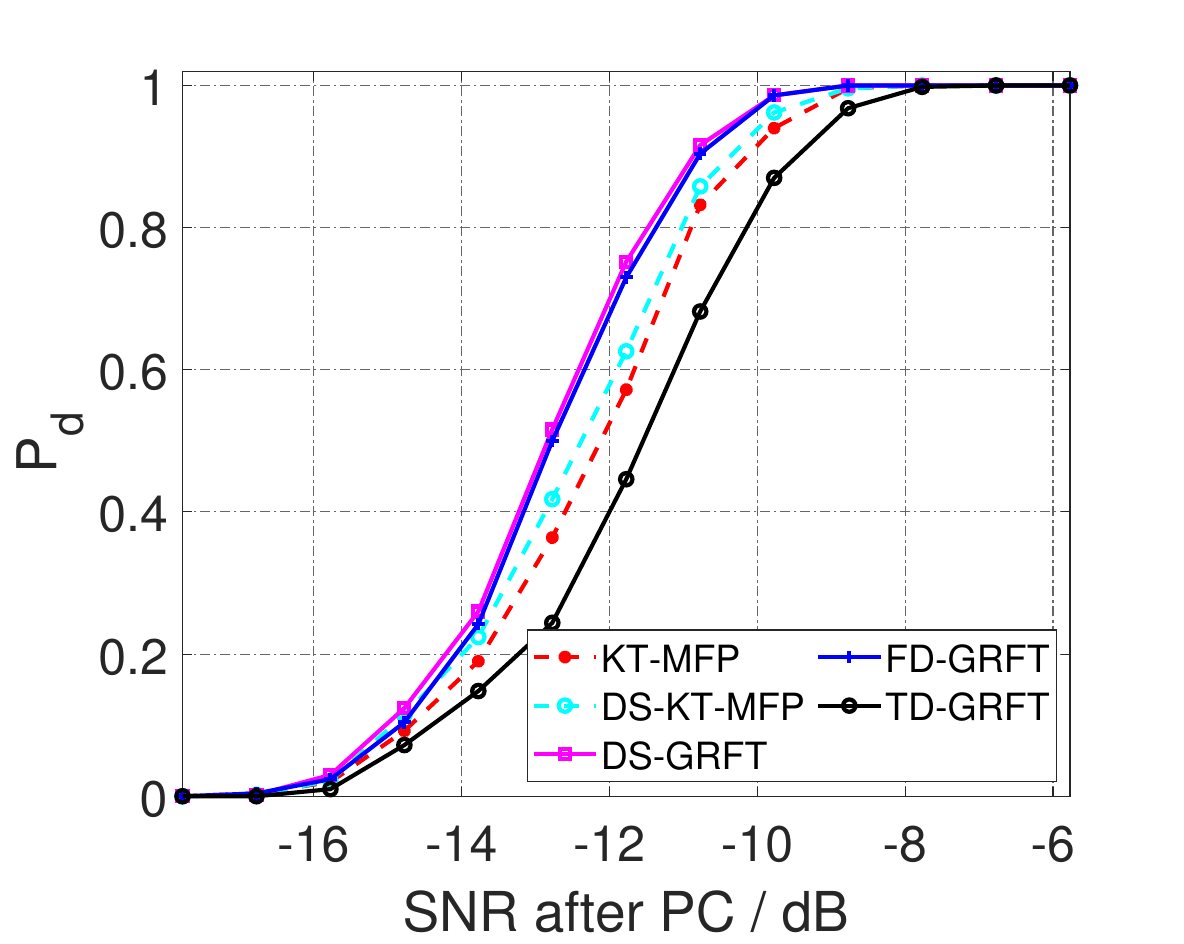}}
 \centerline{\small{\small{(a)}}}
\end{minipage}
\hfill
\begin{minipage}[c]{0.49\linewidth}
\centering
\centerline{\includegraphics[width=4.5cm]{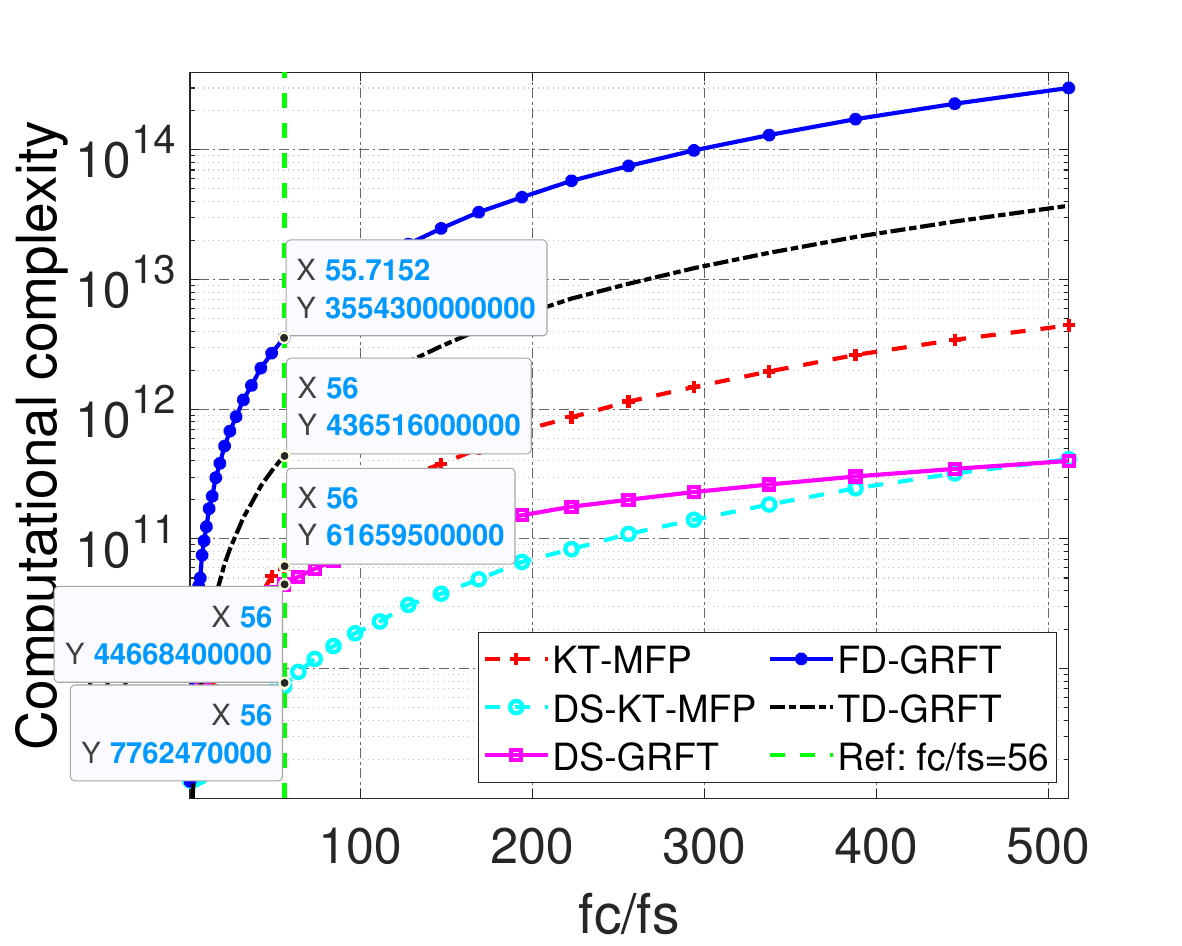}}
\centerline{\small{\small{(b)}}}
\end{minipage}
\caption{Performance analysis  for  different detectors: (a) Detection probability versus pulse-compressed SNR; (b) Computational complexity versus $\frac{f_c}{f_s}$.}
\label{Performance-analysis}
\end{figure}
%\begin{figure}[h!]
%\centering
% \centerline{\includegraphics[width=6.5cm]{Detection_Results_Pd_comarison.pdf}}
%\caption{Detection probability versus pulse-compressed SNR.}
%\label{detection-performance-comparison}
%\end{figure}
Fig. \ref{Performance-analysis} (a)  shows the curves of detection performance for the  considered five methods in different input SNR cases.
%It can be seen that  the detection probability of the proposed DS-KT-MFP algorithm is slightly higher than that of KT-MFP, which is mainly due to {\color{red}``astride effect'' of  chosen targets (Suqi: this statement will be adjusted later)}.
%While the proposed DS-GRFT  and GRFT  detectors achieve almost the same performance.
 It can be seen that the proposed DS-KT-MFP algorithm is slightly better than KT-MFP, while the proposed DS-GRFT  and GRFT  detectors achieve almost the same performance. These results demonstrate the effectiveness of the proposed DS-based LTCI  algorithms.

In addition, it can also be seen from Fig. \ref{Performance-analysis} (a) that the detection performance of the  TD-GRFT detector is worse than both  FD-GRFT  and DS-GRFT detectors. Specifically, the former has $1.5$ dB performance loss at $P_d=0.9$.  This result confirms the theoretical analysis of the TD-GRFT detector in Section III-A.
Moreover, compared to the proposed DS-GRFT detector, the proposed DS-KT-MFT detector has $0.7$ dB performance loss at the detection probability $P_d=0.9$. This is because of the sinc-like interpolation for the KT procedure.  %These results verify the effectiveness  of the proposed DS-KT-MFP.

To evaluate the computation reduction of the proposed algorithms in a comprehensive manner, we further show how the computational complexity  varies with  the ratio $\frac{f_c}{f_s}$  for different algorithms in Fig. \ref{Performance-analysis} (b).
The velocity scope is set to $-50 \sim 50$ $\text{m}/\text{s}$, and the acceleration scope to $-30 \sim 30$ $\text{m}/\text{s}^2$.
%\begin{figure}[h!]
%\centering
% \centerline{\includegraphics[width=6.5cm]{Computing_complexity_veries_with_fc_to_fs_radio.pdf}}
%\caption{Computational complexity versus $\frac{f_c}{f_s}$ for different considered detectors.}
%\label{Computing_complex_DS_vs_SS}
%\end{figure}
It can be seen from Fig. \ref{Performance-analysis} (b) that the DS-GRFT detector offers a reduction of computational cost by over 1-2 orders of magnitude compared to the  FD-GRFT or TD-GRFT detector; while  the reduction of computational cost of DS-KT-MFP can be up to 1 order of magnitude compared to KT-MFP. Notice that when  $\frac{f_c}{f_s}\geq 48$,  the computational cost of DS-GRFT  is lower than KT-MFP, while  when $\frac{f_c}{f_s}$ approaches $512$, the computational cost of  DS-GRFT is close to DS-KT-MFP.

Then Table \ref{Execution-Time-different-algorithm} reports the execution times of different considered detectors at $\frac{f_c}{f_s}=56$.  Experiments are carried out in MATLAB 2019b using  Intel Core i7-11700 with an 8-core 2.5-GHz CPU and 32 GB of RAM. It can be seen that DS-KT-MFP takes $41.1$s, while KT-MFP needs almost three times as long;  DS-GRFT takes $141.1$s, while FD-GRFT needs $24649$s which is almost 174 times of the former. Furthermore, although DS-KT-MFP yields  $0.7$dB detection performance loss compared to DS-GRFT,  the execution time of the former  accounts for less time.
Notice that the execution time reduction of the DS-based detector implemented in MATLAB is not exactly the same resulting from the theoretical analysis shown in Fig. 7, mainly due to additional efficient implementations of two-dimensional FFT  and matrix multiplication  which are built-in operations  in MATLAB.

 To summarize, the above results  demonstrate   the computational efficiency of the proposed  DS-based detectors  as well as comparable performance with respect to the  GRFT detector family.
\begin{table}[htb]
\renewcommand{\arraystretch}{2}
\caption{Execution times of different detectors in Experiment 2.}\label{execution-time}
\begin{center}
\centering
\footnotesize
\begin{tabular*}{0.48\textwidth}{@{\extracolsep{\fill}}c|ccccc}
\hline\hline
Alg.         &  TD-GRFT&  GRFT & DS-GRFT &KT-MFP&  DS-KT-MFP   \\
\hline
%\makecell{Execution \\Time [s]}   & 2530& 2530 & 506.8&493.6  & 108.6\\
%\makecell{Execution \\Time [s]}   & 2530& 2530 & 143.8&155.1  & 37.6\\
\makecell{Execution \\Time [s]}   & 8400& 24649 & 141.7&119.8  & 41.1\\
\hline
\end{tabular*}
\label{Execution-Time-different-algorithm}
\end{center}
\end{table}

\section{Conclusion}
This paper  has proposed a dual-scale decomposition of the target motion parameters,  which allows for reducing the standard GRFT family (including FD-GRFT and KT-MFP) into a GIFT process in range domain and GFT processes in Doppler domain conditioned on the coarse motion parameters. Thanks to this appealing property, the joint correction of RM and DFM effects is decoupled into a cascade procedure, consisting of RM compensation on the coarse search space followed by DFM compensation on the fine search spaces.  Compared to the standard GRFT family, the proposed dual-scale GRFT family can provide comparable performance but with significant improvement in computational efficiency. Simulation experiments demonstrate the performance gain of the proposed method.

\bibliographystyle{IEEEtran}
\bibliography{DS-GRFT}

\end{document}